\documentclass[11pt]{article}

\usepackage[margin=1in]{geometry}
\usepackage{amsmath,amssymb,amsthm,mathtools}
\usepackage{bm}
\usepackage{booktabs}
\usepackage{graphicx}
\usepackage{natbib}
\usepackage{hyperref}
\usepackage{enumitem}
\usepackage[titletoc,title]{appendix}
\usepackage{authblk}
\usepackage{tikz}
\usepackage{mathrsfs}

\hypersetup{colorlinks=true, linkcolor=blue, citecolor=blue, urlcolor=blue}

\newtheorem{proposition}{Proposition}
\newtheorem{lemma}[proposition]{Lemma}

\newtheorem{corollary}[proposition]{Corollary}

\newtheorem{auxlemma}{Lemma}[section]

\theoremstyle{definition}
\newtheorem{assumption}{Assumption}

\theoremstyle{remark}

\newcommand{\R}{\mathbb{R}}
\newcommand{\E}{\mathbb{E}}
\newcommand{\N}{\mathcal{N}}

\newcommand{\Uunins}{\mathcal{U}}
\newcommand{\Iins}{\mathcal{I}}
\newcommand{\indic}{\mathbf{1}}
\DeclareMathOperator*{\argmax}{argmax}

\DeclareMathOperator{\Var}{Var}
\DeclareMathOperator{\Cov}{Cov}

\title{Bayesian Rational Search Engine User\thanks{Replication code is available at \url{https://github.com/shichao-ma/Replication-Bayesian-Rational-Search-Engine-User}.}}
\author{Shichao Ma\thanks{E-mail: \texttt{shichao.ma@outlook.com}.}}
\date{This Version: September 2026}

\begin{document}

\maketitle

\begin{abstract}
A user faces a list returned by a search system, ordered by a noisy proxy for relevance, and decides whether to pay a fixed cost to inspect another item or stop with the best she has uncovered. She does not enter the page knowing how good its items are, so each inspection both produces a candidate item and refines her belief about the page's underlying quality. We show that the optimal policy is a standout rule: the user stops once her best find is sufficiently good relative to what she expects from the page, with the required margin changing with depth. The resulting dynamics collapse to a one-dimensional Markov chain, which yields the full distribution of inspection depth through a closed-form recursion. The model uncovers three hidden mechanisms that explain why users stop: trust, commit, and cut-losses. It also yields a rich set of testable implications. The same Bayesian-rational view turns inspection depth into a learning-to-rank likelihood: an observed depth restricts the latent relevance path to a polyhedron of survival inequalities, while a conversion further identifies the terminal winner. Simulations show that the resulting estimator recovers relevance under correct specification and improves ranking when restricted to the features available to the incumbent ranker. Applied to public Baidu search logs, the inspection depth likelihood produces rankings that align more closely with expert relevance judgments than click-trained alternatives.
\end{abstract}

\bigskip

\textbf{Keywords:} sequential search, Bayesian learning, optimal stopping, position bias, click models, implicit feedback, learning-to-rank.

\newpage

\section{Introduction}

Search engines, e-commerce platforms, and recommender systems all deliver their output as a ranked list. The system retrieves a pool of items and orders them by a learned score that is meant to track how relevant each item is to the user who made the request. That score, however, is unavoidably a proxy: the user's private notion of relevance is not observable to the system, and any signal the system can compute is informative but noisy. The top item is therefore not necessarily the most relevant, and whether to commit to it or keep scrolling is a real economic decision. Each additional look costs the user some time and attention, and what she will find is uncertain.

This paper studies that decision from first principles. A user faces a list whose order she takes to be informative but imperfect. She can inspect items one at a time at a fixed cost; each inspection reveals an item's true relevance. She stops whenever she likes and keeps the best item she has uncovered, or falls back to an outside option that is available to her off-list. Crucially, she does not enter the page knowing how good its items are. Instead, she holds a prior over the page's underlying relevance, and every inspection serves two purposes at once: it produces a candidate item, and it sharpens her view of the distribution from which the rest of the page is drawn.

The first channel raises the user's payoff directly. The second learning channel is what makes the problem richer than the classical sequential-search exercise. Even if an item is not as relevant as a previously inspected one, learning it still improves every continuation decision that follows. As the user goes deeper, her belief stabilizes, the items still ahead are expected to be worse on average than the ones she has already seen, and yet both channels have to be paid for by the same fixed cost. Optimal behavior must reconcile these forces.

Our first contribution is a clean characterization of that reconciliation. We show that, however the user reorganizes her search, she does no worse by linear traversal. This collapses what looks like a combinatorial problem onto a single stopping time. The optimal policy then takes a strikingly simple form, which we call a \emph{standout} rule. At every step the user keeps two summaries of her state in mind: the best item she has revealed so far, and her current best guess at the quality of the page. She continues as long as her best find does not yet stand out far enough above that quality guess, and stops as soon as the gap clears a depth-dependent threshold. 

The threshold itself has two interpretable pieces: an option-value term for the residual uncertainty about what the user might still find, and a rank-specific shift that absorbs how much a given rank is expected to outperform an average position. Both pieces shrink as the user moves down the page, which is why users stop somewhere along the list rather than only at the very end. A further result reduces the apparently two-dimensional dynamics to a one-dimensional Markov chain in the gap between the best find and the posterior quality, with a closed-form transition that recovers the full distribution of inspection depth through a simple recursion.

The economic content of the model lies in what the optimal policy says about the empirical signals that search and recommendation systems harvest from log data. We highlight three behavioral predictions and a separate methodological contribution for relevance learning.

First, one-click sessions, a dominant phenomenon in real-world search logs, are produced by qualitatively different mechanisms that the depth log alone cannot tell apart. In one mechanism, which we call the \emph{trust} regime, the user inspects the top item only to confirm what she already expected, and would have stopped at rank~$1$ for every realization of that item: the decision to stop was effectively made before the page rendered. In the other, the user is in principle prepared to look further, but the first item resolves the matter: either by being so good that further looking is wasteful (the \emph{commit} branch) or by being so disappointing that her outside option dominates the rest of the list in expectation (the \emph{cut-losses} branch). The same observable footprint admits a satisfied user, a disappointed user, and a user who never planned to engage with the rest of the page in the first place.

Second, the trust channel produces what we frame as a \emph{winner's curse} for ranking informativeness. A more informative ranker has a more decisive top slot, which widens the gap that triggers the trust regime, and it places objectively better items at the top, which makes more first-look outcomes good enough to commit on, reinforcing the same concentration. Such a ranker therefore concentrates inspections at rank~$1$ and starves the logger of the deeper observations needed to evaluate it, retrain it, or detect that some lower-ranked item should have been promoted. This is not biased logging or model misuse. It is an equilibrium consequence of rational stopping, and it tightens precisely as the ranking becomes more informative.

Third, the optimal continuation interval predicts a distinctive nonmonotone relationship between realized rank-$i$ relevance and the probability of continuing to rank $i+1$. With history held fixed, a mediocre item at rank $i$ keeps the user going more reliably than an excellent one does, because an excellent item raises her lead above the standout threshold and stops her. Standard cascade models with monotone satisfaction cannot match this shape: satisfaction with the inspected item only ever raises the probability of stopping, and the mediocre-driven continuation channel is absent by assumption. The two model classes are therefore distinguishable in any log that contains a proxy for realized rank-$i$ relevance, without any need to manipulate the ranker.

Taken together, these results clarify what common behavioral measures can and cannot reveal. Scroll depth, dwell time, and session length (the workhorse implicit-feedback signals in modern search and recommendation, \citet{fox2005evaluating, hassan2010beyond}) are not univariate signals of either user satisfaction or page quality. Generically, the same depth measurement is consistent with a satisfied user who has found what she wanted and a disappointed user who has confirmed the page is not for her. A downstream conversion event (a purchase, a long-dwell read, or a completed booking) supplies the auxiliary observation that identifies the terminal winner and separates an item choice from an outside-option choice. Platforms that natively observe conversions therefore face a much easier inference problem than those that rely on depth alone.

A separate methodological contribution turns inspection depth into a direct signal for learning relevance. Reaching depth $t$ reveals that the history of earlier items left the user willing to continue, so the observation restricts the entire latent relevance path rather than labeling items one at a time. The standout rule makes these joint restrictions explicit, while a conversion further reveals the terminal winner. Under a feature-based relevance model, we turn the probability of these restrictions into a differentiable learning-to-rank likelihood based directly on depth and conversion records.

We next evaluate what the new likelihood can learn from simulated and real search data. In simulation, the structural estimator recovers relevance and the parameters governing stopping under correct specification. When restricted to the features available to the incumbent ranker, it continues to outperform conventional learners regardless of whether terminal choice is observed. We then apply the inspection depth likelihood to public Baidu search logs \citep{zou2022large,hager2024unbiased}, where it produces rankings that align more closely with separate expert judgments than comparable click-trained models. Together, these results show that inspection depth can support relevance learning even when explicit relevance labels and reliable terminal choices are unavailable.

The remainder of the paper develops these results in order. The next two sections set up the environment, and derive the optimal policy and the standout rule. Section~\ref{sec:inspection_depth} reduces the dynamics to a one-dimensional lead chain, characterizes the full distribution of inspection depth, and discusses the winner's curse as a corollary of a conditional-continuation result. Section~\ref{sec:partial-id} constructs the survival polyhedron in latent-relevance space and the depth-and-conversion likelihood it generates. Section~\ref{sec:experiments} evaluates the resulting estimator in simulation and on Baidu search logs.

\subsection*{Related Literature}

This paper directly speaks to the position bias literature in information retrieval. It is well known that the probability of an item being clicked depends on the position in the result list. The cascade model of \citet{craswell2008experimental} and its descendants, the dynamic Bayesian network of \citet{chapelle2009dynamic} and the user browsing model of \citet{dupret2008userbrowsing}, surveyed in \citet{chuklin2022click}, posit that users examine ranks top-down, with examination at each rank a stochastic function of position and, in the cascade family, of past examinations. \citet{joachims2017unbiased} and the personal-search line of \citet{wang2016learning} and \citet{wang2018position} treat the rank-conditional examination probability as a propensity to be estimated and used in inverse-propensity weighting of logged interactions. The recent survey of \citet{chen2023bias} catalogs both threads and the broader landscape of selection, exposure, and presentation biases that motivate them.

This paper derives browsing behavior from Bayesian rationality rather than imposing it, and four points of contrast with the cascade family follow. First, linear traversal, the canonical cascade assumption, is a theorem here rather than a postulate. Second, the probability of reaching a given rank depends on the full realized history of what the user has uncovered: a user who got a great item at rank~2 inspects rank~5 with much lower probability than one who got a mediocre item, holding ranker, query, and corpus fixed. Third, user satisfaction is endogenized through the reservation rule rather than absorbed into a continuation parameter, so stopping is a solution to a Bellman equation rather than a tuned probability. Fourth, the probability of continuing past rank~$i$ is non-monotone in the realized relevance at rank~$i$ while the cascade family parameterizes satisfaction as increasing in attractiveness and therefore predicts stopping probability monotone in relevance.

Eye-tracking studies of SERP viewing, \citet{joachims2005accurately} and the broader behavioral evidence reviewed in \citet{chuklin2022click}, confirm that users scan top-down and that examination probability decays with rank, but they also document that continued inspection is contingent on what the user has just seen. \citet{joachims2005accurately} argued, on this basis, that absolute clicks are unreliable as relevance labels, and proposed interpreting clicks as \emph{relative} preferences. Our model sharpens this point in a particular direction: under the optimal policy, the very fact that rank $i$ is inspected is a signal that all previous items fail to clear the optimal reservation rule. Implicit feedback at rank $i$ is therefore better understood as a joint statement about all earlier items, not as an isolated relevance label.

The model also speaks to the feedback-loop concern in the recommender systems literature \citep{schnabel2016treatments, chaney2018algorithmic, chen2023bias}, where biased logging propagates to biased models that in turn shape future logs. Our winner's curse for ranking informativeness is a structurally different instance of the same concern: the data starvation is not driven by biased logging or model misuse, but is an equilibrium consequence of rational stopping, and tightens as the ranking becomes more informative.

Our results have direct implications for the long-running effort to design search success metrics from implicit signals \citep{fox2005evaluating, hassan2010beyond}. Scroll depth, dwell time, and session length have all been proposed as proxies for satisfaction. The model clarifies their interpretation: depth or duration alone is not a univariate signal of either user welfare or search quality. The same depth can describe a satisfied user, who commits after finding a good item, or a disappointed user, who falls back to the outside option after learning that the page is weak. More depth observations cannot distinguish these mechanisms by themselves. A conversion signal, such as a purchase, a long-dwell read, or another downstream completion event, supplies the missing branch information. Platforms that observe such an outcome directly therefore face a more informative inference problem than those that must construct a suitable conversion proxy.

The learning-to-rank likelihood developed in Section~\ref{sec:partial-id} follows from the same behavioral foundation. Beyond the maintained functional-form assumptions, the construction adds no heuristic model of examination or continuation: linear traversal, the reservation rule, and the restrictions that observed depth places on latent relevance all follow from the Bayesian-rational user's optimization problem. In this model, position bias is not a nuisance distortion imposed on otherwise fixed responses. It is an equilibrium effect of a ranked list: the system orders items, the user understands what rank conveys, and she chooses how far to inspect. A central approach in the unbiased learning-to-rank literature uses inverse-propensity weighting to correct for rank-dependent exposure \citep{wang2016learning, wang2018position}. That correction does not use the joint information in why a rational user reached a position at all. Our likelihood instead treats inspection depth as an outcome to learn from. This distinction speaks directly to \citet{hager2024unbiased}, who find that standard propensity-correction methods in unbiased learning to rank do not consistently improve rankings on Baidu search logs.

The standout rule is a reservation rule for a Pandora's box problem \citep{weitzman1979optimal} in which the boxes share a common unknown population mean and their rank-specific shifts come from ranking on a noisy signal rather than from independent box priors. The closest modern empirical study is \citet{ursu2018power}, who estimates a Weitzman-style sequential search model on Expedia data with experimentally varied rankings and finds that position primarily affects what users search rather than what they purchase conditional on search. 

\subsection*{Research Limitation}

We comment briefly on three abstractions.

The standard Gaussian-conjugate specification in Assumptions~\ref{assn:gaussian} and~\ref{assn:quantile} is adopted for tractability. The closed-form expressions we get from the conjugate prior help sharpen the discussions and pinpoint the mechanisms. The general qualitative architecture of our analysis, including linear traversal, the two-dimensional reservation rule, and the trust/commit decomposition, does not rely on the parametric form. 

We also treat the SERP as a single ranked list. Real webpages mix organic results with sponsored slots, carousels, and side panels, whose placement may reflect business considerations rather than predicted relevance. The abstraction is standard in the classical click-modeling literature \citep{craswell2008experimental, chapelle2009dynamic, dupret2008userbrowsing, chuklin2022click}, and accommodating multiple surfaces is orthogonal to the questions studied here.

Finally, we model inspection as a binary action (skip or fully reveal) whereas real users may have access to partial signals before deciding to click (e.g., reading a snippet). Layering a snippet stage on the present framework introduces a hierarchical signal structure, which significantly lengthens the paper. Given the density of this paper's current form, we decide to leave it to future work.

\section{Basic Setup}\label{sec:setup}

A risk-neutral user makes a request to a search engine. The system retrieves $N$ items in response. Each item is characterized by a latent relevance $x \in \R$.
The true relevance of each item, however, is unobserved to the system. For any retrieved item $i$, the system only has a noisy signal $z_i = x_i + e_i$, where $e_i$ is an idiosyncratic noise term independent of $x_i$. The system observes $z_1, \ldots, z_N$, and returns items as a ranked list in decreasing order of $z$. We relabel items so that $z_1 \ge z_2 \ge \cdots \ge z_N$, with ties broken arbitrarily.\footnote{Under the continuous distributions specified below, ties occur with probability zero and the inequalities are strict almost surely.} 

The user observes the ranked list but not the underlying signals $z$, but she understands that items are ordered by decreasing $z$.
Before inspecting the list, the user also does not observe the relevance of each individual item.
The user has an outside option and let $x_b \in \R$ denote the relevance of this outside option.

At each decision epoch $\tau = 0, 1, 2, \ldots, N$,\footnote{We adopt a $0$-indexed convention so that $\tau$ coincides with the number of inspections already performed.} the user may choose to inspect an arbitrary item from the set of uninspected items $\Uunins_{\tau}$, bearing an inspection cost $c > 0$. Inspecting item $i$ reveals its true relevance $x_i$ to the user. Alternatively, the user may choose to stop. If the user stops at epoch $\tau$, she takes the most relevant item she has ever inspected or her outside option. Let $\Iins_{\tau}$ denote the set of already-inspected items and define the best available relevance so far $M_\tau := \max\big(x_b, \max_{i \in \Iins_\tau} x_i\big)$, with $M_0 := x_b$. Her realized payoff is simply $U = M_\tau - \tau c$.

Let $\theta_x$ index the relevance distribution and $\theta_e$ index the noise distribution of the retrieved items. Conditional on $(\theta_x, \theta_e)$, relevances $\{x_i\}$ and noises $\{e_i\}$ are independent, with $x_i \mid \theta_x \sim F(\cdot \mid \theta_x)$ and $e_i \mid \theta_e \sim G(\cdot \mid \theta_e)$. In general, the user may hold priors on $\theta_x$ and $\theta_e$.

Let $H_\tau$ denote the user's information at epoch $\tau$: the ranked list, the priors, and the revealed relevances $\{(i, x_i) : i \in \Iins_\tau\}$ in the order they were inspected. A policy is a mapping from histories to $\{\mathrm{stop}\} \cup \Uunins_\tau$. The user maximizes $\E[U]$ over measurable policies.

Inspecting an item serves two roles. First, a relevance that exceeds the running maximum $M_\tau$ directly raises the user's payoff. Second, even an item whose relevance falls below $M_\tau$ sharpens the posterior on~$\theta_x$, improving the user's forecast of the remaining uninspected items. Continuation therefore has option value from both sources: the chance of finding a new best and the informational gain that refines future decisions.
At each decision epoch the user weighs this option value against the fixed cost~$c$. Because the posterior and the running maximum both evolve with each inspection, the marginal benefit of the next inspection depends on the entire history~$H_\tau$ dynamically. The user needs to find the policy that optimally balances the declining marginal benefit of additional inspections against her constant cost.

\paragraph{A note on interpreting inspection.} The relevance-revealing event we call inspection does not have a single empirical analogue. One reading, implicit in the cascade family and its descendants \citep{craswell2008experimental, chuklin2022click}, treats viewing a result on the SERP as the event, with a click read as a downstream signal of commitment or conversion. A second reading treats the click itself as part of inspection: a shopper who clicks through ten product pages to check whether any of them fits her need is sampling, not yet buying, and only the eventual checkout counts as a conversion. The theoretical results do not depend on choosing between these readings; each empirical application must make the mapping explicit. In the analytical model, an inspection is whichever action resolves the latent relevance $x_i$ of an item, and a conversion is whichever downstream event the user produces once she fixes on the item she will keep. A reader who prefers the first reading can identify $\Iins_\tau$ with the set of viewed results and, when the application supports it, read clicks as conversion events; a reader who prefers the second can identify $\Iins_\tau$ with the set of viewed and clicked items and read purchases or other completions as conversions. The reservation rule, the depth distribution, and the partial-identification results that follow go through under either reading; only the empirical mapping changes.

\subsection*{A Tractable Parametric Specification}

To obtain a model that is both transparent and amenable to closed-form analysis, we impose the following simplifying assumptions.

\begin{assumption}[Gaussian Specification]\label{assn:gaussian}
Conditional on $\mu$, the sequences $\{x_i\}$ and $\{e_i\}$ are each iid Gaussian and mutually independent, with $x_i \mid \mu \sim \N(\mu, \sigma_x^2)$, $e_i \sim \N(0, \sigma_e^2)$,
and $\sigma_x^2$, $\sigma_e^2$ fixed and known. The unknown parameter is the relevance mean, endowed with the conjugate prior $\mu \sim \N(m_0, v_0)$, where $m_0 \in \R$ is the \emph{prior mean} and $v_0 > 0$ the \emph{prior variance} of the relevance mean $\mu$ shared by the retrieved items.
\end{assumption}

Throughout, let $\sigma_z^2 := \sigma_x^2 + \sigma_e^2$ and define the \emph{reliability ratio} $\rho := {\sigma_x^2}/{\sigma_z^2} \in (0, 1)$. The reliability ratio governs how informative the ranking is about true relevance. At $\rho \uparrow 1$, the signal is noise-free and rank coincides with true relevance order; at $\rho \downarrow 0$, the ranking is uninformative.

To avoid carrying the full joint law of order statistics as a latent state, we adopt the standard Thurstonian approximation in which the position of an item in the ranked list is mapped to a fixed quantile of the signal distribution.

\begin{assumption}[Rank-to-Quantile Reduction]\label{assn:quantile}
The rank $i$ of an item is identified with the $i/(N+1)$ quantile from the top of the signal distribution:
$z_i = \mu + \sigma_z q_i$, where $ q_i := \Phi^{-1}\left(1 - {i}/{(N+1)}\right)$ and $\Phi$ denotes the standard Normal CDF.
\end{assumption}

The quantile sequence $\{q_i\}$ is strictly decreasing in $i$ and known. The approximation error vanishes as $N \to \infty$ and is small for moderate $N$. The sequence summarizes the ranking as a deterministic rank-specific shift. A derivation via the probability integral transform is provided in Appendix~\ref{app:quantile}.

\subsection*{Interior Inspection}\label{sec:noinspect}

Two summaries of the primitives will recur throughout the analysis: the \emph{rank-specific shift} $\alpha_i := \rho\sigma_z q_i = \sigma_x^2 q_i / \sigma_z$, which is the expected boost in relevance from being placed at rank $i$, and the \emph{residual variance} $\sigma_\eta^2 := \sigma_x^2(1-\rho) = {\sigma_x^2 \sigma_e^2}/{\sigma_z^2}$, the irreducible idiosyncratic noise that remains after extracting the rank-specific shift. Both quantities will reappear formally in Lemma~\ref{lem:predictive}; we introduce them here only to state the next assumption compactly.

To rule out the uninteresting corner case in which the user does not bother to inspect any item and walks away with the outside option, we impose a condition guaranteeing that the first inspection is worth it.

\begin{assumption}[Interior Solution]\label{assn:interior}
The fixed inspection cost satisfies
\begin{equation}\label{eq:noinspect}
c < \sqrt{v_0 + \sigma_\eta^2}\phi(d_0) - (x_b - m_0 - \alpha_1)\Phi(-d_0),
\end{equation}
where $d_0 := (x_b - m_0 - \alpha_1)/\sqrt{v_0 + \sigma_\eta^2}$ and $\phi$ denotes the standard Normal density function. Moreover, for any fixed $c$, \eqref{eq:noinspect} holds for all sufficiently large $v_0$.
\end{assumption}

The right-hand side of \eqref{eq:noinspect} is the expected gain from inspecting the top-ranked item once, accounting for the user's option to fall back to $x_b$ if the realized $x_1$ disappoints. Assumption~\ref{assn:interior} says this one-shot gain strictly covers the inspection cost $c$. Lemma~\ref{lem:tau-positive} in Appendix~\ref{app:auxiliary} shows that, under this assumption, the user always inspects at least one item. The complementary case admits the uninteresting corner case, so we exclude it for the remainder of the paper.

\section{Bellman Equation and Optimal Stopping Rule}

Combining the Gaussian conditional $x_i \mid \mu, z_i \sim \N(\rho z_i + (1-\rho)\mu, \sigma_x^2(1-\rho))$ with Assumption~\ref{assn:quantile} yields the following lemma.

\begin{lemma}[Rank Specific Predictive]\label{lem:predictive}
Under Assumptions \ref{assn:gaussian} and \ref{assn:quantile}, $x_i \mid \mu \sim \N(\mu + \alpha_i, \sigma_\eta^2)$ independent across $i$.
\end{lemma}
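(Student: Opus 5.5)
The plan is to first derive the Gaussian conditional law of $x_i$ given $(\mu, z_i)$, then substitute the deterministic quantile value of $z_i$ from Assumption~\ref{assn:quantile}.

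\textbf{Step 1: the conditional law.} Under Assumption~\ref{assn:gaussian}, conditional on $\mu$ the pair $(x_i, z_i)$ with $z_i = x_i + e_i$ is bivariate Gaussian. Its moments are
\[
\E[x_i\mid\mu]=\E[z_i\mid\mu]=\mu, \qquad \Var(x_i\mid\mu)=\sigma_x^2, \qquad \Var(z_i\mid\mu)=\sigma_z^2, \qquad \Cov(x_i,z_i\mid\mu)=\sigma_x^2 .
\]
Standard Gaussian conditioning then gives
\[
x_i\mid\mu,z_i \sim \N\bigl(\mu+\rho(z_i-\mu),\ \sigma_x^2-\sigma_x^4/\sigma_z^2\bigr)=\N\bigl(\rho z_i+(1-\rho)\mu,\ \sigma_x^2(1-\rho)\bigr).
\]
This is exactly the conditional quoted before the lemma, and its variance is $\sigma_\eta^2$.

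\textbf{Step 2: substitute the quantile.} By Assumption~\ref{assn:quantile}, $z_i=\mu+\sigma_z q_i$. Plugging this into the conditional mean gives
\[
\rho\mu+\rho\sigma_z q_i+(1-\rho)\mu=\mu+\alpha_i ,
\]
with $\alpha_i=\rho\sigma_z q_i$. The variance is unaffected because it does not depend on $z_i$. Conditioning on rank is therefore conditioning on $z_i=\mu+\sigma_z q_i$, and we obtain $x_i\mid\mu\sim\N(\mu+\alpha_i,\sigma_\eta^2)$.

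\textbf{Step 3: independence across $i$.} Conditional on $\mu$, the pairs $(x_i,e_i)$ are independent across items, so the pairs $(x_i,z_i)$ are independent across items. The conditional law of $x_i$ given $(\mu,z_1,\dots,z_N)$ therefore factorizes into the product of the laws of $x_i$ given $(\mu,z_i)$. Once all the $z_j$ are pinned to their quantile values, the $x_i$ are conditionally independent with the marginals from Step 2.

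\textbf{Main obstacle: interpreting Assumption~\ref{assn:quantile}.} Taken literally, ranking is a selection on order statistics, which destroys independence because the $z_j$ are jointly constrained. The reduction replaces each order statistic with a deterministic quantile, so conditioning is on fixed values rather than on the ordering event. I would state this explicitly: under the reduction, the lemma is an exact consequence of conditioning on $\{z_j=\mu+\sigma_z q_j\}_j$.

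One further point needs care. Here $z_i$ depends on $\mu$, yet we still condition on $\mu$. This is consistent because the law is stated conditional on $\mu$, and the user, who does not observe $z$, integrates over the prior on $\mu$ later.
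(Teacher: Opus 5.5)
Your proposal is correct and matches the paper's proof step for step: Gaussian conditioning of $x_i$ on $(z_i,\mu)$, substitution of the quantile value $z_i=\mu+\sigma_z q_i$ (which leaves the variance $\sigma_x^2(1-\rho)$ unchanged), and independence from the slot-wise factorization of the conditionally independent pairs. Your explicit remark that the reduction conditions on fixed quantile values rather than on the ordering event makes visible the same step the paper takes implicitly when it says each constraint involves only its own slot's pair.
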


Formal proofs appear in Appendix~\ref{app:auxiliary}. The lemma splits inspection into two parts that the user can reason about separately: a \emph{known} adjustment that depends only on the rank and a \emph{random} component whose distribution is the same everywhere on the list. Each position comes pre-labeled with the rank-specific shift $\alpha_i$; once it is subtracted off, every position delivers an iid draw $y_i := x_i - \alpha_i \sim \N(\mu, \sigma_\eta^2)$. The rank-specific shift $\alpha_i$ is a feature of the \emph{position}, not of the item that landed in it. It is fixed as soon as the list is rendered. It is positive at the top and negative at the bottom, summing to zero across the list. The formula $\alpha_i = (\sigma_x^2/\sigma_z)q_i$ has a simple reading: $q_i$ is the quantile of rank $i$ on the latent score, and $\sigma_x^2/\sigma_z$ is the slope of the best linear predictor of $x$ from $z$. Multiplying them gives the expected change in $x$ per rank slot. Since $\rho \in (0,1)$, $\alpha_i$ is a shrunken version of the raw quantile $\sigma_z q_i$. The shrinkage factor $\rho$ is the signal-to-noise ratio of the ranker. The scaling is the Bayesian update of the prior mean given the one piece of information ``this item is in position $i$.''

This setup makes the model analytically clean: all the subtlety of ``how much should I trust the ranking'' is absorbed into a set of constants that can be computed up front, leaving inspection to do one simple job, Bayesian updating on $\mu$.

\subsection*{Linear Traversal}\label{sec:rankorder}

On its face the user's decision problem is combinatorial: at each step she must choose from up to $N$ options to inspect. The next result shows this choice is in fact binary: she only decides between inspecting the highest-ranked uninspected item or stopping. 

\begin{lemma}[Rank-Order Dominance]\label{lem:rankorder}
Under Assumptions \ref{assn:gaussian} and \ref{assn:quantile}, for every policy $\pi$ there exists a rank-order policy $\pi'$ (i.e., one that, whenever it inspects, inspects the highest-ranked uninspected item) such that $\E[U(\pi')] \ge \E[U(\pi)]$.
\end{lemma}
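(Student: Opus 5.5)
We prove Lemma~\ref{lem:rankorder} with a pathwise coupling that relabels which item is inspected. By Lemma~\ref{lem:predictive}, conditional on $\mu$ the draws $x_i = \mu + \alpha_i + \sigma_\eta \varepsilon_i$ have iid $\varepsilon_i \sim \N(0,1)$, and the shifts satisfy $\alpha_1 > \alpha_2 > \cdots > \alpha_N$ because $\{q_i\}$ is strictly decreasing. The key observation is that the information content of an inspection does not depend on which rank is opened. Inspecting rank $i$ reveals $y_i = x_i - \alpha_i \sim \N(\mu,\sigma_\eta^2)$, and every rank yields a signal of the same precision about $\mu$. So after $k$ inspections the posterior on $\mu$ depends only on the multiset of $y$'s observed, whichever positions produced them. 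The only thing the choice of position affects is the level of the revealed relevance: rank $i$ adds $\alpha_i$ on top of the common draw.

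Given an arbitrary policy $\pi$, we build $\pi'$ as follows. Whenever $\pi$ inspects its $k$-th item, $\pi'$ inspects rank $k$ instead. We couple the $k$-th revealed standardized draw of $\pi'$ with the $k$-th revealed draw of $\pi$: set $y'_k := y_{\pi(k)}$, which is legitimate because both are iid $\N(\mu,\sigma_\eta^2)$ given $\mu$. Then $\pi'$ simulates $\pi$. It feeds $\pi$ the fictitious history in which item $\pi(k)$ returned relevance $y'_k + \alpha_{\pi(k)}$, and it stops exactly when $\pi$ stops. This makes $\pi'$ a measurable function of its own history, since $\alpha$ is known and $y'_k = x'_k - \alpha_k$ is observed. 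Both policies therefore stop at the same $\tau$ and pay the same cost $\tau c$. Because the pair (true $\mu$, standardized draw sequence) has the same joint law under both constructions, the distribution of $\pi$'s behaviour is exactly reproduced.

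It remains to compare terminal payoffs pathwise. Policy $\pi$ inspects a set $S$ of $\tau$ distinct ranks and receives $\max(x_b, \max_k (y'_k + \alpha_{\pi(k)}))$. Policy $\pi'$ inspects ranks $\{1,\dots,\tau\}$ and receives $\max(x_b, \max_k (y'_k + \alpha_k))$. These are matched by inspection order, so we need $\alpha_k \ge \alpha_{\pi(k)}$ for every $k$. That fails in general: if $\pi$ opens rank 2 first and then rank 1, the $k=1$ comparison is $\alpha_1$ versus $\alpha_2$, which is fine, but the $k=2$ comparison is $\alpha_2$ versus $\alpha_1$, which is not.

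The fix is to couple by a permutation of the $y$'s rather than by order. Since $\pi'$ reproduces only $\pi$'s decisions, what matters is that each $\varepsilon$ is paired with an $\alpha$ at least as large. The sorted shifts $\{\alpha_1,\dots,\alpha_\tau\}$ dominate the sorted shifts $\{\alpha_{\pi(k)}\}$ elementwise, because the top $\tau$ ranks have the $\tau$ largest shifts. Given that, a rearrangement argument gives $\max_k(y'_k+\alpha_{\sigma(k)}) \ge \max_k(y'_k+\alpha_{\pi(k)})$ for a suitable assignment $\sigma$ onto $\{1,\dots,\tau\}$. One such assignment sends the rank in $S$ with the $j$-th largest shift to rank $j$; then each item's $\alpha$ weakly increases.

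This assignment creates the main obstacle, which is adaptivity. The final realized set $S$ is random and history-dependent, but $\pi'$ must commit at step $k$ to opening rank $k$, which is only possible if the relabeling is determined online. We resolve this by inducting on the number of inspections. We first show, via a one-step interchange argument on the Bellman value $V(\text{posterior},M,\text{uninspected set})$, that replacing the uninspected set $\Uunins$ by a set whose shifts are elementwise larger, with the same cardinality, weakly raises $V$. The coupling above, applied one step at a time, gives this monotonicity: larger shifts produce pointwise larger revealed $x$, hence a larger $M$, and $V$ is nondecreasing in $M$ with an identical posterior. Among sets of a given size, $\{\text{top ranks}\}$ is maximal, so an optimal policy can always open the highest remaining rank. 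Taking $\pi'$ to be optimal among rank-order policies then gives $\E[U(\pi')] \ge \E[U(\pi)]$.
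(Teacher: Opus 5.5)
\emph{Verdict: there is a genuine gap in the final paragraph, so the proposal does not reach the rank-order conclusion.}

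Your first three paragraphs set things up correctly. The simulation coupling is a legitimate policy. You are also right that the order-matched coupling fails pathwise, and that the permutation coupling would need the realized inspected set in advance. The problem is that the last paragraph does not overcome that obstacle.

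The set-monotonicity claim is true. At a fixed $(M,m)$, replacing the uninspected set by one of the same size with elementwise larger sorted shifts weakly raises the value, because a fixed bijection between the two sets can be chosen in advance. But this does not imply that the top remaining rank should be opened next.

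Compare opening $i$, the top of $S$, with opening $j$, where $\alpha_j<\alpha_i$. For the same standardized draw $Y$, both give the same posterior mean.
\begin{itemize}
\item Opening $i$ gives the weakly larger best $\max(M,\alpha_i+Y)$, but leaves $S\setminus\{i\}$.
\item Opening $j$ leaves $S\setminus\{j\}$, whose sorted shifts elementwise dominate those of $S\setminus\{i\}$.
\end{itemize}
So your monotonicity lemma favors $j$ for the continuation, while monotonicity in $M$ favors $i$ now. The two effects pull in opposite directions, and nothing in the proposal trades them off. The remark that the top ranks form the maximal set of a given size concerns which items are inspected in total, not their order. The number of inspections is itself random and order-dependent, which is exactly the adaptivity problem you had flagged.

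The missing idea is a pairwise interchange that keeps the option to stop between the two inspections. Set $R=S\setminus\{i,j\}$. Compare ``inspect $i$, then stop or inspect $j$, then continue optimally on $R$'' with the reverse order. Write $A_i(y)=\max(M,\alpha_i+y)$, and $B_{ij}(y)$ for the value of going on to the second inspection after first signal $y$. The argument has three parts.
\begin{itemize}
\item \emph{Commit-to-both values coincide.} If both items are inspected, the resulting state equals the reverse-order state up to swapping the two signals. By exchangeability of the signals and symmetry of the two-step posterior mean, $\E[B_{ij}(Y_1)]=\E[B_{ji}(Y_1)]$.
\item \emph{Stopping is relatively more attractive when $i$ comes first.} After the first signal, the running maxima satisfy $H_{ij}(y,z)\le H_{ji}(y,z)+A_i(y)-A_j(y)$. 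Combined with $1$-Lipschitz continuity of the value in $M$, this gives $A_i(y)-B_{ij}(y)\ge A_j(y)-B_{ji}(y)$. Mere monotonicity in $M$, which is all you use, is not enough here.
\item \emph{Conclusion.} Write each strategy's value as $-c+\E[B(Y_1)]+\E[(A(Y_1)-B(Y_1))^+]$; this yields $P_{ij}\ge P_{ji}$. Induction on $|S|$ then finishes the proof: after opening $j$, an optimal continuation either stops or opens $i$. Hence $Q_i\ge P_{ij}\ge P_{ji}=Q_j$.
\end{itemize}
This is how the paper proves the lemma. Without the Lipschitz bound and the exchangeability of the commit-to-both values, your argument cannot conclude.
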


The reason behind this result is simple. Higher ranked slots offer better expected relevance, while each inspection has the same cost and supplies the same statistical information about page quality. Thus, inspecting the highest-ranked uninspected item dominates inspecting any other item. Consequently, the user's remaining choice is when to stop. An optimal rank-order policy is characterized by a history-dependent stopping time $T \in \{0, 1, \ldots, N\}$, where $T = 0$ means she stops immediately with the outside option.

\subsection*{Posterior Dynamics}

Let the user have inspected ranks $1, \ldots, t$ by epoch $t$. Throughout what follows, we denote by $m_t$ and $v_t$ the \emph{posterior mean} and \emph{posterior variance} of $\mu$ after $t$ inspections. Slightly abusing the notation, we let $H_t := \{(i, x_i) : i = 1, \ldots, t\}$ denote her information at that step, which includes the inspected ranks and their realized relevances. Combining with the Gaussian prior $\mu \sim \N(m_0, v_0)$ gives the standard conjugate posterior, as follows. 

\begin{lemma}[Posterior and Predictive Update]\label{lem:posterior}
Under Assumptions \ref{assn:gaussian} and \ref{assn:quantile}, after inspecting ranks $1, \ldots, t$, the posterior of $\mu$ is $\mu \mid H_t \sim \N(m_t, v_t)$, with $v_t^{-1} = v_0^{-1} + t/\sigma_\eta^2$ and $m_t = v_t({m_0}/{v_0} + \sum_{i=1}^{t}(x_i - \alpha_i) / {\sigma_\eta^2})$. For $t \le N - 1$, the posterior predictive for the relevance of rank $t+1$ prior to its inspection is $x_{t+1} \mid H_t \sim \N(m_t + \alpha_{t+1}, v_t + \sigma_\eta^2)$.
\end{lemma}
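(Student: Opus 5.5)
The plan is to reduce the statement to the textbook Normal--Normal conjugate update by means of Lemma~\ref{lem:predictive}. That lemma says that, conditional on $\mu$, the inspected relevances satisfy $x_i \mid \mu \sim \N(\mu + \alpha_i, \sigma_\eta^2)$ independently across ranks. The shifts $\alpha_i$ are known constants, fixed once the list is rendered, so the transformed observations $y_i := x_i - \alpha_i$ are conditionally iid $\N(\mu, \sigma_\eta^2)$. The map from $H_t$ to $(y_1,\dots,y_t)$ is a known bijection, so conditioning on $H_t$ is the same as conditioning on $y_1,\dots,y_t$. Under the rank-order policy of Lemma~\ref{lem:rankorder}, the set of inspected ranks is deterministic, namely $\{1,\dots,t\}$. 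Hence the stopping decision carries no information about $\mu$ beyond the $y_i$ themselves. This is the one point I would state explicitly, since the inspection order is itself policy-dependent.

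For the posterior, I will write the joint density of $\mu$ and $y_{1:t}$ as the prior $\N(m_0,v_0)$ times the product of $t$ Gaussian likelihoods. I then collect the terms quadratic and linear in $\mu$ inside the exponent. The coefficient of $-\mu^2/2$ is $v_0^{-1} + t/\sigma_\eta^2$, which gives $v_t^{-1}$. The coefficient of $\mu$ is $m_0/v_0 + \sum_{i\le t} y_i/\sigma_\eta^2$. Completing the square therefore yields a Gaussian posterior with the stated mean $m_t$.

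For the predictive at rank $t+1 \le N$, I will again use Lemma~\ref{lem:predictive}: conditional on $\mu$, $x_{t+1}$ is independent of $H_t$ and distributed $\N(\mu + \alpha_{t+1}, \sigma_\eta^2)$. So I can write $x_{t+1} = \mu + \alpha_{t+1} + \eta_{t+1}$, where $\eta_{t+1} \sim \N(0,\sigma_\eta^2)$ is independent of $(\mu, H_t)$. Given $H_t$, $\mu \sim \N(m_t, v_t)$. The sum of two independent Gaussians is Gaussian, so the mean is $m_t + \alpha_{t+1}$ and the variance is $v_t + \sigma_\eta^2$, which is the claim.

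No step is a real obstacle, because this is standard conjugacy. The only care needed is the justification above that the adaptively chosen inspection order does not bias the update. This follows because the set of inspected ranks is determined by $t$ alone, and because the conditional independence in Lemma~\ref{lem:predictive} holds for the rank-to-quantile-reduced model.
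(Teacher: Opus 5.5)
Your proposal is correct and follows the paper's proof: map to $y_i = x_i - \alpha_i$ via Lemma~\ref{lem:predictive}, complete the square against the $\N(m_0, v_0)$ prior, and obtain the predictive as the sum of the independent pieces $\mu \mid H_t$ and $\eta_{t+1}$. Your extra remark on adaptivity, which the paper does not raise, is harmless, but the precise reason is that each continuation decision is a function of the already-revealed history, and hence of $H_t$, not merely that the inspected set is $\{1,\dots,t\}$.
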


Intuitively, the user strips the known rank-specific shift $\alpha_i$ out of each revealed relevance to recover an iid signal of the underlying mean $\mu$, then averages these signals with her prior belief to update her belief.
The posterior update admits a standard interpretation. Precisions (reciprocals of variances) are additive: each inspection contributes one unit of information worth $1/\sigma_\eta^2$ to the precision, on top of the prior precision $1/v_0$. The posterior mean is a precision-weighted average of the prior mean $m_0$ and the observations $x_i - \alpha_i$ obtained after subtracting the rank-specific shifts.

The predictive then follows by adding the rank-specific shift and the irreducible item-level noise. Before inspecting rank $t+1$, the user's best guess of its relevance is the sum of two components. The first, $m_t$, is her current posterior mean about the page relevance $\mu$; the second, $\alpha_{t+1}$, is the known rank-specific shift that captures how much higher (or lower) a rank-$(t+1)$ item's relevance should be relative to an average item, based purely on its position in the list. As the user inspects more items, $m_t$ adjusts toward the true $\mu$ while $\alpha_{t+1}$ remains fixed as a pre-computed feature of rank $t+1$. Her uncertainty about $x_{t+1}$ has two sources as well: $v_t$, which measures what she does not yet know about $\mu$ and shrinks as she inspects more items, and $\sigma_\eta^2$, the residual idiosyncratic noise that cannot be learned away.

\subsection*{Bellman Equation}

Under linear traversal, the sufficient state at epoch $t$ is the pair $(M_t, m_t)$, where $M_t$ is the best relevance revealed so far (including $x_b$). Let $V_t(M_t, m_t)$ denote the value function at epoch $t \in \{0, 1, \ldots, N\}$. It satisfies
\begin{equation}\label{eq:bellman}
V_t(M_t, m_t) = \max\left\{M_t,\E_{x_{t+1}}\left[V_{t+1}\big(\max(M_t, x_{t+1}),m_{t+1}(x_{t+1})\big)\right]-c\right\},
\end{equation}
where the expectation is taken under $x_{t+1} \sim \N(m_t + \alpha_{t+1}, v_t + \sigma_\eta^2)$, and the belief update given $x_{t+1}$ is the affine map
\begin{equation}\label{eq:update}
m_{t+1}(x_{t+1}) = \frac{v_{t+1}}{v_t}m_t + \frac{v_{t+1}}{\sigma_\eta^2}(x_{t+1} - \alpha_{t+1}), \qquad v_{t+1}^{-1} = v_t^{-1} + \sigma_\eta^{-2}.
\end{equation}
The terminal condition is $V_N(M_N, m_N) = M_N$.

The Bellman equation \eqref{eq:bellman} makes the tradeoff the user faces at each epoch explicit. The left branch inside the max is the \emph{stop} option: she walks away with the best relevance she has revealed to date, $M_t$. The right branch is the \emph{inspect} option: she pays the deterministic cost $c$, reveals $x_{t+1}$, and transitions to a new state in which the current best becomes $\max(M_t, x_{t+1})$ and her posterior mean becomes $m_{t+1}(x_{t+1})$. The expected payoff from the right branch has two components of value beyond $M_t$: the direct \emph{upside option}, $\E[(x_{t+1} - M_t)^+]$, which captures the possibility that the inspected item beats the current best; and the \emph{learning option}, which captures how observing $x_{t+1}$ refines her belief about $\mu$ and thereby improves future continuation decisions.

\subsection*{Optimal Stopping Rule}

The Bellman equation \eqref{eq:bellman} induces a stopping rule of remarkably simple form: at each epoch $t$, the user stops as soon as the difference between her best find and her current posterior mean, $M_t - m_t$, exceeds a deterministic, depth-dependent threshold. The decision depends on the running maximum $M_t$ and the posterior mean $m_t$ only through their difference, and on the user's outside option $x_b$ only insofar as it enters $M_t$.

\begin{proposition}[Standout Stopping Rule]\label{prop:reservation}
For each $t \in \{0, 1, \ldots, N-1\}$, there exists a deterministic constant $\kappa_t^\star \in \R$, depending on $t$, $v_0$, $\sigma_\eta^2$, $c$, $\{\alpha_i\}_{i \ge t+1}$, and $N$ but not on $m_t$ or on the outside option $x_b$, such that the optimal policy stops at epoch $t$ if and only if
\begin{equation}\label{eq:R-affine-opt}
M_t - m_t  \ge \alpha_{t+1} + \kappa_t^\star.
\end{equation}
\end{proposition}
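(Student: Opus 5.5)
Two structural facts about the value function reduce everything to a scalar threshold: translation equivariance in $(M,m)$, and monotonicity in the gap $M-m$.

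\textbf{Translation equivariance.} We show by backward induction on $t$, starting from $V_N(M,m)=M$, that
\[
V_t(M+s,m+s)=V_t(M,m)+s \qquad \text{for all } s\in\R .
\]
Suppose this holds at $t+1$.

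\begin{itemize}
\item Shifting $m_t$ by $s$ shifts the predictive law $\N(m_t+\alpha_{t+1},v_t+\sigma_\eta^2)$ by $s$, so we may write $x_{t+1}=x'+s$ with $x'$ drawn from the unshifted law.
\item The update \eqref{eq:update} is affine with coefficients $v_{t+1}/v_t+v_{t+1}/\sigma_\eta^2=1$, so $m_{t+1}(x'+s)=m_{t+1}(x')+s$.
\item Also $\max(M+s,x'+s)=\max(M,x')+s$.
\end{itemize}

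The continuation term therefore shifts by exactly $s$, and so does the stop term $M$. Setting $s=-m$ gives $V_t(M,m)=m+W_t(M-m)$, where $W_t(g):=V_t(g,0)$. The variances $v_t$ are deterministic, so $W_t$ depends only on $t$, $v_0$, $\sigma_\eta^2$, $c$, $\{\alpha_i\}_{i\ge t+1}$ and $N$. It does not depend on $m_t$ or $x_b$, because $x_b$ enters only through $M_t$.

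\textbf{Reduction to a one-dimensional comparison.} Define the continuation gain over stopping,
\[
D_t(g):=\E\big[W_{t+1}(\cdot)\ \text{evaluated at the next gap}\big]+\E[m_{t+1}-m_t]-c-g .
\]
The term $\E[m_{t+1}-m_t]$ vanishes by the martingale property. Writing $x_{t+1}=\alpha_{t+1}+\xi$ with $\xi\sim\N(0,v_t+\sigma_\eta^2)$, the next gap is
\[
\max(g,\alpha_{t+1}+\xi)-\lambda_{t+1}\xi,\qquad \lambda_{t+1}:=v_{t+1}/\sigma_\eta^2\in(0,1).
\]
Stopping is optimal iff $D_t(g)\le 0$. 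If we show $D_t$ is nonincreasing and crosses zero, then stopping holds iff $g\ge$ some constant $\bar g_t$, and we put $\kappa_t^\star:=\bar g_t-\alpha_{t+1}$. That constant is admissible because it depends only on the listed primitives.

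\textbf{Main obstacle: monotonicity.} We need $g\mapsto \E W_{t+1}(\max(g,\alpha_{t+1}+\xi)-\lambda\xi)-g$ to be nonincreasing. The plan is to prove jointly by induction that $W_t(g)-g$ is nonincreasing and $W_t$ is nondecreasing, i.e.\ $W_t$ is $1$-Lipschitz and nondecreasing. This holds at $t=N$, where $W_N(g)=g$.

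For the inductive step, fix $\xi$. The map $g\mapsto\max(g,\alpha+\xi)-\lambda\xi$ is nondecreasing with slope in $\{0,1\}$. Composing with a nondecreasing $1$-Lipschitz $W_{t+1}$ gives a nondecreasing function that grows at most one-for-one in $g$. After taking expectations and subtracting $c+g$, the result is nonincreasing in $g$. Finally, $W_t=\max\{g,\ \text{continuation}\}$, and a maximum of nondecreasing $1$-Lipschitz functions keeps both properties, so the induction closes.

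\textbf{Crossing of zero.} As $g\to\infty$, the max equals $g$ with probability approaching one. Using $W_{t+1}(g')\le g'+\text{const}$ (bounded option value), we get $D_t(g)\to -c<0$.

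As $g\to-\infty$, $D_t(g)\to+\infty$, since the continuation value is bounded below by roughly $\E[x_{t+1}]-c$. Hence $\bar g_t:=\inf\{g:D_t(g)\le0\}$ is finite.

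A weakly monotone $D_t$ could be flat at zero on an interval. On that set the user is indifferent, and we adopt the convention of stopping, which yields the ``$\ge$'' in \eqref{eq:R-affine-opt}.

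The delicate points are bounding $W_{t+1}(g)-g$ uniformly, which needs finite horizon and Gaussian tails, and verifying these limits rigorously.
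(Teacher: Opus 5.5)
Your proof is correct in outline and uses the same ingredients as the paper's: translation equivariance (Step~S8), the nondecreasing $1$-Lipschitz dependence on the best find (Step~S3), and the boundary limits (Step~S6). You apply them in the reverse order. The paper first finds a root $R_t(m)$ of the stop--continue gap for each fixed $m$, and only then uses equivariance to get $R_t(m)=m+\alpha_{t+1}+\kappa_t^\star$; you reduce to the lead first. Your weak monotonicity plus the stop-when-indifferent convention is an adequate substitute for the paper's strict monotonicity (Step~S4). This works because $D_t$ is continuous (a $1$-Lipschitz function minus the identity), so $\{g: D_t(g)\le 0\}$ is a closed interval $[\bar g_t,\infty)$.

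The one step that does not go through as written is the limit $g\to+\infty$. A bound $W_{t+1}(g')\le g'+\mathrm{const}$ only gives $\limsup_{g\to\infty}D_t(g)\le -c+\mathrm{const}$, which need not be negative. What you need is that the option value vanishes: $W_{t+1}(g')-g'\to 0$ as $g'\to\infty$.

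This is easy to add to your induction. The quantity $W_t(g)-g=\max\{0,D_t(g)\}$ is nonnegative and nonincreasing. The next gap $L'=\max(g,\alpha_{t+1}+\xi)-\lambda_{t+1}\xi$ satisfies $L'\ge(1-\lambda_{t+1})g+\lambda_{t+1}\alpha_{t+1}$. Also $\E[L']-g=\E[(\alpha_{t+1}+\xi-g)^+]\to 0$. Hence vanishing at $t+1$ gives $D_t(g)\to -c$, which in turn gives vanishing at $t$, starting from $W_N(g)=g$. Alternatively, you can obtain the vanishing directly from the free-information upper bound in Step~S1.
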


Because the threshold \eqref{eq:R-affine-opt} depends on $M_t$ and $m_t$ only through their difference, the decision-relevant scalar at every epoch is the gap between the best find so far and the posterior mean of the page. Throughout the paper we refer to this scalar as the \emph{lead}: $L_t := M_t - m_t$. The lead carries the entire content of the standout rule: the user continues whenever $L_t < \alpha_{t+1} + \kappa_t^\star$ and stops otherwise.\footnote{We implicitly assume the user stops when she is indifferent. As before, under continuous distributions, the inequality is strict almost surely.}

The linear form of \eqref{eq:R-affine-opt} sharpens the qualitative picture by decomposing the user's patience into two pieces. The rank-specific shift $\alpha_{t+1}$ is the user's expectation of how much the next item differs from an average one purely because of its position, $m_t + \alpha_{t+1}$ is therefore the user's expected relevance of the next item, and $\kappa_t^\star$ is the margin by which her current best must exceed that expected relevance before she commits. A tractable closed form for the threshold appears below under a one-step look-ahead heuristic.

Empirically, holding the query fixed, stop probability is more predictable by the lead $L_t$ than by $M_t$ alone.
Three forces drive the lead toward the threshold as inspection proceeds, eventually tipping the rule in favor of stopping. First, every additional inspection can only raise $M_t$, while $m_t$ is a martingale under the predictive measure, so the lead on average trends upward over time. Second, the rank-specific shift $\alpha_{t+1}$ is strictly decreasing in $t$ (on average the items ahead are less promising), so the bar that the lead must clear falls as the user moves down the list. Third, the residual uncertainty about $\mu$ shrinks with each inspection. This lowers the option value of further learning and pulls the intercept $\kappa_t^\star$ down toward a noise floor that depends only on $\sigma_\eta^2$ and the remaining horizon. Costs, by contrast, remain constant per inspection. The user optimally stops at the first rank where her accumulated lead clears a threshold that signals a further inspection is not worth it.

\subsection*{Myopic Policy}

The optimal intercept $\kappa_t^\star$ does not admit a clean closed form, but a tractable benchmark does: the one-step look-ahead (myopic) policy, which inspects whenever the expected one-step gain exceeds the cost, treating the next step as if it were terminal. The myopic threshold shares the same standout format and isolates the option-value channel in closed form. Define
\begin{equation}\label{eq:gdef}
g(d) := \phi(d) - d\Phi(-d),
\end{equation}
which Lemma~\ref{lem:gproof} shows is a strictly decreasing bijection from $\R$ onto $(0, \infty)$, so $g^{-1}$ is well defined.

\begin{proposition}[Myopic Policy]\label{prop:myopic}
Let $\sigma_t^{\star 2} := v_t + \sigma_\eta^2$ denote the one-step predictive variance after $t$ inspections. Under the myopic policy, the user stops at epoch $t$ if and only if
\begin{equation}\label{eq:Rmyo-general}
M_t - m_t  \ge \alpha_{t+1} + \kappa_t, \qquad \kappa_t := \sigma_t^\star g^{-1}\left(c/\sigma_t^\star\right)
\end{equation}
and the intercept $\kappa_t$ is strictly decreasing in $t$, with
\begin{equation}\label{eq:kappa-bounds-general}
\kappa_0 = \sqrt{v_0 + \sigma_\eta^2}\, g^{-1}\!\left(\tfrac{c}{\sqrt{v_0 + \sigma_\eta^2}}\right), \qquad \lim_{t \to \infty} \kappa_t = \sigma_\eta\, g^{-1}\!\left(c/\sigma_\eta\right) =: \kappa_\infty.
\end{equation}
\end{proposition}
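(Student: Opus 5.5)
The plan is to write the myopic comparison in closed form and reduce it to a statement about the function $g$. Under the myopic policy, at epoch $t$ the user compares stopping, worth $M_t$, with inspecting once more and then stopping. By Lemma~\ref{lem:posterior}, the latter is worth $\E[\max(M_t,x_{t+1})]-c$, where $x_{t+1}\mid H_t\sim\N(\mu_t',\sigma_t^{\star2})$ with $\mu_t':=m_t+\alpha_{t+1}$. So she stops iff $\E[(x_{t+1}-M_t)^+]\le c$.

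The standard Gaussian expected-excess formula gives
\[
\E[(X-a)^+]=\sigma\big(\phi(d)-d\,\Phi(-d)\big)=\sigma g(d), \qquad d=(a-\mu')/\sigma,
\]
for $X\sim\N(\mu',\sigma^2)$. I will verify this by writing $X=\mu'+\sigma Z$ and using $\int_d^\infty z\phi(z)\,dz=\phi(d)$. The stopping condition therefore becomes $g\big((M_t-m_t-\alpha_{t+1})/\sigma_t^\star\big)\le c/\sigma_t^\star$.

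Lemma~\ref{lem:gproof} says $g$ is a strictly decreasing bijection $\R\to(0,\infty)$, and $c/\sigma_t^\star>0$. So the condition is equivalent to $(M_t-m_t-\alpha_{t+1})/\sigma_t^\star\ge g^{-1}(c/\sigma_t^\star)$. Multiplying through by $\sigma_t^\star$ gives \eqref{eq:Rmyo-general} with $\kappa_t=\sigma_t^\star g^{-1}(c/\sigma_t^\star)$, consistent with the convention that she stops when indifferent. Setting $t=0$, where $\sigma_0^\star=\sqrt{v_0+\sigma_\eta^2}$, gives the formula for $\kappa_0$. By Lemma~\ref{lem:posterior}, $v_t=(v_0^{-1}+t/\sigma_\eta^2)^{-1}$ strictly decreases to $0$. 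Hence $\sigma_t^\star\downarrow\sigma_\eta$, and continuity of $g^{-1}$ (it is the inverse of a continuous strictly monotone bijection) gives the limit $\kappa_\infty$.

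The main step is the monotonicity of $\kappa_t$ in $t$. Since $\sigma_t^\star$ is strictly decreasing in $t$, it suffices to show that $h(s):=s\,g^{-1}(c/s)$ is strictly increasing in $s>0$.

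I will set $d=g^{-1}(c/s)$, so that $s\,g(d)=c$. As $s$ increases, $c/s$ decreases, so $d$ increases because $g$ is decreasing. Implicit differentiation gives $g(d)+s\,g'(d)\,d'(s)=0$. Differentiating $g$ directly, $g'(d)=-d\phi(d)-\Phi(-d)+d\phi(d)=-\Phi(-d)$, so $d'(s)=g(d)/(s\,\Phi(-d))>0$.

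Then
\[
h'(s)=d+s\,d'(s)=d+\frac{g(d)}{\Phi(-d)}=\frac{\phi(d)}{\Phi(-d)}>0,
\]
which is the inverse Mills ratio. So $h$ is strictly increasing, and therefore $\kappa_t=h(\sigma_t^\star)$ is strictly decreasing in $t$. The only real obstacle is noticing that $d$ alone can be negative, so one cannot simply argue "product of two increasing positive factors." The derivative calculation handles this, and it is the step I would write out in full.
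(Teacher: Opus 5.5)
Your proposal is correct and follows essentially the same route as the paper: you reduce the one-step comparison to $c \le \sigma_t^\star g(d_t)$ via the Gaussian truncation identity, invert using the strict monotonicity of $g$, and show that $s \mapsto s\,g^{-1}(c/s)$ is increasing by implicit differentiation, arriving at the same inverse Mills ratio $\phi(\kappa/s)/\Phi(-\kappa/s)$. The only difference is cosmetic: you differentiate through $d = \kappa/s$, whereas the paper applies the implicit function theorem to $F(\kappa,s) = s\phi(\kappa/s) - \kappa\Phi(-\kappa/s) - c$ directly.
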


The constant $\kappa_t$ is the margin at which the option value of one more inspection just covers its cost. It does not vanish even in the absence of learning: a user with known $\mu$ would still apply the adjustment $\kappa_\infty = \sigma_\eta g^{-1}(c/\sigma_\eta)$ set by the residual idiosyncratic noise in the next draw. Learning adds a surcharge $\kappa_t - \kappa_\infty > 0$, with $\kappa_0 > \kappa_1 > \cdots > \kappa_\infty$, driven by the predictive-variance decomposition $\sigma_t^{\star 2} = v_t + \sigma_\eta^2$. Posterior uncertainty about $\mu$ fattens the upper tail of the next draw and inflates the option value of one more inspection; it shrinks to zero by $v_{t+1}^{-1} = v_t^{-1} + \sigma_\eta^{-2}$, leaving only the irreducible item-level noise $\sigma_\eta^2$.

 The myopic gap bounds the optimal gap $\kappa_t^\star$ from below: $\kappa_t \le \kappa_t^\star$ for every $t$ (Lemma~\ref{lem:myopic-lb}). Moreover, the two rules induce nearly identical gap values and inspection-depth distributions (Figure~\ref{fig:kappa-compare} and \ref{fig:depth-compare}).

\section{Inspection Depth}\label{sec:inspection_depth}
Write $T \in \{0, 1, \ldots, N\}$ for the user's realized inspection depth, and this section characterizes its full distribution. The optimal policy of Proposition~\ref{prop:reservation} seemingly demands the two-dimensional state $(M_t, m_t)$, but the lead $L_t = M_t - m_t$ is itself a one-dimensional Markov chain with a closed-form one-step transition density. The recursion that follows exposes the mechanism of SERP view: a tug-of-war between a rising running max and a refining posterior mean.

For the remainder of the section, write $\xi_t := x_t - (m_{t-1} + \alpha_t)$ for the user's \emph{surprise} at epoch $t$, the difference between the realized draw and her conditional expectation $m_{t-1} + \alpha_t$. By Lemma~\ref{lem:posterior}, conditional on the history $H_{t-1}$ the surprise is mean-zero Gaussian, $\xi_t \mid H_{t-1} \sim \N(0, \sigma_{t-1}^{\star 2})$, and the sequence $\{\xi_t\}_{t \ge 1}$ is conditionally independent across epochs.

Let $\omega_t := v_{t-1}/(v_{t-1} + \sigma_\eta^2) \in (0,1)$ denote the Bayesian weight on the $t$-th observation. The posterior update can be written as $m_t - m_{t-1} = \omega_t\xi_t$, so $\omega_t$ is the fraction of the surprise that the user folds into her posterior mean about $\mu$. The Bayesian weight $\omega_t$ is large when she is still uncertain ($v_{t-1}$ large) and her belief tracks new draws aggressively, small when her uncertainty has shrunk and she barely moves. The sequence $\{\omega_t\}_{t \ge 1}$ is deterministic, strictly decreasing in $t$, and tends to zero as $t \to \infty$.

By Proposition~\ref{prop:reservation}, the user stops at epoch $t$ if and only if $L_t \ge \alpha_{t+1} + \kappa_t^\star$, with the convention $\alpha_{N+1} + \kappa_N^\star := -\infty$ so the user is forced to stop at the horizon. The next lemma reduces the dynamics to the lead alone.

\begin{lemma}[Lead Chain]\label{lem:lead-markov}
The lead $L_t$ is a Markov chain on $\R$, with one-step recursion
\begin{equation}\label{eq:L-recursion}
L_t = \max\bigl(L_{t-1}, \alpha_t + \xi_t\bigr) - \omega_t\xi_t.
\end{equation}
The conditional law of $L_t$ given $L_{t-1} = l$ is supported on $[L_{\min}(l), \infty)$, where $L_{\min}(l) := (1-\omega_t)l + \omega_t\alpha_t$, and on this support has density
\begin{equation}\label{eq:L-density}
f_{L_t \mid L_{t-1} = l}(y) = \frac{1}{(1-\omega_t)\sigma_{t-1}^\star}\phi\left(\frac{y - \alpha_t}{(1-\omega_t)\sigma_{t-1}^\star}\right) + \frac{1}{\omega_t\sigma_{t-1}^\star}\phi\left(\frac{l - y}{\omega_t\sigma_{t-1}^\star}\right).
\end{equation}
\end{lemma}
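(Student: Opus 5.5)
The plan is to express everything in terms of the surprise $\xi_t$ and then read off the recursion and the density by a change of variables. By Lemma~\ref{lem:posterior}, conditional on $H_{t-1}$ we can write $x_t = m_{t-1} + \alpha_t + \xi_t$ with $\xi_t \sim \N(0, \sigma_{t-1}^{\star 2})$. Rewriting the update \eqref{eq:update} in innovation form gives $m_t = m_{t-1} + \omega_t \xi_t$, with $\omega_t = v_{t-1}/(v_{t-1}+\sigma_\eta^2)$; this needs only the identity $v_t/\sigma_\eta^2 = \omega_t$. Since $M_t = \max(M_{t-1}, x_t)$, subtracting $m_t$ gives
\[
L_t = \max\bigl(M_{t-1} - m_{t-1},\ \alpha_t + \xi_t\bigr) - \omega_t\xi_t ,
\]
which is \eqref{eq:L-recursion}. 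The initial condition is $L_0 = x_b - m_0$.

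For the Markov property, I will argue as follows. The objects $\alpha_t$, $\omega_t$ and $\sigma_{t-1}^\star$ are deterministic. The law of $\xi_t$ given the entire history $H_{t-1}$ is $\N(0,\sigma_{t-1}^{\star 2})$, which does not depend on $H_{t-1}$. Hence $\xi_t$ is independent of $H_{t-1}$, and in particular of $(L_0,\dots,L_{t-1})$, which are $H_{t-1}$-measurable. Since $L_t$ is a deterministic function of $(L_{t-1}, \xi_t)$, the conditional law of $L_t$ given $(L_0,\dots,L_{t-1})$ depends only on $L_{t-1}$. This is the step where care is needed: one must check that the Gaussian predictive law really has history-free variance, which Lemma~\ref{lem:posterior} supplies. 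The stopping rule plays no role here, because the chain is defined along the unstopped trajectory.

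For the density, fix $L_{t-1} = l$ and write $\sigma = \sigma_{t-1}^\star$, $\omega = \omega_t$. I split on which argument attains the max.
\begin{itemize}
\item \emph{Old max retained} ($\xi_t \le l - \alpha_t$). Here $y = l - \omega\xi_t$ is decreasing in $\xi_t$. The constraint becomes $y \ge l - \omega(l-\alpha_t) = L_{\min}(l)$. The inverse map $\xi_t = (l-y)/\omega$ has Jacobian $1/\omega$, which produces the second term of \eqref{eq:L-density}.
\item \emph{New max} ($\xi_t > l - \alpha_t$). Here $y = \alpha_t + (1-\omega)\xi_t$ is increasing in $\xi_t$. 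The constraint becomes $y > \alpha_t + (1-\omega)(l-\alpha_t) = L_{\min}(l)$. The Jacobian is $1/(1-\omega)$, which produces the first term.
\end{itemize}
The two branches have disjoint preimages in $\xi_t$ (up to a null set), are each monotone, and both map onto $[L_{\min}(l),\infty)$. Their densities therefore add on that support, giving \eqref{eq:L-density}.

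The only mildly delicate point is confirming that both branches share exactly the same lower endpoint $L_{\min}(l)$. That holds by continuity of the map $\xi_t \mapsto L_t$ at the kink $\xi_t = l - \alpha_t$, where both expressions equal $(1-\omega_t)l + \omega_t\alpha_t$. This also shows the law has no atom and that $L_t$ never falls below $L_{\min}(l)$.
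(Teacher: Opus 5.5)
Your proposal is correct and follows the paper's proof. You derive the recursion from the same substitution $x_t = m_{t-1}+\alpha_t+\xi_t$, $m_t - m_{t-1} = \omega_t\xi_t$, and obtain the Markov property from the same fact that $\xi_t$ is independent of $H_{t-1}$. For the density, the paper differentiates the CDF over the sublevel interval $[(l-y)/\omega_t,\ (y-\alpha_t)/(1-\omega_t)]$, while you sum branch-wise change-of-variables densities; this is the same computation.
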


The recursion \eqref{eq:L-recursion} is piecewise affine in $\xi_t$, with kink at the surprise level $\xi^\star_t := L_{t-1} - \alpha_t$ at which the new draw ties the running maximum. Above the kink the draw beats the running maximum (a \emph{discovery}); below it the draw falls short (a \emph{disappointment}). The two branches
\begin{equation}\label{eq:L-twobranch}
L_t = \begin{cases}
L_{t-1} - \omega_t\xi_t & \text{if } \xi_t \le \xi^\star_t \quad (\text{disappointment: } x_t \le M_{t-1}),\\
\alpha_t + (1 - \omega_t)\xi_t & \text{if } \xi_t > \xi^\star_t \quad (\text{discovery: } x_t > M_{t-1}),
\end{cases}
\end{equation}
contribute the two terms in the transition density \eqref{eq:L-density}.

On a \emph{disappointment}, the running maximum is frozen ($M_t = M_{t-1}$), so the lead moves only because the belief moves: $m_t = m_{t-1} + \omega_t\xi_t$ produces $L_t - L_{t-1} = -\omega_t\xi_t$. A draw below expectation revises $\mu$ down and grows the lead; a draw above expectation but short of the running maximum revises $\mu$ up and shrinks it. Either way the move is dampened by the Bayesian weight $\omega_t$, which controls how aggressively the user revises. The disappointment term in \eqref{eq:L-density} accordingly has scale $\omega_t\sigma_{t-1}^\star$ around $L_{t-1}$: large for an early user who is still revising sharply, small for a converged user whose belief barely budges.

On a \emph{discovery}, both halves of the lead move: $M_t$ jumps onto the new draw and $m_t$ shifts by $\omega_t\xi_t$, giving $L_t = \alpha_t + (1-\omega_t)\xi_t$ with $L_{t-1}$ dropping out completely. A discovery \emph{overwrites} the previous lead: the new value depends only on the rank-specific shift $\alpha_t$ and the fraction $1-\omega_t$ of the surprise that is not absorbed into the posterior. A converged user keeps almost the entire upside as added lead and is much more likely to stop next epoch; an early user sees most of the windfall vanish into her revised expectations and is correspondingly less impressed. The discovery term in \eqref{eq:L-density} accordingly has scale $(1-\omega_t)\sigma_{t-1}^\star$ around $\alpha_t$, with the opposite dependence on $\omega_t$ from the disappointment term.

Lemma~\ref{lem:lead-markov}, albeit sharp, does not directly speak to how each inspection's revelation engages with the user's continuation and stopping decision. Translating the lead recursion back to item relevance using $x_t = m_{t-1} + \alpha_t + \xi_t$ converts the stopping rule into a continuation interval on the realized draw, with an explicit V-shape geometry at every epoch.

\begin{proposition}[Conditional Continuation]\label{prop:cont-interval}
Define $r_t := \alpha_{t+1} + \kappa_t^\star$. For an epoch $t \ge 1$ and a history $H_{t-1}$ on which the user has reached rank $t$,
\begin{enumerate}
\item[(a)] \emph{Dynamic trust regime:} if $r_t \le (1-\omega_t)L_{t-1} + \omega_t\alpha_t$, then the user stops at rank $t$ for every realization of $x_t$.
\item[(b)] \emph{Dynamic explore regime:} if $r_t > (1-\omega_t)L_{t-1} + \omega_t\alpha_t$, then the user continues when 
\begin{equation}\label{eq:cont-interval}
\underline{x}_t(H_{t-1}) < x_t < \overline{x}_t(H_{t-1}),
\end{equation}
where $\underline{x}_t(H_{t-1}) := m_{t-1} + \alpha_t + (L_{t-1} - r_t)/\omega_t$ and $\overline{x}_t(H_{t-1}) := m_{t-1} + \alpha_t + (r_t - \alpha_t)/(1 - \omega_t)$.
\end{enumerate}
\end{proposition}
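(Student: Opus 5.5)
The plan is to combine the standout rule of Proposition~\ref{prop:reservation} with the two-branch form \eqref{eq:L-twobranch} of the lead recursion from Lemma~\ref{lem:lead-markov}. Fix a history $H_{t-1}$ under which the user has reached rank $t$, so that $L_{t-1}$, $m_{t-1}$ and the deterministic constants $\omega_t$, $\alpha_t$, $r_t$ are all known. Proposition~\ref{prop:reservation} says the user continues past rank $t$ exactly when $L_t < r_t$. The task is therefore to describe the set of surprises $\xi_t$ for which $L_t(\xi_t) < r_t$, and then to translate that set back into a set of $x_t$ through the identity $x_t = m_{t-1} + \alpha_t + \xi_t$.

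The main step is the shape of the map $\xi \mapsto L_t(\xi)$. By \eqref{eq:L-recursion} we have $L_t(\xi) = \max(L_{t-1}, \alpha_t+\xi) - \omega_t \xi$. This map is continuous and piecewise affine, with kink at $\xi^\star_t = L_{t-1}-\alpha_t$.
\begin{itemize}
\item On the disappointment branch $\xi \le \xi^\star_t$, it equals $L_{t-1} - \omega_t\xi$, which is strictly decreasing because $\omega_t>0$.
\item On the discovery branch $\xi > \xi^\star_t$, it equals $\alpha_t + (1-\omega_t)\xi$, which is strictly increasing because $\omega_t<1$.
\end{itemize}
So the map is V-shaped, and its global minimum sits at the kink with value $L_{t-1} - \omega_t(L_{t-1}-\alpha_t) = (1-\omega_t)L_{t-1} + \omega_t\alpha_t = L_{\min}(L_{t-1})$. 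This matches the support lower bound stated in Lemma~\ref{lem:lead-markov}.

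Part (a) follows at once. If $r_t \le L_{\min}(L_{t-1})$, then $L_t \ge r_t$ for every $\xi_t$, so the user stops for every realization of $x_t$.

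For part (b), suppose $r_t > L_{\min}(L_{t-1})$. The sublevel set $\{\xi : L_t(\xi) < r_t\}$ of a V-shaped function is then a nonempty open interval containing the kink. Its endpoints are found by solving $L_t(\xi) = r_t$ on each branch.
\begin{itemize}
\item The left endpoint comes from $L_{t-1} - \omega_t \xi = r_t$, giving $\xi = (L_{t-1}-r_t)/\omega_t$.
\item The right endpoint comes from $\alpha_t + (1-\omega_t)\xi = r_t$, giving $\xi = (r_t-\alpha_t)/(1-\omega_t)$.
\end{itemize}
Before using these, I will check that each root lies on its own branch: the left root should satisfy $\xi \le \xi^\star_t$ and the right root should satisfy $\xi > \xi^\star_t$. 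Both checks reduce to the single inequality $r_t > (1-\omega_t)L_{t-1}+\omega_t\alpha_t$. For the left root, $(L_{t-1}-r_t)/\omega_t < L_{t-1}-\alpha_t$ is equivalent to $L_{t-1}-r_t < \omega_t L_{t-1} - \omega_t\alpha_t$, which is exactly that inequality. The right root works the same way. Adding $m_{t-1}+\alpha_t$ to both endpoints gives $\underline{x}_t$ and $\overline{x}_t$ as stated.

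I expect the main obstacle to be bookkeeping rather than substance, in two places. First, the stopping convention must be applied consistently: stopping occurs at equality, so the continuation set is the open interval. Second, the epoch-$t$ rule must be legitimately applicable, which requires $t \le N-1$. At $t=N$ the convention $r_N=-\infty$ places us in case (a) automatically, so that boundary case is covered without extra work.
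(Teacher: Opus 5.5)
Your proposal is correct and follows the paper's proof essentially step for step. Both arguments write $L_t$ as a V-shaped function of the surprise $\xi_t$, with its minimum $(1-\omega_t)L_{t-1}+\omega_t\alpha_t$ at the kink, read off the trust regime immediately, solve $L_t<r_t$ branch by branch, and translate back via $\xi_t = x_t-(m_{t-1}+\alpha_t)$. Your explicit check that each root lies on its own branch, with both conditions reducing to the explore-regime inequality, is a small but welcome tightening of the paper's step that ``glues'' the two pieces into a single open interval.
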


Proposition~\ref{prop:cont-interval} partitions every epoch's continuation problem into two regimes. The dynamic trust regime says that, before the user has inspected rank $t$, her history has already locked in stopping: $x_t$ does not enter the decision at all. In the dynamic explore regime, continuation past rank $t$ requires $x_t$ to land in a middle band, not so low as to make the outside option look safer (cut losses), not so high as to make rank $t$ a standout worth committing to.

The behavioral story is uniform across epochs. A very low draw signals that the page is weak and tips the user toward her outside option; a very high draw delivers a standout whose lead beats the option value of further search; the intermediate range is the set of items that neither resolve the question nor settle it. The arresting implication is that, with history fixed, a user who saw a mediocre rank-$t$ item is more likely to keep going than a user who saw an excellent one. No monotone click model \citep{craswell2008experimental, chapelle2009dynamic, dupret2008userbrowsing} can match this shape, because in those models satisfaction with the inspected item only ever increases the probability of stopping; the mediocre-driven continuation channel is absent by assumption. If the platform has any proxy for realized rank-$t$ relevance, the rational-user model predicts a nonmonotonic relationship between that proxy and continuation to rank $t+1$, and the cascade family predicts monotonicity. The two are distinguishable in the data without any need to manipulate the ranker.

The upper endpoint $\overline{x}_t$ depends on the Bayesian weight $\omega_t$, which is large when prior uncertainty is high. When the user is still uncertain about the page mean, a single high draw raises her belief about the rest of the list almost as much as it raises the running max, and the lead barely moves. The commitment tail is then unavailable to her, because no single observation can convince her that she has reached the best the list has to offer. As belief stabilizes with inspections, the upper threshold descends, and the commitment tail opens up. The model accordingly predicts an attenuated upper truncation when the user's prior is diffuse (novel queries, or users new to the topic) and a more symmetric inverted-U when her prior is tight (familiar queries, or experienced users). Two natural cross-section tests follow: comparing the shape of continuation against relevance proxies across queries of varying familiarity, and tracking how that shape changes within a single search as the user's prior tightens.

The first-stop, where the user has not yet inspected any item, is a special case that the two regimes are pinned down only by primitives. With $\omega_1 := v_0/(v_0 + \sigma_\eta^2)$ and $L_0 = x_b - m_0$, the predictive draw is $x_1 \mid H_0 \sim \N(m_0 + \alpha_1, v_0 + \sigma_\eta^2)$, $m_1(x_1) = (1-\omega_1)m_0 + \omega_1(x_1 - \alpha_1)$, and $M_1(x_1) = \max(x_b, x_1)$. Define
\[
h(x_1) := M_1(x_1) - m_1(x_1) - \alpha_2 - \kappa_1^\star = \begin{cases}
  -\omega_1 x_1 + x_b + \omega_1\alpha_1 - (1-\omega_1)m_0 - \alpha_2 - \kappa_1^\star & \text{if } x_1 \le x_b, \\
  (1-\omega_1) x_1 + \omega_1\alpha_1 - (1-\omega_1)m_0 - \alpha_2 - \kappa_1^\star & \text{if } x_1 > x_b,
\end{cases}
\]
a V-shape in $x_1$ with kink at $x_1 = x_b$. Then the user stops at $T = 1$ on the event $\{h(x_1) \ge 0\}$.

\begin{corollary}[First-Stop Specialization]\label{cor:tau-one}
Specializing Proposition~\ref{prop:cont-interval} to $t = 1$ gives two cases:
\begin{enumerate}
\item[(a)] \emph{Trust regime:} if $\alpha_1 - \alpha_2 \ge \kappa_1^\star + (1-\omega_1)(m_0 + \alpha_1 - x_b)$, then $\Pr(T = 1) = 1$.
\item[(b)] \emph{Explore regime:} if the trust condition fails, define
\begin{equation}\label{eq:s1-explicit}
\underline{x}_1 := m_0 + \alpha_1 - \frac{m_0 + \alpha_2 + \kappa_1^\star - x_b}{\omega_1}, \qquad \overline{x}_1 := m_0 + \alpha_1 + \frac{\kappa_1^\star - (\alpha_1 - \alpha_2)}{1 - \omega_1}.
\end{equation}
Then $\underline{x}_1 < x_b < \overline{x}_1$ are the two solutions of $h(\cdot) = 0$, and
\begin{equation}\label{eq:Ptau1-twopiece}
\Pr(T = 1) = \underbrace{\Phi\left(\frac{\underline{x}_1 - m_0 - \alpha_1}{\sqrt{v_0 + \sigma_\eta^2}}\right)}_{\text{cut-losses branch: } x_1 \le \underline{x}_1} + \underbrace{\Phi\left(\frac{m_0 + \alpha_1 - \overline{x}_1}{\sqrt{v_0 + \sigma_\eta^2}}\right)}_{\text{commit branch: } x_1 \ge \overline{x}_1}.
\end{equation}
\end{enumerate}
\end{corollary}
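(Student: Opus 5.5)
The plan is to obtain the corollary by plugging $t=1$ into Proposition~\ref{prop:cont-interval} and rewriting the primitives there. One preliminary step is needed. The event $\{T=1\}$ requires that the user actually inspects rank~$1$. Lemma~\ref{lem:tau-positive} guarantees this under Assumption~\ref{assn:interior}, so $\Pr(T\ge 1)=1$. Hence $\{T=1\}$ coincides almost surely with the event that the user stops at epoch~$1$, which is $\{L_1 \ge r_1\}=\{h(x_1)\ge 0\}$ with $r_1=\alpha_2+\kappa_1^\star$. At $t=1$ the history $H_0$ is deterministic, with $L_0=M_0-m_0=x_b-m_0$, and the Bayesian weight is $\omega_1=v_0/(v_0+\sigma_\eta^2)$.

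\textbf{Part (a).} Proposition~\ref{prop:cont-interval}(a) applies when $r_1\le(1-\omega_1)L_0+\omega_1\alpha_1$. Substituting $L_0$ gives
\[
\alpha_2+\kappa_1^\star \le (1-\omega_1)(x_b-m_0)+\omega_1\alpha_1 .
\]
Subtract both sides from $\alpha_1$ and use $\alpha_1-\omega_1\alpha_1=(1-\omega_1)\alpha_1$. This rearranges exactly to $\alpha_1-\alpha_2\ge \kappa_1^\star+(1-\omega_1)(m_0+\alpha_1-x_b)$, the stated trust condition. Under it the user stops at rank~$1$ for every $x_1$, so $\Pr(T=1)=1$.

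\textbf{Part (b), endpoints.} When the trust condition fails, substitute $L_0=x_b-m_0$ and $r_1$ into the endpoints of \eqref{eq:cont-interval}. This gives
\[
\underline{x}_1=m_0+\alpha_1+\frac{x_b-m_0-\alpha_2-\kappa_1^\star}{\omega_1},
\qquad
\overline{x}_1=m_0+\alpha_1+\frac{\alpha_2+\kappa_1^\star-\alpha_1}{1-\omega_1},
\]
which agree with \eqref{eq:s1-explicit}.

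\textbf{Part (b), ordering.} To show that these are the two roots of $h$ and that $\underline{x}_1<x_b<\overline{x}_1$, I use the V-shape of $h$ directly rather than redo the interval algebra. The function $h$ is continuous with a kink at $x_b$. Its slope is $-\omega_1<0$ on $(-\infty,x_b]$ and $1-\omega_1>0$ on $(x_b,\infty)$. Its minimum value is
\[
h(x_b)=(1-\omega_1)x_b+\omega_1\alpha_1-(1-\omega_1)m_0-\alpha_2-\kappa_1^\star ,
\]
and $h(x_b)<0$ is precisely the failure of the trust condition. Since $h\to+\infty$ in both directions, $h$ has exactly one zero on each side of $x_b$. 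Solving each affine branch for its zero recovers $\underline{x}_1$ on the left and $\overline{x}_1$ on the right. This also confirms that $\{h\ge 0\}=\{x_1\le\underline{x}_1\}\cup\{x_1\ge\overline{x}_1\}$, a disjoint union.

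\textbf{Part (b), probabilities.} Use the predictive law $x_1\mid H_0\sim\N(m_0+\alpha_1,v_0+\sigma_\eta^2)$ from Lemma~\ref{lem:posterior}. Standardizing each tail event gives the two $\Phi$ terms in \eqref{eq:Ptau1-twopiece}. Boundary ties have probability zero, so they do not affect the result.

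The only delicate point is the bookkeeping in translating the trust inequality and verifying the ordering of the roots. I expect the sign argument on $h(x_b)$ to handle the ordering cleanly.
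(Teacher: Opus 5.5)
Your proof is correct and follows essentially the paper's route. The paper works directly with the V-shaped $h$: the trust regime is $h(x_b)\ge 0$, and in the explore regime it solves each affine branch and checks the root ordering against $h(x_b)<0$, which is what your ordering step does. Your appeal to Proposition~\ref{prop:cont-interval} for part (a) and the endpoints is the same algebra routed through the general-$t$ result. Your explicit use of Lemma~\ref{lem:tau-positive} to get $\Pr(T\ge 1)=1$ makes precise a point the paper leaves implicit.
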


In the explore regime, the continuation interval $(\underline{x}_1, \overline{x}_1)$ is the set of \emph{mediocre} first observations. When $x_1 \ge \overline{x}_1$, the user stops because rank~$1$ is so good that further search is unlikely to surface anything better worth paying $c$ to look for. When $x_1 \le \underline{x}_1$, she stops because rank~$1$ being this disappointing tells her the whole list is probably weak, so the outside option becomes the safer bet. The same single-inspection observable is therefore consistent with both satisfaction and disappointment.

The trust regime is the case in which the ranker's ability to deliver on the very top position has earned the user's trust. Before any item is inspected, the rank-1 advantage $\alpha_1 - \alpha_2$ clears $\kappa_1^\star$ plus a buffer $(1-\omega_1)(m_0 + \alpha_1 - x_b)$ sized for the worst case $x_1 = x_b$. In the trust regime, the user still inspects rank~$1$, but only to learn what $x_1$ is. She then chooses between $x_1$ and her outside option without moving down the list.
In other words, a search that falls in the trust regime leaves no trace below the top slot. Whether a search lies in this regime depends only on query-time primitives, not on the realized $x_1$.
A more informative ranking widens the rank-1 advantage and shrinks the residual uncertainty around rank~$1$, pushing the buffer toward zero. Consequently, such a ranker drives more searches into the trust regime and starves the system of deeper observations. 

Formally, index ranking informativeness by the reliability ratio $\rho$, hold $\sigma_x^2$ fixed, and let the ranker noise vary as $\sigma_e^2(\rho)=\sigma_x^2(1-\rho)/\rho$, $\sigma_\eta^2(\rho)=\sigma_x^2(1-\rho)$, $\alpha_i(\rho)=\sigma_x\sqrt{\rho}\,q_i$. The next corollary shows that a sufficiently informative ranker generates no observations below the top slot.

\begin{corollary}[Winner's Curse]\label{cor:reliable-ranker-starvation}
Fix $N\ge 2$, $c>0$, $\sigma_x>0$, $v_0>0$, and $m_0,x_b\in\R$. Suppose Assumption~\ref{assn:interior} holds for all $\rho$ sufficiently close to one.\footnote{This is a nonempty restriction: it holds when inspection is sufficiently cheap, when the outside option is sufficiently unattractive, or when the prior variance $v_0$ is sufficiently large (Lemma~\ref{lem:tau-positive}).} Let $T_\rho$ be the optimal inspection depth in the environment with reliability ratio $\rho$. Then there exists $\bar\rho<1$ such that, for every $\rho\in(\bar\rho,1)$, $\Pr_\rho(T_\rho=1)=1$.
\end{corollary}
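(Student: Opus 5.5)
The plan is to verify the trust-regime condition of Corollary~\ref{cor:tau-one}(a) for all $\rho$ close enough to one. That condition reads
\[
\alpha_1(\rho)-\alpha_2(\rho) \;\ge\; \kappa_1^\star(\rho) + \bigl(1-\omega_1(\rho)\bigr)\bigl(m_0+\alpha_1(\rho)-x_b\bigr).
\]
Assumption~\ref{assn:interior}, which holds near $\rho=1$ by hypothesis, together with Lemma~\ref{lem:tau-positive}, guarantees that the user inspects rank~$1$. The corollary then delivers $\Pr_\rho(T_\rho=1)=1$. So everything reduces to three limits as $\rho\uparrow 1$. The left side tends to $\sigma_x(q_1-q_2)>0$, because $\alpha_i(\rho)=\sigma_x\sqrt\rho\,q_i$ and $q_1>q_2$. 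In the buffer term, $\sigma_\eta^2(\rho)=\sigma_x^2(1-\rho)\to 0$ gives $\omega_1=v_0/(v_0+\sigma_\eta^2)\to 1$. Meanwhile $m_0+\alpha_1-x_b$ stays bounded, so the buffer vanishes. The remaining task is to show $\limsup_{\rho\uparrow1}\kappa_1^\star(\rho)\le 0$.

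The main step, and the only real obstacle, is an upper bound on the optimal intercept $\kappa_1^\star$, for which no closed form is available. Lemma~\ref{lem:myopic-lb} bounds it only from below, so I will bound it from above directly from the Bellman equation~\eqref{eq:bellman}. At epoch~$1$, any continuation policy pays at least one cost $c$ and ends with at most $\max(M_1,\max_{j\ge2}x_j)$. Its value is therefore at most
\[
M_1+\E\bigl[(\max_{j\ge 2}x_j-M_1)^+\mid H_1\bigr]-c \;\le\; M_1+\sum_{j=2}^N \E\bigl[(x_j-M_1)^+\mid H_1\bigr]-c .
\]
By Lemma~\ref{lem:posterior}, given $H_1$ each $x_j\sim\N(m_1+\alpha_j,s^2)$ with $s^2:=v_1+\sigma_\eta^2$. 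Since $\alpha_j\le\alpha_2$ for $j\ge2$, each term is at most $s\,g\bigl((L_1-\alpha_2)/s\bigr)$, with $g$ from~\eqref{eq:gdef}. Hence stopping is optimal whenever $(N-1)s\,g((L_1-\alpha_2)/s)\le c$. Because $g$ is a decreasing bijection onto $(0,\infty)$ (Lemma~\ref{lem:gproof}), this is exactly the condition
\[
L_1\ge\alpha_2+b(s),\qquad b(s):=s\,g^{-1}\!\bigl(c/((N-1)s)\bigr).
\]
Proposition~\ref{prop:reservation} says stopping is optimal if and only if $L_1\ge\alpha_2+\kappa_1^\star$. Since stopping is optimal on the whole half-line $\{L_1\ge\alpha_2+b(s)\}$, the threshold cannot exceed it, and $\kappa_1^\star\le b(s)$.

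It remains to send $s\to0$. We have $v_1^{-1}=v_0^{-1}+\sigma_\eta^{-2}\to\infty$ and $\sigma_\eta^2\to0$, so $s\to0$ and the argument $c/((N-1)s)$ of $g^{-1}$ diverges. From $g(d)=\phi(d)-d\Phi(-d)$ we have $g(d)=-d+o(1)$ as $d\to-\infty$, so $g^{-1}(y)=-y+o(1)$ as $y\to\infty$. Therefore $b(s)\to -c/(N-1)<0$, and $\limsup_{\rho\uparrow1}\kappa_1^\star\le -c/(N-1)$.

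Combining the three limits, the right side of the trust condition has $\limsup$ at most $-c/(N-1)$, while the left side converges to a positive number. Hence there is $\bar\rho<1$ such that the condition holds strictly on $(\bar\rho,1)$, and we may take $\bar\rho$ large enough that Assumption~\ref{assn:interior} also holds there. The only delicate points are to handle the tie convention consistently with the footnote to Proposition~\ref{prop:reservation}, and to check that the bound uses nothing beyond the available history $H_1$.
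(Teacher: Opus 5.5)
Your proposal is correct and follows the paper's route. You reduce to the trust condition of Corollary~\ref{cor:tau-one}(a), bound the epoch-1 continuation value by the free-information payoff $-c+\E[\max\{M_1,x_2,\ldots,x_N\}]$, and pass to the limit $\rho\uparrow 1$. Where the paper takes the limit of that expectation by coupling and dominated convergence to get $\limsup\kappa_1^\star\le -c$, you use a union bound and the closed form of $g$, which gives the explicit but weaker $\limsup\kappa_1^\star\le -c/(N-1)$; this still suffices, since only $\limsup\kappa_1^\star<\sigma_x(q_1-q_2)$ is needed.
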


The limit $\rho\uparrow 1$ is the noise-free case in which the ranking coincides with true relevance order and the residual uncertainty around each rank vanishes. So every lower-ranked item is known to be worse than the one just inspected. A further inspection cannot improve the terminal maximum but costs $c>0$, so stopping is strictly optimal. The trust condition of Corollary~\ref{cor:tau-one}(a) holds at the limit with strictly positive slack, and therefore on an open neighborhood. 

The winner's curse is not biased logging or model misuse, but an equilibrium consequence of rational stopping against a ranker that has become sufficiently informative, and it tightens precisely as the ranking becomes more informative. For $\rho<1$ inside the neighborhood, the top signal, albeit informative, is not conclusive, so lower-ranked items can still beat rank~$1$. The trust condition holding nonetheless means lower ranks are never inspected and these residual inversions never enter the log.

Let $\nu_t$ denote the law of $L_t$ on the survival event $\{T \ge t\}$, with $\nu_0 = \delta_{x_b - m_0}$. The surviving population $\nu_t$ fans out in the early epochs as different users experience different draw sequences, and contracts later as the stopping region $[\alpha_{t+1} + \kappa_t^\star, \infty)$ absorbs mass. The Markov structure of $L_t$ yields a clean recursion for the stopping probabilities $\Pr(T = t)$.

\begin{proposition}[Stopping Time Recursion]\label{prop:tau-recursion}
Let $K_t^L$ denote the transition kernel of the lead chain (Lemma~\ref{lem:lead-markov}). For $t = 0, 1, \ldots, N-1$:
\begin{align}
\Pr(T = t) &= \nu_t\bigl([\alpha_{t+1} + \kappa_t^\star, \infty)\bigr), \label{eq:tau-mass}\\
\nu_{t+1}(\cdot) &= \int_{(-\infty, \alpha_{t+1} + \kappa_t^\star)} K_t^L(l, \cdot) \nu_t(dl), \label{eq:tau-pushforward}
\end{align}
and $\Pr(T = N) = \nu_N(\R)$. The masses $\Pr(T = 0), \ldots, \Pr(T = N)$ sum to one.
\end{proposition}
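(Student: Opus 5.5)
The plan is to prove the recursion by induction on $t$. The inductive claim is that $\nu_t$, defined as the law of $L_t$ restricted to the survival event $\{T \ge t\}$ (so $\nu_t(B) = \Pr(L_t \in B, T \ge t)$), satisfies \eqref{eq:tau-pushforward}. The identity \eqref{eq:tau-mass} is then read off at each step. The base case is immediate: $T \ge 0$ surely and $L_0 = M_0 - m_0 = x_b - m_0$ deterministically, so $\nu_0 = \delta_{x_b - m_0}$.

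For \eqref{eq:tau-mass}, I would use Proposition~\ref{prop:reservation} together with linear traversal (Lemma~\ref{lem:rankorder}). On the event $\{T \ge t\}$ the user is at epoch $t$ and stops exactly when $L_t \ge \alpha_{t+1} + \kappa_t^\star$. Hence
\[
\{T = t\} = \{T \ge t\} \cap \{L_t \ge \alpha_{t+1} + \kappa_t^\star\},
\]
and taking probabilities gives $\Pr(T=t) = \nu_t([\alpha_{t+1}+\kappa_t^\star,\infty))$. The same reasoning gives the complementary identity $\{T \ge t+1\} = \{T \ge t\} \cap \{L_t < \alpha_{t+1}+\kappa_t^\star\}$.

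For \eqref{eq:tau-pushforward}, fix a Borel set $B$ and condition on $H_t$:
\[
\nu_{t+1}(B) = \E\bigl[\indic\{T\ge t,\ L_t < r_t\}\,\Pr(L_{t+1}\in B \mid H_t)\bigr].
\]
The key step, and the main obstacle, is showing that $\Pr(L_{t+1}\in B\mid H_t) = K_t^L(L_t, B)$, meaning it depends on the history only through $L_t$. This is exactly the Markov content of Lemma~\ref{lem:lead-markov}, but I would check it directly. By \eqref{eq:L-recursion}, $L_{t+1} = \max(L_t, \alpha_{t+1}+\xi_{t+1}) - \omega_{t+1}\xi_{t+1}$, and by Lemma~\ref{lem:posterior}, $\xi_{t+1}\mid H_t \sim \N(0,\sigma_t^{\star 2})$. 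The parameters $\omega_{t+1}$, $\sigma_t^\star$, and $\alpha_{t+1}$ are deterministic, so this conditional law is a fixed kernel evaluated at $L_t$. Two further points need care. First, the survival indicator $\indic\{T\ge t, L_t<r_t\}$ is $H_t$-measurable, since the stopping decisions up to epoch $t$ are functions of $L_0,\dots,L_t$. Second, the surprise at the next epoch is independent of the past given $H_t$, so conditioning on survival does not distort the kernel. With both in hand, the tower property yields $\nu_{t+1}(B) = \int_{(-\infty,r_t)} K_t^L(l,B)\,\nu_t(dl)$.

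Finally, at the horizon the convention $\alpha_{N+1}+\kappa_N^\star = -\infty$ forces stopping, so $\Pr(T=N) = \Pr(T\ge N) = \nu_N(\R)$. For the sum to one, note that $\nu_{t+1}(\R) = \nu_t(\R) - \Pr(T=t)$ because $K_t^L(l,\cdot)$ is a probability measure. Telescoping from $\nu_0(\R)=1$ then gives $\sum_{t=0}^{N}\Pr(T=t) = 1$. Equivalently, $T \le N$ surely, so the events $\{T=t\}$ partition the sample space.
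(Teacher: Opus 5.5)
Your proposal is correct and follows essentially the same route as the paper. Both arguments use the event identities $\{T=t\}=\{T\ge t\}\cap\{L_t\ge \alpha_{t+1}+\kappa_t^\star\}$ and $\{T\ge t+1\}=\{T\ge t\}\cap\{L_t<\alpha_{t+1}+\kappa_t^\star\}$, condition on $H_t$ using the $H_t$-measurable survival indicator and the kernel from Lemma~\ref{lem:lead-markov}, and telescope the total mass. Your induction framing is unnecessary, since each step is a direct computation, and your explicit check that $\xi_{t+1}$ is independent of $H_t$ is just a slightly more careful version of the paper's appeal to the Markov property.
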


The recursion form tracks a single one-dimensional law $\nu_t$ across epochs and advances it by one transition at a time. For empirical work, where one wants $\Pr(T = t)$ as a function of model parameters whose gradient can be backpropagated through, the next section develops an equivalent representation as a truncated multivariate Gaussian probability.

\section{Identification from Inspection Depth}\label{sec:partial-id}

So far we have asked what the user does given the latent relevances. The empirical question often is the reverse. A scroll trace records that the user inspected five items and left; it does not say which items thrilled her, which bored her, or which made her bail. What does an observed $T = t$ reveal about the unobserved relevances $(x_1, \ldots, x_t)$, and how can that information be turned into a training signal for a feature-based relevance model?

In a common pointwise training construction, items the user did not engage with are labeled negatives and no further structure is imposed. The Bayesian-rational view says depth carries much more content than that. Reaching depth $t$ certifies that all previous inspections failed to clear the standout threshold; when $t<N$, a voluntary stop at $t$ additionally certifies that the rank-$t$ realization did clear it. Since both statements are inequalities in the inspected relevances, the observed depth confines the latent path to the region they define. Depth thus partially reveals what the user saw.

The section has two subsections. The first constructs the survival region explicitly, formally turning the user's reservation rule into a system of linear inequalities in the inspected relevances. The second builds a learning-to-rank likelihood on top of the survival region, develops a Geweke--Hajivassiliou--Keane (GHK) estimator \citep{hajivassiliou1996simulation} for its Gaussian probability, and reduces the user-side primitives to two scalars.

\subsection*{Survival as a System of Inequalities}

Survival to depth $t$ does not point-identify the latent path $(x_1, \ldots, x_{t-1})$, nor does it point-identify any single relevance. Instead, it reveals that the path lies inside the user's continuation region. The user keeps going only when each new draw lands in a middle band, and the bands themselves are determined endogenously by everything she has already seen. Reaching rank $t$ is thus a \emph{censoring event} of a high-dimensional latent vector, and the censoring region inherits its structure from the user's stopping rule.

By Proposition~\ref{prop:reservation}, not stopping at rank $i$ is the event $\{L_i < r_i\}$ with $r_i = \alpha_{i+1} + \kappa_i^\star$. Reaching rank $t$ is then the intersection of continuation events across the first $t-1$ epochs: $\{T \ge t\} = \{T \ge 1\} \cap \bigcap_{i=1}^{t-1} \{L_i < r_i\}$.
Under Assumption~\ref{assn:interior}, $\{T \ge 1\}$ has probability one, so we suppress it and work with the $t-1$ inequalities indexed by $i = 1, \ldots, t-1$.

Two ingredients turn each $\{L_i < r_i\}$ into a finite list of linear inequalities in $(x_1, \ldots, x_i)$. First, the posterior mean is affine in the inspected draws. Reading off Lemma~\ref{lem:posterior},
\[
m_i = v_i\left(\frac{m_0}{v_0} + \frac{1}{\sigma_\eta^2}\sum_{k=1}^{i}(x_k - \alpha_k)\right) = a_i + \gamma_i \sum_{k=1}^{i} x_k,
\]
where $\gamma_i := v_i/\sigma_\eta^2$ and $a_i := {v_i m_0} / {v_0} - \gamma_i \sum_{k=1}^{i}\alpha_k$.
Second, the running maximum is a max of $i + 1$ linear forms, $M_i = \max\{x_b, x_1, \ldots, x_i\}$, so the event $\{L_i < r_i\} = \{M_i < m_i + r_i\}$ is the conjunction of one inequality per argument of the max: one for the outside option,
\begin{equation}\label{eq:ineq-outside}
x_b < a_i + \gamma_i \sum_{k=1}^{i} x_k + r_i,
\end{equation}
and one for each previously inspected item,
\begin{equation}\label{eq:ineq-inspected}
x_j < a_i + \gamma_i \sum_{k=1}^{i} x_k + r_i, \qquad j = 1, \ldots, i.
\end{equation}
Each constraint says ``this option after inspecting through rank $i$ is not yet decisive.'' In general, the inequalities are not independent. A high realization at an early rank enters every later constraint twice over. It raises the candidate winner on the left if it remains the running maximum, and it lifts the right-hand side through the posterior updating via that constraint's slope $\gamma_i$.

Stacking \eqref{eq:ineq-outside} and \eqref{eq:ineq-inspected} across $i = 1, \ldots, t-1$ defines the \emph{survival region} $\mathcal{C}_t := \{(x_1, \ldots, x_{t-1}) \in \R^{t-1} : L_i < r_i \text{ for every } 1 \le i \le t - 1\}$. Formally, this is a polyhedron in $\R^{t-1}$ with $\sum_{i=1}^{t-1}(i + 1) = t(t+1)/2 - 1$ inequalities. By construction, $\{T \ge t\}$ is the event $\{(x_1, \ldots, x_{t-1}) \in \mathcal{C}_t\}$. Thus, reaching rank $t$ set-identifies the pre-$t$ history through this system.

\paragraph{A $t = 3$ example.} Two epochs of inequalities define $\mathcal{C}_3 \subset \R^2$. The $i = 1$ pair, $x_b < m_1 + r_1$ and $x_1 < m_1 + r_1$ with $m_1 = a_1 + \gamma_1 x_1$, reduces to $\underline{x}_1 = (x_b - a_1 - r_1)/\gamma_1 < x_1 < (a_1 + r_1)/(1 - \gamma_1) = \overline{x}_1$. Therefore, epoch $1$ contributes a vertical strip on $x_1$, independent of $x_2$.

The $i = 2$ inequalities are $x_b < m_2 + r_2$, $x_1 < m_2 + r_2$, and $x_2 < m_2 + r_2$. Rearranging with $m_2 = a_2 + \gamma_2(x_1 + x_2)$ gives
\begin{align*}
\text{(cut-losses at $i = 2$)} \qquad & x_1 + x_2 > \frac{x_b - a_2 - r_2}{\gamma_2}, \\
\text{(rank-1 not yet decisive)} \qquad & x_2 > \frac{(1 - \gamma_2) x_1 - a_2 - r_2}{\gamma_2}, \\
\text{(rank-2 not yet decisive)} \qquad & x_2 < \frac{\gamma_2 x_1 + a_2 + r_2}{1 - \gamma_2}.
\end{align*}
Stacking all five inequalities, $\mathcal{C}_3$ is a bounded region in the $(x_1, x_2)$-plane illustrated in Figure~\ref{fig:c3-region}.

\begin{figure}[ht]
\centering
\begin{tikzpicture}[x=1.6cm, y=0.9cm, >=stealth, line cap=round, font=\small]
  \begin{scope}
    \clip (-3.434,-3.817) rectangle (2.434,3.000);

    \draw[black, dash dot] (-3.434,-0.2839) -- (2.434,-6.1510);
    \draw[black, dotted, thick] (-3.434,-9.0846) -- (2.434,2.6497);
    \draw[black, dash dot dot, thick] (-3.434,-0.6081) -- (2.434,2.3255);

    \draw[black, dashed] (-2.8336,-3.817) -- (-2.8336,3.000);
    \draw[black, dashed] (1.8336,-3.817) -- (1.8336,3.000);

    \fill[black!8, draw=black, line width=0.9pt] (1.8336,1.4497) -- (1.8336,2.0255) -- (-2.8336,-0.3081) -- (-2.8336,-0.8839) -- (-0.5000,-3.2175) -- cycle;

    \draw[black!85, densely dashed, line width=0.75pt] (-0.500,-3.817) -- (-0.500,3.000);
    \draw[black!85, densely dashed, line width=0.75pt] (-3.434,-0.500) -- (2.434,-0.500);
  \end{scope}

  \draw[->, gray!70] (-3.434,0) -- (2.484,0) node[right, black] {$x_1$};
  \draw[->, gray!70] (0,-3.817) -- (0,3.050) node[above, black] {$x_2$};
  \foreach \xv in {-2,-1,1,2} {
    \draw (\xv,-0.06) -- (\xv,0.06) node[below=2pt, scale=0.8] {$\xv$};
  }
  \foreach \yv in {-2,-1,1,2} {
    \draw (-0.04,\yv) -- (0.04,\yv) node[left=2pt, scale=0.8] {$\yv$};
  }

  \node[anchor=south, black, scale=0.85] at (-2.8336,3.000) {$\underline{x}_1$};
  \node[anchor=south, black, scale=0.85] at (1.8336,3.000) {$\overline{x}_1$};
  \node[anchor=south west, black!85, scale=0.95] at (-1.250,3.000) {$x_1{=}x_b$};
  \node[anchor=south west, black!85, scale=0.95] at (2.500,-0.800) {$x_2=x_b$};
  \node[black] at (-1.0,-1.5) {$\mathcal{C}_3$};
\end{tikzpicture}
\caption{Survival region $\mathcal{C}_3 \subset \R^2$ for the $t = 3$ example with parameters $v_0 = \sigma_\eta^2 = 1$, $m_0 = 0$, $x_b = -0.5$, $(\alpha_1, \alpha_2, \alpha_3) = (0.3, 0.1, 0)$, $c = 0.15$, and $\kappa_i$ taken from the myopic formula of Proposition~\ref{prop:myopic}. The shaded pentagon is the intersection of the $i = 1$ strip $(\underline{x}_1, \overline{x}_1) \times \R$ (dashed vertical boundaries) with the three $i = 2$ half-planes: the cut-losses constraint (dash-dotted), the rank-$1$ not-yet-decisive constraint (dotted, a lower bound on $x_2$ with positive slope in $x_1$), and the rank-$2$ not-yet-decisive constraint (dash-dot-dotted, an upper bound).}
\label{fig:c3-region}
\end{figure}

Figure~\ref{fig:c3-region} also shows why the inspected history contains more information than a sequence of negative labels. A sizable portion of $\mathcal{C}_3$ lies in $\{x_1 < x_b\} \cap \{x_2 < x_b\}$: both inspected items can be worse than the outside option, leaving $x_b$ as the user's running best. Survival to depth $3$ nevertheless confines their joint relevance. The cut-loss bound at $i = 2$ rules out the deeply negative tail $x_1 + x_2 \le (x_b - a_2 - r_2)/\gamma_2$, which would push the posterior below the continuation threshold. The rank-$1$ and rank-$2$ not-yet-decisive bounds exclude positive outliers that would already have triggered commitment, and the $i = 1$ strip $(\underline{x}_1, \overline{x}_1)$ supplies the corresponding restriction one epoch earlier. Thus, the two latent relevances lie in a bounded, history-dependent polygon whose location relative to $x_b$ is determined by the posterior path. An item-level negative label, in contrast, discards this joint restriction.

\subsection*{A Learning-to-Rank Likelihood}

The set-identification of the latent relevance path through the survival region has a natural empirical counterpart: it can be turned into a likelihood for training a relevance model directly from depth and conversion observations, without any explicit relevance labels. This subsection makes that translation in four steps. We introduce a system-side feature representation and a data-generating process, decompose the observable outcome of a session into a region of latent-relevance space, write the session likelihood as a Gaussian probability over that region, and evaluate the resulting probability with a differentiable GHK simulator.

\paragraph{Feature Model.} The system observes a feature vector $\mathbf{w}_i$ for each item displayed at rank $i$.\footnote{Assuming sessions are iid, we first work within a single session and suppress the session subscript.} These features encode what the system knows about the query-item pair. They are visible to the system and the analyst but not to the user, who continues to see only rank and the inspected relevances.

Following standard learning-to-rank practice, the system computes a score for each item as the deterministic output of a parametric model of the features,
\begin{equation}\label{eq:score-feature}
z_i=\mathscr{F}(\beta; \mathbf{w}_i),
\end{equation}
where $\mathscr F$ is an almost-everywhere differentiable parametric function and $\beta$ collects its parameters. In learning-to-rank language, $z_i$ is the system's prediction and the true relevance $x_i$ is the label against which it would ordinarily be trained. Here $x_i$ remains latent.

The analyst's task is to recover $\beta$ from the user's observed behavior. The analyst's conditional model of the latent relevance given features is
\begin{equation}\label{eq:x-feature}
x_i \mid \mathbf{w}_i \sim \N\bigl(z_i, \sigma_{\mathscr{F}}^2\bigr),
\end{equation}
with residual variance $\sigma_{\mathscr{F}}^2$.\footnote{The residual variance $\sigma_{\mathscr F}^2:=\Var(x_i\mid\mathbf w_i)$ differs from $\sigma_e^2=\Var(e_i)$ in the earlier relation $z_i=x_i+e_i$. The former includes both model misspecification and irreducible variation in $x_i$ conditional on $\mathbf w_i$, whereas the latter is the variance of the score-relevance gap in the theoretical ranking model. Only $\sigma_{\mathscr F}^2$ enters the likelihood.} We maintain conditional independence of these residuals across ranks, so the joint model is
\[
\mathbf x\mid\mathbf w_{1:N}\sim\N\left(
\bigl(z_1,\ldots,z_N\bigr),
\sigma_{\mathscr F}^2 I_N\right).
\]
This joint conditional model determines how much probability the analyst assigns to each stopping-consistent region.

Because the user does not observe $\mathbf w_i$ or $z_i$, \eqref{eq:score-feature} and \eqref{eq:x-feature} do not enter her decision rule. Her belief, learning rule, and stopping policy remain indexed by policy primitives $(m_0, v_0, c, \{\alpha_i\}, \sigma_\eta)$. The parameter $\beta$ is therefore not an argument of her decision rule. Effectively, the user holds a Gaussian belief about page relevance that marginalizes over the features she cannot see and behaves as if $x_i \mid \mu \sim \N(\mu + \alpha_i, \sigma_\eta^2)$ with prior $\mu \sim \N(m_0, v_0)$. The analyst, by contrast, uses features to make sharper predictions about $x_i$ than the user can.

\paragraph{Observed Outcomes as Relevance Regions.} Two latent observables describe how the session ends: an inspection depth $t \in \{1, \ldots, N\}$ and a terminal conversion choice $J \in \{0, 1, \ldots, t\}$, where $J = 0$ is the outside option and $J = j \ge 1$ is the rank-$j$ item. We assume the inspection depth is always logged. Whether conversion is logged depends on the platform: some sites record a booking, a purchase, or a callback that pins down $J$, while others record only that the user clicked or scrolled through to inspect each item, with commitment invisible. We treat both cases below.

Conditional on the policy primitives, $(t, J)$ is a deterministic function of the latent path $(x_1, \ldots, x_t)$: the user inspects ranks in order, stops as soon as the lead $L_t(H_{t-1}, x_t)$ exceeds the reservation threshold $r_t$, and at the stop chooses the maximum of $\{x_b, x_1, \ldots, x_t\}$. Each observable event therefore corresponds to a region of $\R^t$, and the session likelihood is the Gaussian probability of the path lying in that region under \eqref{eq:x-feature}. The region decomposes into three building blocks. The first two apply regardless of whether conversion is observed; the third arises only when it is.

\emph{Block 1: the survival region $\mathcal{C}_t$.} As established earlier, the survival region $\mathcal{C}_t \subset \R^{t - 1}$ collects the histories on which the user has not yet stopped at any of the first $t - 1$ ranks.

\emph{Block 2: the stopping set $\mathcal{S}_t$ at rank $t$.} Define
\[
\mathcal{S}_t(H_{t - 1}) := \begin{cases}
\{x \in \R : L_t(H_{t-1}, x) \ge r_t\}, & t < N, \\
\R, & t = N,
\end{cases}
\]
the set of rank-$t$ realizations that trigger termination. By Proposition~\ref{prop:cont-interval}, $\mathcal{S}_t(H_{t-1}) = (-\infty, \underline{x}_t(H_{t-1})] \cup [\overline{x}_t(H_{t-1}), \infty)$ in the dynamic explore regime, $\R$ in the dynamic trust regime, and $\R$ at the end of the list.

Blocks 1 and 2 together pin down the depth-only event: the surviving history $H_{t-1}$ lies in $\mathcal{C}_t$ and, conditional on that history, the rank-$t$ relevance lies in $\mathcal{S}_t(H_{t-1})$.

\emph{Block 3: the conversion set $\mathcal{R}_{t, j}$ and the restricted survival region $\bar{\mathcal{C}}_{t, j}$ (when conversion is observed).} Set $x_0 := x_b$, the outside-option value; so $J = \argmax_{0 \le i \le t} x_i$ almost surely. Three cases describe $\mathcal{R}_{t, j}$, depending on which item the conversion label points to. If $j = t$, the rank-$t$ item is the realized maximum, so $\mathcal{R}_{t, t}(H_{t-1}) = \mathcal{S}_t(H_{t-1}) \cap [\max\{x_0, x_1, \ldots, x_{t-1}\}, \infty)$, the commitment tail at rank $t$. If $j = 0$, the outside option wins, which requires the surviving history entirely below $x_b$ (an additional restriction on $H_{t-1}$) and $x_t < x_b$ on the rank-$t$ relevance, giving $\mathcal{R}_{t, 0}(H_{t-1}) = \mathcal{S}_t(H_{t-1}) \cap (-\infty, x_b)$, the cut-losses tail in its strongest form. If $j \in \{1, \ldots, t - 1\}$, an earlier inspected item remains best, which requires $x_j > \max\{x_b, x_i : i \le t - 1, i \ne j\}$ inside the surviving history (an additional restriction on $H_{t-1}$) and $x_t < x_j$ on the rank-$t$ relevance, giving $\mathcal{R}_{t, j}(H_{t-1}) = \mathcal{S}_t(H_{t-1}) \cap (-\infty, x_j)$.

The latter two cases attach a constraint to the surviving history itself, so $H_{t-1}$ does not range over all of $\mathcal{C}_t$ when $j \ne t$ but over the \emph{restricted survival region}
\begin{equation}\label{eq:restricted-survival}
\bar{\mathcal{C}}_{t, j} := \begin{cases}
\mathcal{C}_t, & j = t, \\[2pt]
\mathcal{C}_t \cap \bigl\{H_{t-1} : x_i < x_b \text{ for all } 1 \le i \le t-1\bigr\}, & j = 0, \\[2pt]
\mathcal{C}_t \cap \bigl\{H_{t-1} : x_j > \max\{x_b, x_i : i \le t-1, i \ne j\}\bigr\}, & 1 \le j \le t - 1.
\end{cases}
\end{equation}

\paragraph{Session Likelihood.} Denote by $\phi_{\mathscr{F}}(\cdot) := \sigma_{\mathscr{F}}^{-1} \phi(\cdot/\sigma_{\mathscr{F}})$ the centered Gaussian density at scale $\sigma_{\mathscr{F}}$, and collect the rank-$1$ through rank-$t$ feature vectors as $\mathbf w_{1:t}$. The inspected path has joint density $p_{x \mid \mathbf w_{1:t}}(x_1, \ldots, x_t) = \prod_{i=1}^{t} \phi_{\mathscr{F}}(x_i-z_i)$. The session likelihood integrates this density over the region associated with the observed event. Which region enters depends on whether the terminal choice is observed.

\emph{Case A: depth observed, conversion not observed.} The observed event is that the history survives through rank $t-1$ and the rank-$t$ realization falls in its history-dependent stopping set. The session likelihood is
\begin{multline}\label{eq:depth-likelihood}
\ell^{T}(\beta, \sigma_{\mathscr{F}}) := \Pr\bigl(T = t \big| \mathbf{w}_{1:t}; \beta, \sigma_{\mathscr{F}}\bigr) \\
	= \int_{\mathcal{C}_t} \left[\int_{\mathcal{S}_t(H_{t-1})} \phi_{\mathscr{F}}(x_t-z_t)dx_t\right] \prod_{i=1}^{t-1} \phi_{\mathscr{F}}(x_i-z_i)dx_1 \cdots dx_{t-1}.
\end{multline}
The outer integral weights each surviving history $H_{t-1} \in \mathcal{C}_t$ by its joint Gaussian density under the feature-driven model; the inner integral over $\mathcal{S}_t(H_{t-1})$ is a univariate Gaussian probability that, in the dynamic explore regime, has two tails with endpoints $\underline{x}_t(H_{t-1})$ and $\overline{x}_t(H_{t-1})$, and equals one in the dynamic trust regime or at the end of the list $t = N$. This is the appropriate likelihood for platforms that observe inspection but not commitment.

\emph{Case B: depth and conversion both observed.} The observed event is $\{T = t, J = j\}$, whose region refines the Case A region by replacing the stopping set $\mathcal{S}_t$ with the conversion set $\mathcal{R}_{t, j}$ from Block 3 and the survival region $\mathcal{C}_t$ with the restricted survival region $\bar{\mathcal{C}}_{t, j}$ from \eqref{eq:restricted-survival}:
\begin{multline}\label{eq:feature-likelihood}
\ell^{T, J}(\beta, \sigma_{\mathscr{F}}) := \Pr\bigl(T = t, J = j \big| \mathbf{w}_{1:t}; \beta, \sigma_{\mathscr{F}}\bigr) \\
	= \int_{\bar{\mathcal{C}}_{t, j}} \left[\int_{\mathcal{R}_{t, j}(H_{t-1})} \phi_{\mathscr{F}}(x_t-z_t)dx_t\right] \prod_{i=1}^{t-1} \phi_{\mathscr{F}}(x_i-z_i)dx_1 \cdots dx_{t-1}.
\end{multline}

\paragraph{Sequential GHK Evaluation.} The likelihoods above integrate over the unobserved relevance path $(x_1,\ldots,x_t)$. Direct evaluation becomes difficult as $t$ grows because the admissible value of each $x_i$ depends on the relevances that came before it. The key simplification is that these restrictions arrive sequentially. Conditional on a history $H_{i-1}$, Proposition~\ref{prop:cont-interval} expresses the values of $x_i$ consistent with continued inspection as a single interval.

The GHK simulator follows this sequential structure. At each rank before the observed stopping depth, it computes the Gaussian probability of the admissible interval, draws a relevance from the Gaussian distribution conditional on falling inside that interval, and uses the draw to update the user's belief and running best. At the terminal rank, it computes the Gaussian probability of the observed stopping event. The product of these probabilities is the contribution of a simulated path. Averaging these contributions across paths approximates the integral over the latent histories consistent with the observed session.

Formally, suppose the data contain $S$ iid sessions indexed by $n=1,\ldots,S$. Write $\mathbf{w}_{n,i}$ for the rank-$i$ feature vector and $z_{n,i}=\mathscr F(\beta;\mathbf w_{n,i})$ for its system score. Let $T_n=t_n$ denote the observed inspection depth and let $J_n$ denote the terminal choice when one is logged. A depth-only record contributes $\ell^T$ from \eqref{eq:depth-likelihood}, while a depth-and-choice record contributes $\ell^{T,J}$ from \eqref{eq:feature-likelihood}. When page lengths vary, each session uses the policy with respect to its own page length.

When terminal choice $J_n=j$ is observed, the simulated relevance at each rank must also be consistent with item $j$ eventually winning. For $j\in\{0,1,\ldots,t_n\}$, define
\[
\mathcal W_{i,j}(H_{i-1})=
\begin{cases}
(-\infty,x_b), & j=0,\\
\R, & 1\le j\text{ and }i<j,\\
(\max\{x_b,x_1,\ldots,x_{i-1}\},\infty), & i=j\ge1,\\
(-\infty,x_j), & 1\le j<i.
\end{cases}
\]
These restrictions have a simple interpretation. If the outside option wins, every inspected item must lie below $x_b$. Before the winning item appears, the terminal choice imposes no additional restriction. When the winning item appears, it must exceed the running maximum, and every later item must remain below it. When terminal choice is not observed, there is no winner restriction and we take $\mathcal W_{i,j}=\R$ at every rank.

Now fix a GHK path $r$. At each rank $i<t_n$, intersect the continuation interval with the restriction implied by the terminal choice. Write the resulting admissible interval as
\[
I_{n,i}^{(r)}
=
\bigl(\underline x_{n,i}^{(r)},\overline x_{n,i}^{(r)}\bigr).
\]
Under the feature model, $x_{n,i}$ is Gaussian with mean $z_{n,i}$ and standard deviation $\sigma_{\mathscr F}$. Define the standardized endpoints
\[
A_{n,i}^{(r)}
=
\frac{\underline x_{n,i}^{(r)}-z_{n,i}}{\sigma_{\mathscr F}},
\qquad
B_{n,i}^{(r)}
=
\frac{\overline x_{n,i}^{(r)}-z_{n,i}}{\sigma_{\mathscr F}}.
\]
The conditional probability that $x_{n,i}$ falls inside the admissible interval is
\[
p_{n,i}^{(r)}
=
\left(
\Phi(B_{n,i}^{(r)})
-
\Phi(A_{n,i}^{(r)})
\right)^+,
\]
where $y^+:=\max\{y,0\}$. Although each simulated draw satisfies the restrictions imposed up to its rank, a valid draw at rank $i$ can lead to a dynamic trust state at rank $i+1$, where no realization permits further continuation. The next admissible interval is then empty and the path receives zero weight. An empty interval can also occur if the continuation interval has no overlap with the restriction imposed by the observed terminal choice.

When $p_{n,i}^{(r)}>0$, draw the latent relevance from the same Gaussian conditional on falling inside the interval. For a uniform draw $u_{n,i}^{(r)}\in(0,1)$, inverse transform sampling gives
\begin{equation}\label{eq:ghk-draw}
x_{n,i}^{(r)}
=
z_{n,i}
+
\sigma_{\mathscr F}
\Phi^{-1}\!\left(
\Phi(A_{n,i}^{(r)})
+
u_{n,i}^{(r)}p_{n,i}^{(r)}
\right).
\end{equation}
We then use this draw to update $(m_i,M_i)$ and construct the admissible interval at rank $i+1$.

At the terminal rank $t_n$, no further state must be updated. We therefore set $p_{n,t_n}^{(r)}$ equal to the Gaussian mass of $\mathcal S_{t_n}$ when only depth is observed and of $\mathcal S_{t_n}\cap\mathcal W_{t_n,J_n}$ when terminal choice is also observed. If $t_n=N$, then $\mathcal S_N=\R$, so the depth-only terminal factor is one; an observed terminal choice still contributes its winner restriction.

The product of these conditional probabilities is the contribution of path $r$. Averaging across $R$ paths gives the GHK likelihood estimator
\begin{equation}\label{eq:ghk-likelihood}
\widehat\ell_n
=
\frac{1}{R}
\sum_{r=1}^{R}
\prod_{i=1}^{t_n}
p_{n,i}^{(r)}.
\end{equation}
This is the sequential simulator studied by \citet{hajivassiliou1996simulation}, specialized to the continuation intervals generated by the standout rule. The recursion is almost everywhere differentiable in the system scores and policy objects, so $\widehat\ell_n$ can be differentiated by automatic differentiation. It advances through each session once per path, giving total cost $O(R\sum_n t_n)$ without matrix factorization.

\paragraph{Calibrating the User-side Primitives.} The likelihoods in \eqref{eq:depth-likelihood} and \eqref{eq:feature-likelihood} depend on the relevance model and, through the stopping region, on the user-side primitives $\theta_U := (m_0, v_0, \sigma_\eta, x_b, c)$ and the rank-specific shifts $\{\alpha_i\}$. These policy objects are not recorded in a typical production log. The session-level latent $\mu_n$, by contrast, is not an obstacle: the survival and stopping regions are policy objects that depend on the posterior over $\mu_n$ given the inspected history rather than on $\mu_n$ itself, and \eqref{eq:x-feature} replaces $\mu_n$ with a feature-driven score, removing $\mu_n$ as a primitive of the analyst's model. The real question is what to do about $\theta_U$ and $\{\alpha_i\}$. The following four steps reduce them to two scalars that train alongside $\beta$.

\emph{Normalization.} The latent relevance $x$ has no intrinsic origin or unit. The likelihood is unchanged under (i) a joint additive shift $(x_i, x_b, m_0, z_i) \mapsto (x_i + \delta, x_b + \delta, m_0 + \delta, z_i + \delta)$ for any $\delta \in \R$, and (ii) a joint multiplicative rescaling of all relevance-denominated objects by any $\lambda > 0$, with variances rescaling as $\lambda^2$.\footnote{The formal verifications appear in Appendix~\ref{app:invariance}.} Each invariance leaves one degree of freedom to normalize. We set $m_0 = 0$ to pin the origin and $\sigma_{\mathscr{F}} = 1$ to set the unit. Neither choice is restrictive since the parameter space has no canonical origin or unit. After this step, all remaining parameters are read in units of the residual standard deviation.

\emph{Calibration.} The remaining belief objects $(\{\alpha_i\}, v_0, \sigma_\eta)$ are cross-sectional moments of the system scores $\{z_{n,i}\}$ and therefore update with $\beta$.

Start with the rank-specific shifts. By definition, $\alpha_i$ is the expected boost in relevance from being placed at rank $i$, so averaging \eqref{eq:x-feature} across sessions at a fixed slot and centering across slots gives
\begin{equation}\label{eq:alpha-calibration}
\alpha_i = \E_n[z_{n,i}]
- \frac{1}{N}\sum_{k=1}^{N}\E_n[z_{n,k}].
\end{equation}
The right-hand side is centered across slots by construction, and it recovers the model's rank-specific shift because that shift is itself centered, $\sum_{i=1}^{N}\alpha_i = 0$. Two remarks. First, \eqref{eq:alpha-calibration} spares the analyst from recovering $(\sigma_x, \sigma_e)$ separately in order to evaluate $\alpha_i = \rho\sigma_z q_i$, a decomposition not separately identified by the log. Second, it does not impose the rank-to-quantile reduction in Assumption~\ref{assn:quantile}; instead, it estimates each rank-specific shift directly from the cross-session score moments. The maintained traversal policy still requires the rank-specific shifts to decrease with rank. We accordingly restrict estimation to candidate $\beta$ for which $\alpha_1(\beta)>\cdots>\alpha_N(\beta)$.

Given $\{\alpha_i\}$, the analyst can recover the session-level mean as $\hat\mu_n := \sum_{i=1}^{N}(z_{n,i}-\alpha_i)/N$, which by \eqref{eq:alpha-calibration} is simply the session average of the system scores. This average estimates $\mu_n$, but because it uses only the $N$ retrieved items on one page, it also contains finite-page sampling noise. To separate that noise from true cross-session variation in $\mu_n$, we need a model for how individual item scores vary around page quality. We therefore impose the analyst-side analogue of the user's page model on the system scores before ranking: $z_{n,k}=\mu_n+\chi_{n,k}$, where the $\chi_{n,k}$ are iid conditional on $\mu_n$, with $\E[\chi_{n,k}\mid\mu_n]=0$ and $\Var(\chi_{n,k}\mid\mu_n)=:\sigma_\chi^2$. This is an additional feature-model moment condition, not a consequence of Assumption~\ref{assn:gaussian}. Ranking permutes the pool, so the item in slot $i$ has $z_{n,i}=\mu_n+\chi_{n,(i)}$ and $\alpha_i=\E[\chi_{n,(i)}]$. Because a permutation leaves the page average unchanged,
\[
\hat\mu_n=\mu_n+\bar\chi_n,
\qquad
\bar\chi_n:=\frac{1}{N}\sum_{k=1}^N\chi_{n,k}.
\]
The term $\bar\chi_n=\hat\mu_n-\mu_n$ is the sampling error in the page average. Conditional independence gives
\[
\E[\bar\chi_n\mid\mu_n]=0,
\qquad
\Var(\bar\chi_n\mid\mu_n)=\frac{\sigma_\chi^2}{N},
\]
so the law of total variance yields
\[
\Var_n[\hat\mu_n]
=
\underbrace{v_0}_{\text{variation in page quality}}
+
\underbrace{\frac{\sigma_\chi^2}{N}}_{\text{finite-page sampling noise}},
\]
where $\Var_n$ is taken across sessions. The ranked scores also identify $\sigma_\chi^2$. Sorting changes which item occupies each slot but does not change the dispersion of the items on the page. Hence $\E_n[\Var_{i\mid n}(z_{n,i})]=\sigma_\chi^2$, where $\Var_{i\mid n}$ is taken across the $N$ slots of a single session. Substituting this expression for $\sigma_\chi^2$ gives
\begin{equation}\label{eq:v0-calibration}
v_0 = \Var_n\bigl[\hat\mu_n\bigr]
- \frac{1}{N}\,\E_n\bigl[\Var_{i \mid n}(z_{n,i})\bigr].
\end{equation}
 The first term contains both true variation in page quality and finite-page sampling noise; the second removes the latter. The correction is $O(1/N)$ and hence vanishes for long lists, but it is not negligible at the lengths of a typical search page.

The user's perceived per-item noise $\sigma_\eta$ likewise matches the analyst's marginal within-session variance of $x_i - \alpha_i$. Decomposing $\Var_{i \mid n}(x_i - \alpha_i)$ into the variation of $\E[x_i \mid \mathbf{w}_i] - \alpha_i$ across $i$ and the residual $\sigma_{\mathscr{F}}^2$ gives
\begin{equation}\label{eq:sigmaeta-calibration}
\sigma_\eta^2 = \E_n\bigl[\Var_{i \mid n}(z_{n,i}-\alpha_i)\bigr] + 1,
\end{equation}
substituting $\sigma_{\mathscr{F}}^2 = 1$ from the normalization above. The two calibration equations use different within-session variances because they serve different purposes. Equation~\eqref{eq:v0-calibration} uses $z_{n,i}$ itself to recover the dispersion of the retrieved pool before ranking, while \eqref{eq:sigmaeta-calibration} subtracts $\alpha_i$ because the user's model already accounts for the rank-specific shifts. Let $\mathcal Q(\beta) := (\{\alpha_i(\beta)\}, v_0(\beta), \sigma_\eta(\beta))$ denote the three objects determined by \eqref{eq:alpha-calibration}--\eqref{eq:sigmaeta-calibration}.

When pages differ in length, we apply the same moment conditions separately for each page length $N$, producing $\mathcal Q_N(\beta)=(\{\alpha_{N,i}(\beta)\}_{i=1}^N,v_{0,N}(\beta),\sigma_{\eta,N}(\beta))$. This allows the calibrated belief moments to vary with page length rather than padding shorter pages and using a common calibration.

\emph{Belief consistency interpretation.} The calibration imposes consistency between the user's beliefs and the analyst's feature-based relevance model. The user need not observe $\beta$, the item features, or the system scores, since her decisions require only a marginal belief about page quality and relevance by rank. The steady-state interpretation assumes that repeated experience has calibrated these beliefs to the distribution of pages she encounters. The analyst describes the same distribution more finely through $z_i=\mathscr F(\beta;\mathbf w_i)$. At the true steady state, both $\beta_0$ and the user's calibrated beliefs $\mathcal Q_0$ are fixed, with $\mathcal Q_0=\mathcal Q(\beta_0)$. Recomputing $\mathcal Q(\beta)$ during estimation means that the optimizer compares candidate steady states; it does not mean that users update their beliefs alongside the training algorithm.

\emph{Joint estimation.} The normalization and calibration determine $\mathcal Q(\beta)$, leaving $(c, x_b)$ to be estimated with $\beta$. The extended training objective is
\begin{equation}\label{eq:extended-objective}
\min_{(\beta,c,x_b)\in\mathcal A} -\sum_{n=1}^{S} \log \ell_n(\beta, \mathcal Q(\beta); c, x_b),
\end{equation}
where the admissible domain $\mathcal A$ requires $c>0$, $v_0(\beta)>0$, $\alpha_1(\beta)>\cdots>\alpha_N(\beta)$, and the interior solution inequality in Assumption~\ref{assn:interior}. With varying page lengths, these conditions apply to every $\mathcal Q_N(\beta)$. For each admissible candidate $\beta$, the likelihood constructs the corresponding survival region from $\mathcal Q(\beta)$. The derivative follows two paths: directly through the system scores $z_{n,i}(\beta)$ and indirectly through the calibrated moments that define the stopping region.

The distribution of observed depth $T_n$ at similar feature profiles is especially informative about $c$: holding $\mathbf{w}_{n,1:t_n}$ fixed, a higher inspection cost shifts probability toward shallower depths. Depth alone also informs $x_b$ through the lower, cut-losses tail, but does not reveal which stops came from that branch. Sessions with $J_n=0$ supply the branch label and therefore sharpen the estimate of $x_b$. User heterogeneity in $(c,x_b)$ can be incorporated through a mixture likelihood.

\section{Experiments}\label{sec:experiments}

In this section, we evaluate whether the likelihood learns a useful relevance score from data. We first study our approach in a simulation and then apply it to the Baidu search logs. In simulation, the data-generating process is exactly the analytical model developed above, so our learner should have an advantage over conventional learners that do not exploit this structure. In the Baidu exercise, we ask whether the structural likelihood \eqref{eq:extended-objective} yields a better relevance predictor on a large real-world search log.

In both exercises, GHK evaluation uses scrambled Sobol uniforms with a separate random shift for each session. Training refreshes these randomized points at every update, while held-out evaluation uses larger fixed draw sets and common random numbers across checkpoints.
Throughout the tables, \emph{structural: depth only} uses \eqref{eq:depth-likelihood}, in which only inspection depth is observed. \emph{Structural: depth + choice} uses \eqref{eq:feature-likelihood} when the terminal choice $J$ is observed.

\subsection*{Simulation}

 We generate pages of $N=10$ items. Session $n$ has four query features $\mathbf g_n$, and candidate item $k$ has 30 item features $\mathbf h_{n,k}$; every coordinate is drawn independently from $\N(0,1)$. To generate the true relevance, we fixed three two-layer tanh networks, $\widetilde\mu$, $\widetilde a$, and $\widetilde r$, all of which have 16 units per layer and are standardized to have mean zero and variance one. The true conditional mean relevance $s_{n,k}$ and realized relevance $x_{n,k}$ are
\[
s_{n,k}=\sqrt{v_0}\,\widetilde\mu(\mathbf g_n)+1.2\,\widetilde a(\mathbf h_{n,k}),
\qquad
x_{n,k}=s_{n,k}+\varepsilon_{n,k},\quad \varepsilon_{n,k}\sim\N(0,1).
\]
The incumbent orders the candidates by a noisy score that uses only the first 15 item features,
\[
z^{\mathrm{inc}}_{n,k}=1.2\,\widetilde r(\mathbf h_{n,k,1:15})+\nu_{n,k},
\qquad \nu_{n,k}\sim\N(0,1.2^2),
\]
where the superscript $\mathrm{inc}$ denotes the incumbent score and $\widetilde r$ is generated independently of the function $\widetilde a$ that determines relevance. The ordering is therefore informative but imperfect. All 30 item features enter the relevance experienced by the user, whereas the incumbent can use only the first 15. We interpret the last 15 features as user-side private attributes that co-determine relevance. They are part of the data-generating process but are not observed by the system.

The user follows the myopic standout rule with $m_0=0$, $v_0=1$, $c=0.06$, and $x_b=1.2$.\footnote{The myopic rule greatly simplifies the computation because its reservation gap is available in closed form, whereas the optimal gap requires solving the Bellman equation. Appendix~\ref{app:numerical} shows that the two rules generate nearly identical reservation gaps and inspection-depth distributions across representative configurations.} We approximate the policy's remaining population moments using 40,000 auxiliary item pools and obtain $\sigma_\eta=1.5583$ and $\boldsymbol\alpha=(0.263,0.187,0.130,0.083,0.031,-0.025,-0.073,-0.121,-0.197,-0.278)$.
Setting the learner's scores equal to the true conditional means and applying the sample analogues of \eqref{eq:alpha-calibration}--\eqref{eq:sigmaeta-calibration} to the training sample from the seed-0 replication below gives $v_0=0.993$ against $1$, $\sigma_\eta=1.555$ against $1.558$, and a maximum absolute error of $0.0123$ in the rank-specific shifts. Omitting the $1/N$ correction in \eqref{eq:v0-calibration} gives $v_0=1.138$.
In the same sample, mean inspection depth is $6.33$, the outside option wins in 22.8 percent of sessions, and the item with the highest conditional mean relevance goes uninspected in 26.3 percent. Displayed rank has a correlation of only $-0.105$ with conditional mean relevance. The incumbent ordering is therefore useful enough to shape inspection but leaves substantial room for reranking.

\paragraph{Information available to the learner.} The simulation answers two separate questions. First, can the estimator recover relevance and the user-side primitives when the relevance model is correctly specified? Second, can it improve ranking when the learner sees only the features available to the incumbent system?

To answer the first question, the simulation gives the learner every feature entering conditional mean relevance. This full-information benchmark is infeasible for the system because the last 15 item features representing user-private determinants of relevance are not available to the system. Nevertheless, it provides a correctly specified analyst's model to test structural recovery. We then remove the user-private item features and give every learner, including the oracle and all labeling heuristics, only the item features used by the incumbent ranker. This second exercise is the feasible ranking comparison. In both exercises, we estimate the structural model with and without $J$, isolating the additional information supplied by terminal choice. Display position is excluded in both exercises.

The feasible exercise without terminal choice also includes a heuristically generated click feedback tied directly to realized relevance. Holding the stopping path fixed, we draw clicks independently across inspected items. Conditional on realized relevance and inspection depth,
\[
p^{\mathrm{click}}_{n,i}
=\mathbf 1\{i\le T_n\}
\frac{\exp(x_{n,i})}{\sum_{k=1}^{T_n}\exp(x_{n,k})},
\qquad
Y^{\mathrm{click}}_{n,i}\sim\operatorname{Bernoulli}\!\left(p^{\mathrm{click}}_{n,i}\right).
\]
Thus, a more relevant inspected item is exponentially more likely to be clicked, while an uninspected item cannot be clicked. The independent Bernoulli probabilities sum to one, so the conditional expected number of clicks is one even though a session may have no click or several clicks. Across the five training logs, the mean is $0.9985$ clicks per session; 22.6 percent have no click and 19.2 percent have multiple clicks. These labels provide a strong click benchmark because they respond directly to realized relevance without changing the user's stopping behavior.

\paragraph{Estimation.} We repeat data generation and estimation under five random seeds for each exercise. Each replication contains 30,000 training sessions, 6,000 validation sessions, and 6,000 test sessions. The relevance model has two hidden layers with 48 tanh units each. We estimate its weights jointly with $c$ and $x_b$ using a penalized sample version of \eqref{eq:extended-objective}. The objective penalizes upward steps in the calibrated rank-specific shifts, a nonpositive $v_0$, and violations of the first-inspection condition in Assumption~\ref{assn:interior}.

AdamW updates the scorer in batches of 512, with learning rate $0.005$ and weight decay $10^{-4}$; cosine decay lowers the learning rate to $0.00125$ at epoch 45. After a three-epoch scorer warm-up, Adam updates $(\log c,x_b)$ from the prespecified start $(0.1,0.75)$, with learning rate $0.05$ and cosine decay to $10^{-5}$. Training runs for at most 60 epochs and stops after eight validation epochs without improvement.

The belief moments in \eqref{eq:alpha-calibration}--\eqref{eq:sigmaeta-calibration} change with the scorer. Each 512-page batch therefore provides both the system scores entering its session likelihoods and the raw moments used to calibrate the stopping region. Before evaluating the likelihood, we update an exponential moving average (EMA) of these moments, placing weight $0.25$ on the previous average and $0.75$ on the current batch. The forward pass constructs the stopping region from the smoothed moments, while a straight-through calculation passes the calibration gradient through the current batch moments. At the end of each epoch, we hold the scorer fixed and recompute the calibration moments over the full training sample in batches of 4,096. We use these full-sample moments to construct the stopping region for validation and reset the moving average to the same values before the next epoch. Among checkpoints satisfying the policy restrictions and an EMA-to-full-sample calibration gap below $0.1$, we select the one with the lowest fixed-draw validation loss.

Training uses 64 GHK paths per session and validation uses 256 fixed paths. This is because training requires a forward and backward likelihood pass for every mini-batch, so 64 paths keep the repeated gradient calculations tractable. Validation runs once per epoch and requires no backward pass; 256 paths therefore stabilize early stopping at much lower total cost than increasing the training draw count.

\paragraph{Comparison methods and metrics.} 

We compare the structural likelihood with conventional heuristics of turning session logs into item-level labels. When $J$ is observed, \emph{pointwise conversion} labels the converted item positive and every other displayed item negative, including items the user never inspected. \emph{Conversion + inverse-propensity weighting (IPW)} uses the same labels but weights a positive at rank $i$ by $1/\max\{\widehat{\Pr}(T\ge i),0.05\}$. \emph{Inspected items only} removes items below the observed depth. \emph{Conversion-truncated} keeps sessions with an item conversion, labels the converted item positive and earlier items negative, and discards later items. \emph{Listwise terminal choice} applies a softmax loss over the displayed list when $J\ge1$.

For the feasible exercise without $J$, we add two click-trained comparisons. The \emph{pointwise click classifier} treats every displayed item as a binary-classification observation and averages binary cross-entropy over all items, including unclicked candidates below $T_n$. For the \emph{listwise click softmax}, we drop sessions with no clicks and divide each remaining session's click vector by its number of clicks. We then compute cross-entropy against a softmax over the full displayed list and average equally across sessions. Thus, a session with two clicks places target weight $1/2$ on each clicked item. Both exercises also retain the coarser \emph{inspection label}, which uses $\mathbf 1\{i\le T\}$ as a binary relevance label, and its IPW version. Within an exercise, every method uses the same features, 48--48 tanh architecture, AdamW setup, and its own validation objective. An oracle least-squares model trained on every realized $x_i$ supplies an upper benchmark under the same feature set.

Tables~\ref{tab:simulation-full-information-results} and \ref{tab:simulation-feasible-results} report mean test performance and its Monte Carlo standard error, calculated as the sample standard deviation across the five replications divided by $\sqrt{5}$. The tables group methods by whether $J$ enters training and evaluate how well each learned score ranks true conditional mean relevance $s$.\footnote{This is the relevant target because realized relevance $x$ also contains an independent shock that is unavailable when ranking a new page; evaluating against $x$ would mix ranking performance with irreducible realization noise.} To construct graded relevance, we use the 20th, 40th, 60th, and 80th percentiles of item-level $s$ in the training sample as fixed cutoffs and assign each test item a grade from zero to four. DCG uses gain $2^{\mathrm{grade}}-1$ and discount $1/\log_2(1+\mathrm{rank})$; nDCG divides by the ideal DCG for that page. Top-1 is the fraction of pages on which the learned scorer ranks the item with the highest $s$ first. Top-1 below $T$ computes the same rate only on pages where that item was not inspected; among those pages, it reports how often the learned scorer nevertheless ranks it first.

\paragraph{Infeasible full-information benchmark.} The correctly specified depth-plus-choice estimator recovers both the relevance equation and the user's stopping parameters. Figure~\ref{fig:simulation-recovery} follows the first fit from the prespecified start $(c,x_b)=(0.1,0.75)$. At the selected epoch, the training and validation negative log likelihoods are $2.598$ and $2.635$. The estimates $(\hat c,\hat x_b)=(0.0511,1.2852)$ are close to the true values $(0.06,1.2)$. The calibrated beliefs are also close: $\hat v_0=1.0492$ against $1$, $\hat\sigma_\eta=1.4962$ against $1.5583$, and $\max_i|\hat\alpha_i-\alpha_i|=0.0292$. 
\begin{figure}[ht]
\centering
\includegraphics[width=\textwidth]{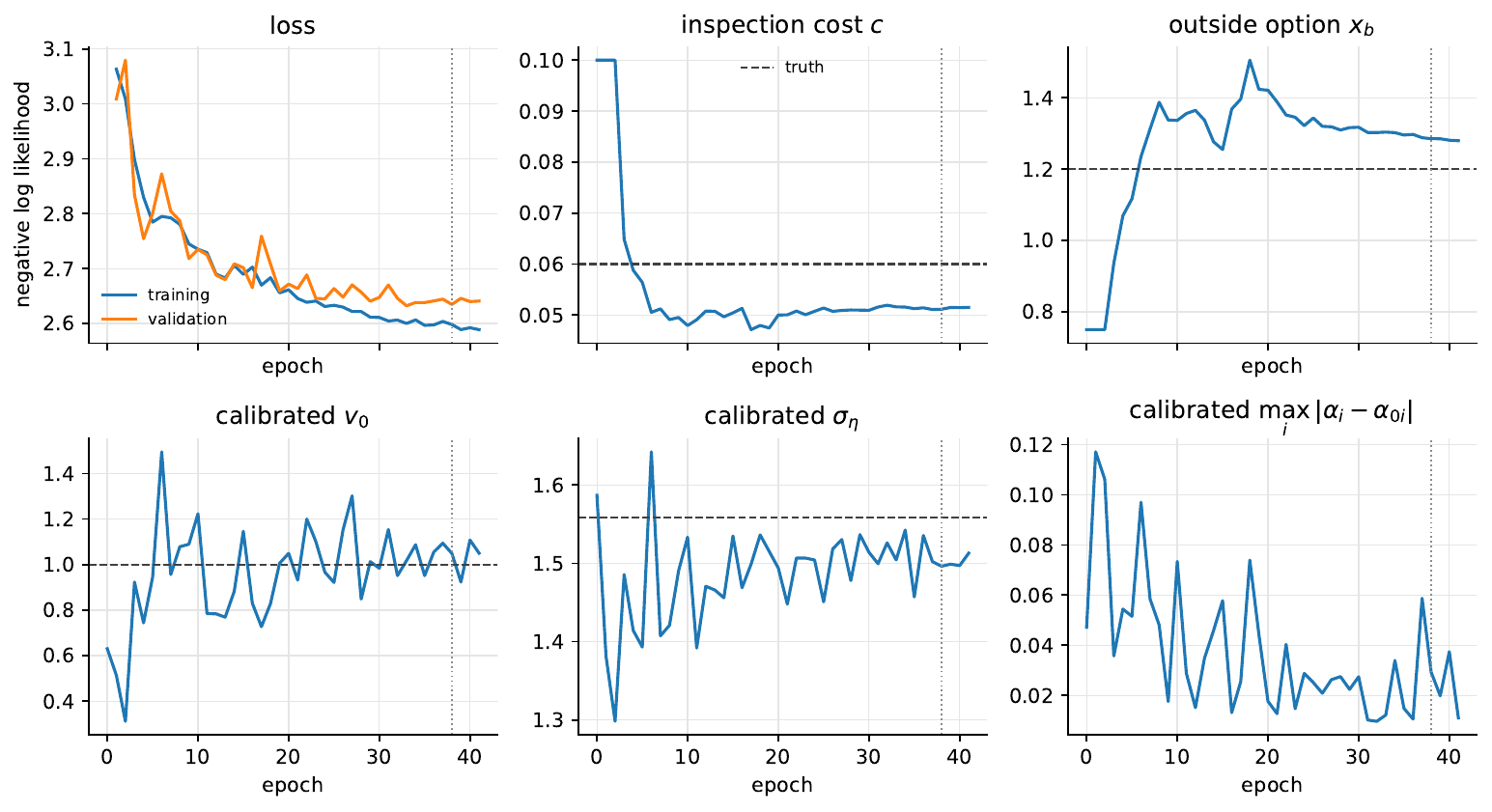}
\caption{Full-information recovery in the first depth-plus-choice replication, initialized at $(c,x_b)=(0.1,0.75)$. Each panel traces one training path; dashed horizontal lines mark the true values, and the dotted vertical line marks the selected admissible epoch. The loss panel begins at epoch 1 because the epoch-0 training loss is much larger than the remaining values. Calibrated moments are evaluated on the full training sample after each epoch.}
\label{fig:simulation-recovery}
\end{figure}

The relevance predictions recover the level as well as the ordering. Figure~\ref{fig:simulation-relevance} sorts the 60,000 test items into 20 equal-sized bins of true conditional mean relevance. Each point compares mean predicted and true relevance within a bin, and the points track the 45-degree line across the support. The shaded region shows the spread of item-level prediction errors within each bin, from the 10th to the 90th percentile.

\begin{figure}[ht]
\centering
\includegraphics[width=0.62\textwidth]{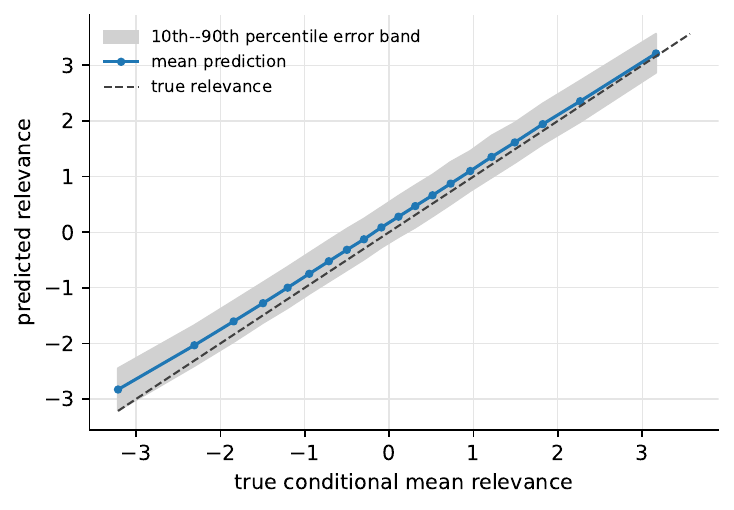}
\caption{Predicted and true conditional mean relevance for 60,000 test items in the first full-information replication. Points compare mean predicted and true relevance within 20 equal-sized bins. The shaded region shows the spread of item-level prediction errors within each bin, from the 10th to the 90th percentile. The dashed line is the 45-degree reference.}
\label{fig:simulation-relevance}
\end{figure}

Structural recovery also translates into better ranking. Table~\ref{tab:simulation-full-information-results} reports the comparison across five replications. With $J$ observed, \textit{structural: depth + choice} leads every conventional labeling method on nDCG@1, nDCG@5, nDCG@10, and overall top-1 recovery. 

Depth remains informative when $J$ is hidden. In this case the structural learner uses no item-level outcome: its only behavioral signal is the inspection depth recorded in the scroll trace. Even with this coarser information, \textit{structural: depth only} leads the inspection-label methods on every reported ranking measure. Thus, under correct specification, the likelihood recovers the structural objects and outperforms the conventional labeling rules on every overall ranking measure, with or without terminal choice.

\begin{table}[htp]
\centering
\footnotesize
\setlength{\tabcolsep}{3pt}
\renewcommand{\arraystretch}{1.15}
\begin{tabular}{lccccc}
\toprule
Model & nDCG@1 & nDCG@5 & nDCG@10 & Top-1 & Top-1 below $T$ \\
\midrule
\multicolumn{6}{l}{\textit{Panel A: terminal choice $J$ observed in training}} \\
\quad Structural: depth + choice & \begin{tabular}{@{}c@{}}0.9811\\(0.0003)\end{tabular} & \begin{tabular}{@{}c@{}}0.9846\\(0.0003)\end{tabular} & \begin{tabular}{@{}c@{}}0.9897\\(0.0003)\end{tabular} & \begin{tabular}{@{}c@{}}0.8358\\(0.0022)\end{tabular} & \begin{tabular}{@{}c@{}}0.8209\\(0.0031)\end{tabular} \\
\addlinespace[2pt]
\quad Pointwise conversion label & \begin{tabular}{@{}c@{}}0.9742\\(0.0011)\end{tabular} & \begin{tabular}{@{}c@{}}0.9811\\(0.0005)\end{tabular} & \begin{tabular}{@{}c@{}}0.9871\\(0.0004)\end{tabular} & \begin{tabular}{@{}c@{}}0.7873\\(0.0036)\end{tabular} & \begin{tabular}{@{}c@{}}0.7164\\(0.0069)\end{tabular} \\
\addlinespace[2pt]
\quad Conversion + IPW & \begin{tabular}{@{}c@{}}0.9736\\(0.0010)\end{tabular} & \begin{tabular}{@{}c@{}}0.9797\\(0.0004)\end{tabular} & \begin{tabular}{@{}c@{}}0.9866\\(0.0003)\end{tabular} & \begin{tabular}{@{}c@{}}0.7997\\(0.0040)\end{tabular} & \begin{tabular}{@{}c@{}}0.7938\\(0.0061)\end{tabular} \\
\addlinespace[2pt]
\quad Inspected items only & \begin{tabular}{@{}c@{}}0.9779\\(0.0009)\end{tabular} & \begin{tabular}{@{}c@{}}0.9830\\(0.0004)\end{tabular} & \begin{tabular}{@{}c@{}}0.9885\\(0.0004)\end{tabular} & \begin{tabular}{@{}c@{}}0.8186\\(0.0039)\end{tabular} & \begin{tabular}{@{}c@{}}0.8327\\(0.0071)\end{tabular} \\
\addlinespace[2pt]
\quad Conversion-truncated & \begin{tabular}{@{}c@{}}0.9591\\(0.0016)\end{tabular} & \begin{tabular}{@{}c@{}}0.9701\\(0.0007)\end{tabular} & \begin{tabular}{@{}c@{}}0.9805\\(0.0005)\end{tabular} & \begin{tabular}{@{}c@{}}0.7484\\(0.0069)\end{tabular} & \begin{tabular}{@{}c@{}}0.8056\\(0.0078)\end{tabular} \\
\addlinespace[2pt]
\quad Listwise terminal choice & \begin{tabular}{@{}c@{}}0.9752\\(0.0006)\end{tabular} & \begin{tabular}{@{}c@{}}0.9817\\(0.0005)\end{tabular} & \begin{tabular}{@{}c@{}}0.9877\\(0.0004)\end{tabular} & \begin{tabular}{@{}c@{}}0.7935\\(0.0020)\end{tabular} & \begin{tabular}{@{}c@{}}0.7315\\(0.0080)\end{tabular} \\
\addlinespace[4pt]
\multicolumn{6}{l}{\textit{Panel B: terminal choice $J$ not observed in training}} \\
\quad Structural: depth only & \begin{tabular}{@{}c@{}}0.9634\\(0.0032)\end{tabular} & \begin{tabular}{@{}c@{}}0.9710\\(0.0025)\end{tabular} & \begin{tabular}{@{}c@{}}0.9815\\(0.0015)\end{tabular} & \begin{tabular}{@{}c@{}}0.7793\\(0.0081)\end{tabular} & \begin{tabular}{@{}c@{}}0.7516\\(0.0088)\end{tabular} \\
\addlinespace[2pt]
\quad Inspection label & \begin{tabular}{@{}c@{}}0.5044\\(0.0051)\end{tabular} & \begin{tabular}{@{}c@{}}0.6174\\(0.0023)\end{tabular} & \begin{tabular}{@{}c@{}}0.7755\\(0.0017)\end{tabular} & \begin{tabular}{@{}c@{}}0.1491\\(0.0029)\end{tabular} & \begin{tabular}{@{}c@{}}0.0478\\(0.0029)\end{tabular} \\
\addlinespace[2pt]
\quad Inspection + IPW & \begin{tabular}{@{}c@{}}0.5044\\(0.0040)\end{tabular} & \begin{tabular}{@{}c@{}}0.6154\\(0.0033)\end{tabular} & \begin{tabular}{@{}c@{}}0.7750\\(0.0021)\end{tabular} & \begin{tabular}{@{}c@{}}0.1494\\(0.0017)\end{tabular} & \begin{tabular}{@{}c@{}}0.0507\\(0.0026)\end{tabular} \\
\addlinespace[4pt]
\multicolumn{6}{l}{\textit{Panel C: oracle}} \\
\quad Oracle regression & \begin{tabular}{@{}c@{}}0.9928\\(0.0003)\end{tabular} & \begin{tabular}{@{}c@{}}0.9938\\(0.0003)\end{tabular} & \begin{tabular}{@{}c@{}}0.9953\\(0.0003)\end{tabular} & \begin{tabular}{@{}c@{}}0.9088\\(0.0018)\end{tabular} & \begin{tabular}{@{}c@{}}0.9029\\(0.0020)\end{tabular} \\
\bottomrule
\end{tabular}
\caption{Full-information benchmark across 5 random seeds. Every learner observes the query features and all 30 item features; the oracle uses realized relevance labels. Rows are grouped by whether terminal choice $J$ is observed in training. Entries are means with Monte Carlo standard errors in parentheses. The target is conditional mean relevance; higher is better.}
\label{tab:simulation-full-information-results}
\end{table}

\paragraph{Feasible learner.} We next remove the 15 user-private item features. Every method observes only $(\mathbf g_n,\mathbf h_{n,k,1:15})$; the oracle has the same inputs but retains the advantage of realized-relevance labels. In the panel without $J$, the structural estimator trains only on $T$, while the click learners train on $\mathbf Y^{\mathrm{click}}$. Table~\ref{tab:simulation-feasible-results} reports the ranking comparison.

\begin{table}[htp]
\centering
\footnotesize
\setlength{\tabcolsep}{3pt}
\renewcommand{\arraystretch}{1.15}
\begin{tabular}{lccccc}
\toprule
Model & nDCG@1 & nDCG@5 & nDCG@10 & Top-1 & Top-1 below $T$ \\
\midrule
\multicolumn{6}{l}{\textit{Panel A: terminal choice $J$ observed in training}} \\
\quad Structural: depth + choice & \begin{tabular}{@{}c@{}}0.7269\\(0.0022)\end{tabular} & \begin{tabular}{@{}c@{}}0.7913\\(0.0013)\end{tabular} & \begin{tabular}{@{}c@{}}0.8745\\(0.0010)\end{tabular} & \begin{tabular}{@{}c@{}}0.3442\\(0.0025)\end{tabular} & \begin{tabular}{@{}c@{}}0.3012\\(0.0067)\end{tabular} \\
\addlinespace[2pt]
\quad Pointwise conversion label & \begin{tabular}{@{}c@{}}0.7213\\(0.0025)\end{tabular} & \begin{tabular}{@{}c@{}}0.7871\\(0.0013)\end{tabular} & \begin{tabular}{@{}c@{}}0.8723\\(0.0010)\end{tabular} & \begin{tabular}{@{}c@{}}0.3377\\(0.0030)\end{tabular} & \begin{tabular}{@{}c@{}}0.2627\\(0.0050)\end{tabular} \\
\addlinespace[2pt]
\quad Conversion + IPW & \begin{tabular}{@{}c@{}}0.7205\\(0.0022)\end{tabular} & \begin{tabular}{@{}c@{}}0.7867\\(0.0010)\end{tabular} & \begin{tabular}{@{}c@{}}0.8719\\(0.0009)\end{tabular} & \begin{tabular}{@{}c@{}}0.3398\\(0.0016)\end{tabular} & \begin{tabular}{@{}c@{}}0.3071\\(0.0042)\end{tabular} \\
\addlinespace[2pt]
\quad Inspected items only & \begin{tabular}{@{}c@{}}0.7216\\(0.0028)\end{tabular} & \begin{tabular}{@{}c@{}}0.7883\\(0.0017)\end{tabular} & \begin{tabular}{@{}c@{}}0.8727\\(0.0013)\end{tabular} & \begin{tabular}{@{}c@{}}0.3408\\(0.0032)\end{tabular} & \begin{tabular}{@{}c@{}}0.3179\\(0.0051)\end{tabular} \\
\addlinespace[2pt]
\quad Conversion-truncated & \begin{tabular}{@{}c@{}}0.6792\\(0.0030)\end{tabular} & \begin{tabular}{@{}c@{}}0.7599\\(0.0015)\end{tabular} & \begin{tabular}{@{}c@{}}0.8557\\(0.0012)\end{tabular} & \begin{tabular}{@{}c@{}}0.3016\\(0.0021)\end{tabular} & \begin{tabular}{@{}c@{}}0.3512\\(0.0055)\end{tabular} \\
\addlinespace[2pt]
\quad Listwise terminal choice & \begin{tabular}{@{}c@{}}0.7219\\(0.0013)\end{tabular} & \begin{tabular}{@{}c@{}}0.7874\\(0.0013)\end{tabular} & \begin{tabular}{@{}c@{}}0.8723\\(0.0009)\end{tabular} & \begin{tabular}{@{}c@{}}0.3379\\(0.0024)\end{tabular} & \begin{tabular}{@{}c@{}}0.2593\\(0.0067)\end{tabular} \\
\addlinespace[4pt]
\multicolumn{6}{l}{\textit{Panel B: terminal choice $J$ not observed in training}} \\
\quad Structural: depth only & \begin{tabular}{@{}c@{}}0.7169\\(0.0021)\end{tabular} & \begin{tabular}{@{}c@{}}0.7823\\(0.0017)\end{tabular} & \begin{tabular}{@{}c@{}}0.8696\\(0.0011)\end{tabular} & \begin{tabular}{@{}c@{}}0.3363\\(0.0027)\end{tabular} & \begin{tabular}{@{}c@{}}0.2857\\(0.0063)\end{tabular} \\
\addlinespace[2pt]
\quad Pointwise click classifier & \begin{tabular}{@{}c@{}}0.6940\\(0.0018)\end{tabular} & \begin{tabular}{@{}c@{}}0.7609\\(0.0010)\end{tabular} & \begin{tabular}{@{}c@{}}0.8577\\(0.0008)\end{tabular} & \begin{tabular}{@{}c@{}}0.3101\\(0.0012)\end{tabular} & \begin{tabular}{@{}c@{}}0.2012\\(0.0025)\end{tabular} \\
\addlinespace[2pt]
\quad Listwise click softmax & \begin{tabular}{@{}c@{}}0.7012\\(0.0019)\end{tabular} & \begin{tabular}{@{}c@{}}0.7671\\(0.0024)\end{tabular} & \begin{tabular}{@{}c@{}}0.8610\\(0.0013)\end{tabular} & \begin{tabular}{@{}c@{}}0.3151\\(0.0032)\end{tabular} & \begin{tabular}{@{}c@{}}0.2077\\(0.0067)\end{tabular} \\
\addlinespace[2pt]
\quad Inspection label & \begin{tabular}{@{}c@{}}0.5063\\(0.0041)\end{tabular} & \begin{tabular}{@{}c@{}}0.6176\\(0.0023)\end{tabular} & \begin{tabular}{@{}c@{}}0.7759\\(0.0017)\end{tabular} & \begin{tabular}{@{}c@{}}0.1489\\(0.0014)\end{tabular} & \begin{tabular}{@{}c@{}}0.0471\\(0.0021)\end{tabular} \\
\addlinespace[2pt]
\quad Inspection + IPW & \begin{tabular}{@{}c@{}}0.5061\\(0.0044)\end{tabular} & \begin{tabular}{@{}c@{}}0.6191\\(0.0029)\end{tabular} & \begin{tabular}{@{}c@{}}0.7763\\(0.0020)\end{tabular} & \begin{tabular}{@{}c@{}}0.1487\\(0.0017)\end{tabular} & \begin{tabular}{@{}c@{}}0.0480\\(0.0027)\end{tabular} \\
\addlinespace[4pt]
\multicolumn{6}{l}{\textit{Panel C: oracle}} \\
\quad Oracle regression & \begin{tabular}{@{}c@{}}0.7315\\(0.0023)\end{tabular} & \begin{tabular}{@{}c@{}}0.7940\\(0.0013)\end{tabular} & \begin{tabular}{@{}c@{}}0.8762\\(0.0011)\end{tabular} & \begin{tabular}{@{}c@{}}0.3477\\(0.0031)\end{tabular} & \begin{tabular}{@{}c@{}}0.3059\\(0.0071)\end{tabular} \\
\bottomrule
\end{tabular}
\caption{Feasible learner comparison across 5 random seeds. Every learner observes the query features and the same 15 item features used by the incumbent ranker. In Panel B, the click rows use independent Bernoulli clicks. The oracle uses realized relevance labels but the same features. Entries are means with Monte Carlo standard errors in parentheses. The target is conditional mean relevance; higher is better.}
\label{tab:simulation-feasible-results}
\end{table}

With $J$ observed, \textit{structural: depth + choice} again leads every labeling heuristic on nDCG@1, nDCG@5, nDCG@10, and overall top-1 recovery. It also nearly matches the same-feature oracle, showing that depth and terminal choice recover most of the ranking information available in these features without explicit relevance labels. Without $J$, \textit{structural: depth only} leads every non-oracle comparison on all five reported measures, including top-1 recovery when the best item was not inspected.

The gain comes from exploiting the joint information revealed by the user's inspection and stopping behavior. Binary-classification and softmax losses do not use the restrictions on latent relevance required for the user to survive to $T$. The structural likelihood, by contrast, assigns probability to those restrictions. Together, these results show that the stopping event contains ranking information that conventional labeling rules discard.

Although the structural learner outperforms conventional labeling, omitting the 15 user-private item features misspecifies the relevance model and biases the estimated user primitives. Across the five replications, the mean estimates are $\hat c=0.0436$ and $\hat x_b=0.8515$, rather than the true values $0.06$ and $1.2$. The direction and mechanism of this bias merit further study.

\subsection*{Baidu Search Logs}

The simulation establishes what the likelihood \eqref{eq:extended-objective} learns when the analytical model generates user behavior. Whether the method remains useful when real users need not precisely follow that model is an empirical question. The Baidu exercise takes this question to real data. We apply the method to a public search log and ask whether the resulting ordering agrees with separate expert judgments.

\paragraph{Data and Empirical Mapping.} We use the public Baidu search data introduced by \citet{zou2022large} and the reranking release constructed by \citet{hager2024unbiased}. The release contains real search sessions, display information, clicks, query-document features, and a separate set of expert relevance judgments. 

For this exercise, we estimate the depth-only model. We treat a result as inspected when its released \texttt{displayed\_time} is positive, and we do not use clicks as observations of terminal choice. The reason is that Baidu is a general-purpose search engine rather than a platform organized around a single transaction. Users may be looking for a quick fact, researching a topic, or comparing products and services. What constitutes a conversion therefore varies across queries, and much of the relevant downstream behavior occurs outside the search engine's view. Thus, a click is not a reliable substitute. Users often need to open a result before they can learn whether it meets their needs, and sessions in the release can contain multiple clicks. The release also lacks item-level timestamps that would order clicks and exposures. More fundamentally, even the final click need not identify the user's terminal choice. As the analytical model shows, after inspecting several results, the user may return to an earlier item or stop because no inspected item beats her outside option. We therefore estimate the structural likelihood from inspection depth alone, and do not attempt to predict clicks.

The scorer uses 781 query-document features: the 768-dimensional Hager et al.'s BERT query-document embedding, nine lexical retrieval features, and four features recording the query, document, title, and abstract lengths; position is excluded per the construction of the likelihood. The four length variables are released as raw counts. In our preprocessing, we apply $\log(1+x)$ to them and standardize every feature. 

The reranking release omits some entries from the original SERP, so the retained positions can contain gaps. In our main specification, we sort the retained results by released position and relabel them consecutively. All ranks and page lengths below refer to this relabeled list. We define inspection depth $T$ as the number of consecutively displayed results from the top of the list. We exclude sessions with no displayed result and those in which the displayed results are not consecutive from rank~1.

Page length $N$ varies across sessions, and we retain sessions with $N\in\{5,\ldots,10\}$. When $T<N$, the user stops before reaching the end of the page. When $T=N$, the session reaches the end and is treated as censored.

We assign query groups deterministically, with 90 percent in the training sample and 10 percent in the validation sample used for early stopping. After these restrictions, the two samples contain 1,410,747 and 154,511 sessions.

\paragraph{Estimation.} For consistency, we follow the same estimation setup as in the simulation. Because page length varies across sessions, we calibrate the belief moments separately for each $N$ using $\mathcal Q_N(\beta)$ from Section~\ref{sec:partial-id} and form page-length-stratified batches of approximately 512 sessions. Relative to the simulation, the scorer learning rate is $10^{-4}$, with cosine decay to $2.5\times10^{-5}$, and the primitive parameters start from $(c,x_b)=(0.1,-1)$. We use 16 GHK paths per training session and 64 fixed paths for validation. Training runs for at most 42 epochs and stops after six validation epochs without improvement.

We fit the model with five random seeds. Within each seed, we retain the checkpoint with the lowest validation negative log likelihood; we make no selection across seeds and report all five at equal weight.

\paragraph{Assessment and Comparisons.} The assessor file contains 397,522 released rows in 6,985 query groups; Hager et al.'s preprocessing identifies 382,038 unique query-document text pairs. Expert grades run from zero to four. Following Hager et al., we weight queries equally and report DCG at ranks 1, 3, 5, and 10 and MRR@10. For MRR, we follow their released implementation and treat grades of at least one as relevant.

Our main conventional benchmark is a pointwise classifier trained on the same feature set, training sample, and validation sample as the structural model. It labels every unclicked candidate negative and selects its checkpoint by validation loss. We also evaluate a reproduction of Hager et al.'s pointwise click model on their published training and validation cohorts, together with the released BM25 score. We fit both learned benchmarks with the same five random seeds as the structural model.

For each learned model and seed, we compute each metric separately for each query and average equally across queries. We then average the resulting means equally across the five seeds. In Panels A and C of Table~\ref{tab:baidu-results}, parentheses report sample standard deviations across seeds for every learned model. BM25 has no training seeds and is shown without parentheses. Panel B transcribes the means and standard deviations reported by Hager et al.

For paired comparisons, we retain the query-level metrics computed above and average each learned model's five values within query. We subtract the comparator's query-level mean from the structural model's query-level mean and average these differences equally across the 6,985 queries; BM25 enters once through its released score. Table~\ref{tab:baidu-paired} reports the paired standard error of each mean difference. The 95 percent intervals cited below use 10,000 bootstrap replicates that resample query groups with replacement and apply the same resampled query indices to both models. 

\begin{table}[htp]
\centering
\footnotesize
\setlength{\tabcolsep}{3pt}
{
\begin{tabular}{lccccc}
\toprule
Model & DCG@1 & DCG@3 & DCG@5 & DCG@10 & MRR@10 \\
\midrule
\multicolumn{6}{l}{\textit{Panel A: Our results and replications}} \\
\quad Structural: variable page length & \begin{tabular}{@{}c@{}}2.1147\\(0.0273)\end{tabular} & \begin{tabular}{@{}c@{}}4.4459\\(0.0431)\end{tabular} & \begin{tabular}{@{}c@{}}6.0753\\(0.0473)\end{tabular} & \begin{tabular}{@{}c@{}}9.1012\\(0.0854)\end{tabular} & \begin{tabular}{@{}c@{}}0.6723\\(0.0051)\end{tabular} \\
\addlinespace[2pt]
\quad Matched sample click classifier & \begin{tabular}{@{}c@{}}1.7209\\(0.0201)\end{tabular} & \begin{tabular}{@{}c@{}}3.6296\\(0.0258)\end{tabular} & \begin{tabular}{@{}c@{}}4.9797\\(0.0361)\end{tabular} & \begin{tabular}{@{}c@{}}7.5975\\(0.0516)\end{tabular} & \begin{tabular}{@{}c@{}}0.6181\\(0.0022)\end{tabular} \\
\addlinespace[2pt]
\quad Replication of Hager pointwise model & \begin{tabular}{@{}c@{}}1.7108\\(0.0103)\end{tabular} & \begin{tabular}{@{}c@{}}3.5934\\(0.0139)\end{tabular} & \begin{tabular}{@{}c@{}}4.9310\\(0.0247)\end{tabular} & \begin{tabular}{@{}c@{}}7.5146\\(0.0378)\end{tabular} & \begin{tabular}{@{}c@{}}0.6206\\(0.0024)\end{tabular} \\
\addlinespace[2pt]
\quad BM25 & 2.2107 & 4.6557 & 6.3765 & 9.5445 & 0.7162 \\
\addlinespace[4pt]
\multicolumn{6}{l}{\textit{Panel B: Results reported by Hager et al.}} \\
\quad Hager et al.'s BERT (fixed), pointwise naive & \begin{tabular}{@{}c@{}}1.705\\(0.013)\end{tabular} & \begin{tabular}{@{}c@{}}3.602\\(0.022)\end{tabular} & \begin{tabular}{@{}c@{}}4.944\\(0.023)\end{tabular} & \begin{tabular}{@{}c@{}}7.546\\(0.031)\end{tabular} & \begin{tabular}{@{}c@{}}0.619\\(0.002)\end{tabular} \\
\addlinespace[2pt]
\quad Hager et al.'s BERT (fixed), listwise naive & \begin{tabular}{@{}c@{}}1.798\\(0.008)\end{tabular} & \begin{tabular}{@{}c@{}}3.768\\(0.020)\end{tabular} & \begin{tabular}{@{}c@{}}5.167\\(0.029)\end{tabular} & \begin{tabular}{@{}c@{}}7.844\\(0.046)\end{tabular} & \begin{tabular}{@{}c@{}}0.631\\(0.001)\end{tabular} \\
\addlinespace[2pt]
\quad Hager et al.'s BERT (fixed), LambdaRank naive & \begin{tabular}{@{}c@{}}1.931\\(0.016)\end{tabular} & \begin{tabular}{@{}c@{}}3.993\\(0.029)\end{tabular} & \begin{tabular}{@{}c@{}}5.452\\(0.040)\end{tabular} & \begin{tabular}{@{}c@{}}8.233\\(0.055)\end{tabular} & \begin{tabular}{@{}c@{}}0.648\\(0.003)\end{tabular} \\
\addlinespace[2pt]
\quad Hager et al.'s BERT (fixed), LambdaRank PairD & \begin{tabular}{@{}c@{}}1.932\\(0.007)\end{tabular} & \begin{tabular}{@{}c@{}}3.985\\(0.019)\end{tabular} & \begin{tabular}{@{}c@{}}5.438\\(0.022)\end{tabular} & \begin{tabular}{@{}c@{}}8.216\\(0.038)\end{tabular} & \begin{tabular}{@{}c@{}}0.646\\(0.003)\end{tabular} \\
\addlinespace[2pt]
\quad Hager et al.'s BERT cross-encoder, listwise naive & --- & --- & --- & 8.478 & --- \\
\addlinespace[2pt]
\quad Hager et al.'s BM25 & 2.211 & 4.656 & 6.377 & 9.544 & 0.716 \\
\addlinespace[4pt]
\multicolumn{6}{l}{\textit{Panel C: Ablation and sensitivity}} \\
\quad Structural: page length five & \begin{tabular}{@{}c@{}}2.0528\\(0.0242)\end{tabular} & \begin{tabular}{@{}c@{}}4.2957\\(0.0643)\end{tabular} & \begin{tabular}{@{}c@{}}5.8731\\(0.0883)\end{tabular} & \begin{tabular}{@{}c@{}}8.8552\\(0.1259)\end{tabular} & \begin{tabular}{@{}c@{}}0.6538\\(0.0043)\end{tabular} \\
\addlinespace[2pt]
\quad Structural: contiguous released positions & \begin{tabular}{@{}c@{}}2.1197\\(0.0194)\end{tabular} & \begin{tabular}{@{}c@{}}4.5027\\(0.0276)\end{tabular} & \begin{tabular}{@{}c@{}}6.1716\\(0.0427)\end{tabular} & \begin{tabular}{@{}c@{}}9.2671\\(0.0651)\end{tabular} & \begin{tabular}{@{}c@{}}0.6706\\(0.0055)\end{tabular} \\
\addlinespace[2pt]
\quad Matched contiguous-sample click classifier & \begin{tabular}{@{}c@{}}1.7554\\(0.0131)\end{tabular} & \begin{tabular}{@{}c@{}}3.6960\\(0.0191)\end{tabular} & \begin{tabular}{@{}c@{}}5.0671\\(0.0201)\end{tabular} & \begin{tabular}{@{}c@{}}7.7234\\(0.0340)\end{tabular} & \begin{tabular}{@{}c@{}}0.6209\\(0.0028)\end{tabular} \\
\bottomrule
\end{tabular}
}
\caption{Ranking performance on the Baidu assessor sample. Panel A reports our estimates and replications. In Panels A and C, parentheses report sample standard deviations across five network seeds for every learned model; BM25 has no training seeds and is shown without parentheses. Panel B transcribes the comparable values reported in Tables~2 and 3 of \citet{hager2024unbiased}; its parentheses are their reported standard deviations across five runs. Panel C reports the page-length-five ablation and a sensitivity analysis in which both models use sessions with contiguous released positions.}
\label{tab:baidu-results}
\end{table}

\begin{table}[ht]
\centering
\footnotesize
\setlength{\tabcolsep}{3pt}
{
\begin{tabular}{lccccc}
\toprule
Comparison against & DCG@1 & DCG@3 & DCG@5 & DCG@10 & MRR@10 \\
\midrule
Matched sample click classifier & \begin{tabular}{@{}c@{}}0.3938\\(0.0263)\end{tabular} & \begin{tabular}{@{}c@{}}0.8163\\(0.0419)\end{tabular} & \begin{tabular}{@{}c@{}}1.0956\\(0.0505)\end{tabular} & \begin{tabular}{@{}c@{}}1.5037\\(0.0629)\end{tabular} & \begin{tabular}{@{}c@{}}0.0542\\(0.0038)\end{tabular} \\
\addlinespace[2pt]
Page length five & \begin{tabular}{@{}c@{}}0.0619\\(0.0139)\end{tabular} & \begin{tabular}{@{}c@{}}0.1502\\(0.0187)\end{tabular} & \begin{tabular}{@{}c@{}}0.2022\\(0.0208)\end{tabular} & \begin{tabular}{@{}c@{}}0.2460\\(0.0231)\end{tabular} & \begin{tabular}{@{}c@{}}0.0185\\(0.0019)\end{tabular} \\
\addlinespace[2pt]
Replication of Hager pointwise model & \begin{tabular}{@{}c@{}}0.4039\\(0.0270)\end{tabular} & \begin{tabular}{@{}c@{}}0.8525\\(0.0425)\end{tabular} & \begin{tabular}{@{}c@{}}1.1443\\(0.0511)\end{tabular} & \begin{tabular}{@{}c@{}}1.5867\\(0.0635)\end{tabular} & \begin{tabular}{@{}c@{}}0.0517\\(0.0039)\end{tabular} \\
\addlinespace[2pt]
BM25 & \begin{tabular}{@{}c@{}}-0.0961\\(0.0251)\end{tabular} & \begin{tabular}{@{}c@{}}-0.2097\\(0.0354)\end{tabular} & \begin{tabular}{@{}c@{}}-0.3012\\(0.0405)\end{tabular} & \begin{tabular}{@{}c@{}}-0.4432\\(0.0473)\end{tabular} & \begin{tabular}{@{}c@{}}-0.0439\\(0.0036)\end{tabular} \\
\bottomrule
\end{tabular}
}
\caption{Paired differences for the main variable-page-length structural specification. Each cell reports the average query-level difference between the structural model and the row comparator, with its paired standard error across the 6,985 assessor queries in parentheses. For learned comparators, both models are averaged across their five seeds within query before differences are formed; BM25 enters once through its released score. Positive values favor the structural model.}
\label{tab:baidu-paired}
\end{table}

\paragraph{Main Findings.} The main comparison asks what inspection depth contributes when the feature representation and behavioral sample are held fixed. The structural model uses the same 781 features and sessions as the matched-sample click classifier, but learns from inspection depth rather than treating unclicked results as negative labels. It leads the classifier on every reported metric, and the query-bootstrap intervals exclude zero at every cutoff. For example, for DCG@10, the 95 percent interval is $[1.3809,1.6296]$. Because the two models use the same features and training queries, this advantage comes from the information in inspection and stopping behavior.

The comparison with Hager et al.\ points in the same direction. Our pointwise reproduction closely matches their published pointwise results, confirming that the evaluation pipelines are aligned. The structural model also exceeds every click-trained reranker they report: its DCG@10 is 7.4 percent above their strongest result, a BERT cross-encoder trained with the listwise naive objective. Together, these results show that inspection depth contains ranking information beyond what the click-trained models recover. 

All comparisons in this exercise use expert grades as the evaluation target. These grades measure an assessor's judgment of query-document relevance, which need not coincide with the user's private notion of relevance that governed her decisions. Because the user's private relevance is unobserved, the expert grades provide the best available external benchmark for evaluating the rankings learned from behavior. BM25's advantage means that its ordering aligns more closely with that benchmark, not necessarily with the private relevance that generated each session.

\paragraph{Ablation and Sensitivity Analyses.} The \emph{page length five} ablation holds the training sample and estimator constant but censors every session at rank~5. This coarser specification exceeds every click-trained result reported by Hager et al.\ The variable-page-length model then improves on the fixed-five specification on every measure for every matched training seed. Thus, the first five positions already contain enough stopping information to outperform the click-trained benchmarks, while inspection at ranks 6 through 10 provides additional relevance information.

The \emph{contiguous released positions} sensitivity analysis asks whether relabeling retained results consecutively drives the main comparison. It restricts the sample to sessions whose retained candidates occupy the literal contiguous positions $1,\ldots,N$, leaving 619,319 training and 67,959 validation sessions. Its ranking performance remains close to the main specification. Thus, the result does not depend on closing the gaps in the reranking release.

\section{Summary}

We study a Bayesian user who inspects a noisily ranked list. Linear traversal is optimal, and the stopping decision follows a standout rule: the user continues until her best find exceeds her posterior mean of page quality by a depth-dependent threshold. This policy reduces the state to a one-dimensional lead chain whose closed-form transition yields the full distribution of inspection depth. It also separates observationally similar stops into trust, commit, and cut-losses regimes, explains why a more informative ranker can concentrate inspections at rank~$1$, and predicts a nonmonotone relationship between realized relevance and continuation.

The same stopping rule turns inspection depth into a relevance-learning objective. Survival confines the latent relevance history to a polyhedron; an uncensored stop adds a terminal condition, and an observed choice identifies the converted winner. A differentiable sequential GHK simulator evaluates the resulting Gaussian session likelihood in time linear in depth. Normalization and moment calibration leave the relevance model, inspection cost, and outside option to be estimated from depth and, when available, terminal choice.

The experiments show that inspection depth contains useful ranking information. Under correct specification, the structural estimator recovers conditional mean relevance and the user primitives; with only the incumbent's item features, it continues to outperform conventional click-based learners. Depth alone remains informative when terminal choice is unavailable, both in simulation and in the Baidu logs.

The natural empirical extensions are to allow heterogeneity in $(c,x_b)$ and to apply the depth-and-choice likelihood in settings where terminal choices are observed. On the theory side, allowing unknown signal variances or carrying the finite-pool order statistics directly into the state would relax the conjugate and quantile reductions. Endogenizing the system's item selection would then connect the user's stopping policy to the ranker's design problem.


\appendix
\numberwithin{equation}{section}
\numberwithin{figure}{section}
\numberwithin{table}{section}

\section{Numerical Validation: Optimal versus Myopic Reservation Gaps}\label{app:numerical}

Proposition~\ref{prop:myopic} gives the myopic reservation gap $\kappa_t$ in closed form and proves it is strictly decreasing, with explicit endpoints $\kappa_0$ and $\kappa_\infty$. The optimal gap $\kappa_t^\star$ of Proposition~\ref{prop:reservation} has no closed form. This appendix documents numerically that $\kappa_t^\star$ inherits the same shape and that the closed-form $\kappa_t$ is an accurate proxy for it, so the option-value intuition read off the myopic rule transfers to the optimal policy. Lemma~\ref{lem:myopic-lb} already establishes analytically that $\kappa_t \le \kappa_t^\star$ with equality at $t = N-1$; the numerical calculations below quantify how small the difference is and confirm that $\kappa_t^\star$ is itself strictly decreasing.

\paragraph{Method.} By Steps~S8 and~S9 in the proof of Proposition~\ref{prop:reservation}, the value function is translation equivariant: $V_t(M, m) = m + V_t(M - m, 0)$. Writing $W_t(L) := V_t(L, 0)$, the value function is expressed in the lead, with $V_t(M, m) = m + W_t(L)$ and terminal condition $W_N(L) = L$. We obtain $\{\kappa_t^\star\}$ by backward induction on $L$, iterating three steps from $t = N-1$ down to $0$:
\begin{enumerate}
\item \emph{Continuation value.} Form $C_t(L, 0)$ from \eqref{eq:Cdef}, the expected payoff of inspecting rank $t+1$ once more: $C_t(L, 0) = -c + \E\,W_{t+1}(L')$, where $L'$ is the resulting lead under the predictive draw $x_{t+1} \sim \N(\alpha_{t+1},\, v_t + \sigma_\eta^2)$.
\item \emph{Value update.} Take the better of stopping and continuing, $W_t(L) = \max\{L,\, C_t(L, 0)\}$.
\item \emph{Threshold.} The stop-continue gap $\Delta_t(L, 0) = C_t(L, 0) - L$ is strictly decreasing in $L$ (Step~S4) and vanishes at a unique $L = \alpha_{t+1} + \kappa_t^\star$, which defines $\kappa_t^\star$.
\end{enumerate}
Given the thresholds, the stop-epoch distributions come from direct simulation: each session draws a page mean $\mu$, then item relevances $x_i = \mu + \alpha_i + \eta_i$ with $\eta_i \sim \N(0, \sigma_\eta^2)$. The optimal and myopic policies are run on the same draws, so the comparison is paired session by session.

\paragraph{Configurations.} We fix $m_0 = 0$ and $\sigma_x = 1$ (see translation and scale invariance below) and vary the primitives that carry economic content. Table~\ref{tab:numerical-validation} lists four configurations: a baseline (C1), a cheap-search case with deeper sessions (C2), a diffuse-prior long-list case (C3) chosen to maximize the multi-step option value that the myopic rule discards, and a more informative ranker with an attractive outside option (C4).

\begin{table}[h]
\centering
\begin{tabular}{llcccccc}
\toprule
& Configuration $(N,\rho,v_0,c,x_b)$ & $\max_t|\kappa_t^\star-\kappa_t|$ & rel. & $\E[T]_{\mathrm{opt}}$ & $\E[T]_{\mathrm{myo}}$ & $d_{\mathrm{TV}}$ & agree \\
\midrule
C1 & $(10,0.50,1,0.1,0)$ & 0.002 & 0.2\% & 2.236 & 2.236 & 0.000 & 100.0\% \\
C2 & $(10,0.50,1,0.03,0)$ & 0.049 & 2.5\% & 3.742 & 3.736 & 0.003 & 99.5\% \\
C3 & $(12,0.40,4,0.1,0)$ & 0.073 & 2.6\% & 2.401 & 2.398 & 0.001 & 99.8\% \\
C4 & $(10,0.80,1,0.1,0.4)$ & 0.000 & 0.0\% & 1.282 & 1.282 & 0.000 & 100.0\% \\
\bottomrule
\end{tabular}
\caption{Closeness of the myopic approximation. For each configuration: the maximum absolute and relative gap between $\kappa_t^\star$ and $\kappa_t$ over $t$, the expected inspection depth under each policy, the total-variation distance $d_{\mathrm{TV}}$ between the two stop-epoch distributions, and the fraction of simulated sessions in which the two policies stop at the same epoch.}
\label{tab:numerical-validation}
\end{table}

\paragraph{Findings.} Figure~\ref{fig:kappa-compare} overlays $\kappa_t^\star$ and $\kappa_t$ for each configuration. In every panel $\kappa_t^\star$ is strictly decreasing in $t$ and lies weakly above $\kappa_t$, as Lemma~\ref{lem:myopic-lb} requires, and the two curves meet at $N-1$. Their difference is concentrated at the first epoch, where the one-step rule forgoes the most option value, and is largest under the diffuse prior of C3. Even there it is $0.073$ in absolute terms and $2.6\%$ in relative terms (Table~\ref{tab:numerical-validation}), and at every later epoch the two gaps are visually indistinguishable. 

Because the thresholds nearly coincide, so do the induced behaviors: Figure~\ref{fig:depth-compare} shows nearly overlapping stop-epoch distributions under the two policies. Across the four configurations the total-variation distance between the optimal and myopic depth distributions never exceeds $0.003$, expected depth differs by less than $0.01$, and the two rules select the same stopping epoch in roughly $99.5\%$ or more of simulated sessions; the expected-utility loss from following the closed-form rule stays below $10^{-3}$ in relevance units throughout. The closed-form myopic gap is therefore an accurate and analytically transparent stand-in for the optimal one, and the qualitative reading of Proposition~\ref{prop:myopic}, namely the option-value decomposition, the monotone decline in $t$, and the irreducible floor $\kappa_\infty$, carries over to the optimal policy.

\begin{figure}[ht]
\centering
\begin{tikzpicture}[x=1cm,y=1cm,>=stealth,line cap=round,font=\footnotesize]
\draw[black!50] (1.050,5.050) rectangle (6.050,8.250);
\draw[black!10] (1.050,5.679) -- (6.050,5.679);
\draw[black!50] (1.050,5.679) -- (0.980,5.679) node[left,scale=0.6,black]{0.60};
\draw[black!10] (1.050,6.399) -- (6.050,6.399);
\draw[black!50] (1.050,6.399) -- (0.980,6.399) node[left,scale=0.6,black]{0.80};
\draw[black!10] (1.050,7.119) -- (6.050,7.119);
\draw[black!50] (1.050,7.119) -- (0.980,7.119) node[left,scale=0.6,black]{1.00};
\draw[black!10] (1.050,7.840) -- (6.050,7.840);
\draw[black!50] (1.050,7.840) -- (0.980,7.840) node[left,scale=0.6,black]{1.20};
\draw[black!50] (1.050,5.050) -- (1.050,4.980) node[below,scale=0.6,black]{0};
\draw[black!50] (2.717,5.050) -- (2.717,4.980) node[below,scale=0.6,black]{3};
\draw[black!50] (4.383,5.050) -- (4.383,4.980) node[below,scale=0.6,black]{6};
\draw[black!50] (6.050,5.050) -- (6.050,4.980) node[below,scale=0.6,black]{9};
\draw[black!45,dotted,thick] (1.050,5.317) -- (6.050,5.317);
\node[anchor=west,scale=0.58,black!55] at (1.130,5.447) {$\kappa_\infty$};
\draw[blue!55!black,thick] plot coordinates {(1.050,7.983) (1.606,6.320) (2.161,5.939) (2.717,5.768) (3.272,5.671) (3.828,5.608) (4.383,5.564) (4.939,5.532) (5.494,5.507) (6.050,5.487)};
\fill[blue!55!black] (1.050,7.983) circle (1.0pt);
\fill[blue!55!black] (1.606,6.320) circle (1.0pt);
\fill[blue!55!black] (2.161,5.939) circle (1.0pt);
\fill[blue!55!black] (2.717,5.768) circle (1.0pt);
\fill[blue!55!black] (3.272,5.671) circle (1.0pt);
\fill[blue!55!black] (3.828,5.608) circle (1.0pt);
\fill[blue!55!black] (4.383,5.564) circle (1.0pt);
\fill[blue!55!black] (4.939,5.532) circle (1.0pt);
\fill[blue!55!black] (5.494,5.507) circle (1.0pt);
\fill[blue!55!black] (6.050,5.487) circle (1.0pt);
\draw[red!75!black,semithick,dashed] plot coordinates {(1.050,7.975) (1.606,6.320) (2.161,5.939) (2.717,5.768) (3.272,5.671) (3.828,5.608) (4.383,5.564) (4.939,5.532) (5.494,5.507) (6.050,5.487)};
\node[anchor=south,scale=0.7] at (3.550,8.280) {C1: $N{=}10,\rho{=}0.50,v_0{=}1,c{=}0.1,x_b{=}0$};
\node[anchor=north,scale=0.64] at (3.550,4.650) {epoch $t$};
\node[anchor=south,rotate=90,scale=0.64] at (0.470,6.650) {reservation gap};
\draw[blue!55!black,thick] (4.000,8.030) -- (4.420,8.030);
\node[anchor=west,scale=0.58] at (4.480,8.030) {$\kappa_t^\star$ (optimal)};
\draw[red!70!black,thick,densely dashed] (4.000,7.710) -- (4.420,7.710);
\node[anchor=west,scale=0.58] at (4.480,7.710) {$\kappa_t$ (myopic)};
\draw[black!50] (8.450,5.050) rectangle (13.450,8.250);
\draw[black!10] (8.450,5.464) -- (13.450,5.464);
\draw[black!50] (8.450,5.464) -- (8.380,5.464) node[left,scale=0.6,black]{1.00};
\draw[black!10] (8.450,6.106) -- (13.450,6.106);
\draw[black!50] (8.450,6.106) -- (8.380,6.106) node[left,scale=0.6,black]{1.25};
\draw[black!10] (8.450,6.748) -- (13.450,6.748);
\draw[black!50] (8.450,6.748) -- (8.380,6.748) node[left,scale=0.6,black]{1.50};
\draw[black!10] (8.450,7.390) -- (13.450,7.390);
\draw[black!50] (8.450,7.390) -- (8.380,7.390) node[left,scale=0.6,black]{1.75};
\draw[black!50] (8.450,5.050) -- (8.450,4.980) node[below,scale=0.6,black]{0};
\draw[black!50] (10.117,5.050) -- (10.117,4.980) node[below,scale=0.6,black]{3};
\draw[black!50] (11.783,5.050) -- (11.783,4.980) node[below,scale=0.6,black]{6};
\draw[black!50] (13.450,5.050) -- (13.450,4.980) node[below,scale=0.6,black]{9};
\draw[black!45,dotted,thick] (8.450,5.317) -- (13.450,5.317);
\node[anchor=west,scale=0.58,black!55] at (8.530,5.447) {$\kappa_\infty$};
\draw[blue!55!black,thick] plot coordinates {(8.450,7.983) (9.006,6.328) (9.561,5.938) (10.117,5.763) (10.672,5.665) (11.228,5.604) (11.783,5.561) (12.339,5.529) (12.894,5.505) (13.450,5.485)};
\fill[blue!55!black] (8.450,7.983) circle (1.0pt);
\fill[blue!55!black] (9.006,6.328) circle (1.0pt);
\fill[blue!55!black] (9.561,5.938) circle (1.0pt);
\fill[blue!55!black] (10.117,5.763) circle (1.0pt);
\fill[blue!55!black] (10.672,5.665) circle (1.0pt);
\fill[blue!55!black] (11.228,5.604) circle (1.0pt);
\fill[blue!55!black] (11.783,5.561) circle (1.0pt);
\fill[blue!55!black] (12.339,5.529) circle (1.0pt);
\fill[blue!55!black] (12.894,5.505) circle (1.0pt);
\fill[blue!55!black] (13.450,5.485) circle (1.0pt);
\draw[red!75!black,semithick,dashed] plot coordinates {(8.450,7.858) (9.006,6.294) (9.561,5.926) (10.117,5.760) (10.672,5.665) (11.228,5.604) (11.783,5.561) (12.339,5.529) (12.894,5.504) (13.450,5.485)};
\node[anchor=south,scale=0.7] at (10.950,8.280) {C2: $N{=}10,\rho{=}0.50,v_0{=}1,c{=}0.03,x_b{=}0$};
\node[anchor=north,scale=0.64] at (10.950,4.650) {epoch $t$};
\node[anchor=south,rotate=90,scale=0.64] at (7.870,6.650) {reservation gap};
\draw[black!50] (1.050,0.750) rectangle (6.050,3.950);
\draw[black!10] (1.050,1.505) -- (6.050,1.505);
\draw[black!50] (1.050,1.505) -- (0.980,1.505) node[left,scale=0.6,black]{1.00};
\draw[black!10] (1.050,2.098) -- (6.050,2.098);
\draw[black!50] (1.050,2.098) -- (0.980,2.098) node[left,scale=0.6,black]{1.50};
\draw[black!10] (1.050,2.691) -- (6.050,2.691);
\draw[black!50] (1.050,2.691) -- (0.980,2.691) node[left,scale=0.6,black]{2.00};
\draw[black!10] (1.050,3.284) -- (6.050,3.284);
\draw[black!50] (1.050,3.284) -- (0.980,3.284) node[left,scale=0.6,black]{2.50};
\draw[black!50] (1.050,0.750) -- (1.050,0.680) node[below,scale=0.6,black]{0};
\draw[black!50] (2.414,0.750) -- (2.414,0.680) node[below,scale=0.6,black]{3};
\draw[black!50] (4.232,0.750) -- (4.232,0.680) node[below,scale=0.6,black]{7};
\draw[black!50] (6.050,0.750) -- (6.050,0.680) node[below,scale=0.6,black]{11};
\draw[black!45,dotted,thick] (1.050,1.017) -- (6.050,1.017);
\node[anchor=west,scale=0.58,black!55] at (1.130,1.147) {$\kappa_\infty$};
\draw[blue!55!black,thick] plot coordinates {(1.050,3.683) (1.505,1.499) (1.959,1.285) (2.414,1.202) (2.868,1.159) (3.323,1.132) (3.777,1.114) (4.232,1.101) (4.686,1.091) (5.141,1.083) (5.595,1.076) (6.050,1.071)};
\fill[blue!55!black] (1.050,3.683) circle (1.0pt);
\fill[blue!55!black] (1.505,1.499) circle (1.0pt);
\fill[blue!55!black] (1.959,1.285) circle (1.0pt);
\fill[blue!55!black] (2.414,1.202) circle (1.0pt);
\fill[blue!55!black] (2.868,1.159) circle (1.0pt);
\fill[blue!55!black] (3.323,1.132) circle (1.0pt);
\fill[blue!55!black] (3.777,1.114) circle (1.0pt);
\fill[blue!55!black] (4.232,1.101) circle (1.0pt);
\fill[blue!55!black] (4.686,1.091) circle (1.0pt);
\fill[blue!55!black] (5.141,1.083) circle (1.0pt);
\fill[blue!55!black] (5.595,1.076) circle (1.0pt);
\fill[blue!55!black] (6.050,1.071) circle (1.0pt);
\draw[red!75!black,semithick,dashed] plot coordinates {(1.050,3.597) (1.505,1.492) (1.959,1.283) (2.414,1.202) (2.868,1.159) (3.323,1.132) (3.777,1.114) (4.232,1.101) (4.686,1.090) (5.141,1.083) (5.595,1.076) (6.050,1.071)};
\node[anchor=south,scale=0.7] at (3.550,3.980) {C3: $N{=}12,\rho{=}0.40,v_0{=}4,c{=}0.1,x_b{=}0$};
\node[anchor=north,scale=0.64] at (3.550,0.350) {epoch $t$};
\node[anchor=south,rotate=90,scale=0.64] at (0.470,2.350) {reservation gap};
\draw[black!50] (8.450,0.750) rectangle (13.450,3.950);
\draw[black!10] (8.450,1.055) -- (13.450,1.055);
\draw[black!50] (8.450,1.055) -- (8.380,1.055) node[left,scale=0.6,black]{0.20};
\draw[black!10] (8.450,1.679) -- (13.450,1.679);
\draw[black!50] (8.450,1.679) -- (8.380,1.679) node[left,scale=0.6,black]{0.40};
\draw[black!10] (8.450,2.303) -- (13.450,2.303);
\draw[black!50] (8.450,2.303) -- (8.380,2.303) node[left,scale=0.6,black]{0.60};
\draw[black!10] (8.450,2.927) -- (13.450,2.927);
\draw[black!50] (8.450,2.927) -- (8.380,2.927) node[left,scale=0.6,black]{0.80};
\draw[black!10] (8.450,3.551) -- (13.450,3.551);
\draw[black!50] (8.450,3.551) -- (8.380,3.551) node[left,scale=0.6,black]{1.00};
\draw[black!50] (8.450,0.750) -- (8.450,0.680) node[below,scale=0.6,black]{0};
\draw[black!50] (10.117,0.750) -- (10.117,0.680) node[below,scale=0.6,black]{3};
\draw[black!50] (11.783,0.750) -- (11.783,0.680) node[below,scale=0.6,black]{6};
\draw[black!50] (13.450,0.750) -- (13.450,0.680) node[below,scale=0.6,black]{9};
\draw[black!45,dotted,thick] (8.450,1.017) -- (13.450,1.017);
\node[anchor=west,scale=0.58,black!55] at (8.530,1.147) {$\kappa_\infty$};
\draw[blue!55!black,thick] plot coordinates {(8.450,3.683) (9.006,1.589) (9.561,1.342) (10.117,1.244) (10.672,1.191) (11.228,1.159) (11.783,1.136) (12.339,1.120) (12.894,1.108) (13.450,1.098)};
\fill[blue!55!black] (8.450,3.683) circle (1.0pt);
\fill[blue!55!black] (9.006,1.589) circle (1.0pt);
\fill[blue!55!black] (9.561,1.342) circle (1.0pt);
\fill[blue!55!black] (10.117,1.244) circle (1.0pt);
\fill[blue!55!black] (10.672,1.191) circle (1.0pt);
\fill[blue!55!black] (11.228,1.159) circle (1.0pt);
\fill[blue!55!black] (11.783,1.136) circle (1.0pt);
\fill[blue!55!black] (12.339,1.120) circle (1.0pt);
\fill[blue!55!black] (12.894,1.108) circle (1.0pt);
\fill[blue!55!black] (13.450,1.098) circle (1.0pt);
\draw[red!75!black,semithick,dashed] plot coordinates {(8.450,3.683) (9.006,1.589) (9.561,1.341) (10.117,1.243) (10.672,1.191) (11.228,1.158) (11.783,1.136) (12.339,1.120) (12.894,1.107) (13.450,1.098)};
\node[anchor=south,scale=0.7] at (10.950,3.980) {C4: $N{=}10,\rho{=}0.80,v_0{=}1,c{=}0.1,x_b{=}0.4$};
\node[anchor=north,scale=0.64] at (10.950,0.350) {epoch $t$};
\node[anchor=south,rotate=90,scale=0.64] at (7.870,2.350) {reservation gap};
\end{tikzpicture}
\caption{Optimal reservation gap $\kappa_t^\star$ (solid, with markers) and closed-form myopic gap $\kappa_t$ (dashed) as functions of epoch $t$, for the four configurations of Table~\ref{tab:numerical-validation}. The dotted line marks the no-learning floor $\kappa_\infty$. In every case $\kappa_t^\star$ is strictly decreasing and lies weakly above $\kappa_t$, with the two coinciding at $N-1$.}
\label{fig:kappa-compare}
\end{figure}

\begin{figure}[ht]
\centering
\begin{tikzpicture}[x=1cm,y=1cm,>=stealth,line cap=round,font=\footnotesize]
\draw[black!50] (1.050,5.050) rectangle (6.050,8.250);
\draw[black!10] (1.050,5.050) -- (6.050,5.050);
\draw[black!50] (1.050,5.050) -- (0.980,5.050) node[left,scale=0.6,black]{0.00};
\draw[black!10] (1.050,6.166) -- (6.050,6.166);
\draw[black!50] (1.050,6.166) -- (0.980,6.166) node[left,scale=0.6,black]{0.20};
\draw[black!10] (1.050,7.282) -- (6.050,7.282);
\draw[black!50] (1.050,7.282) -- (0.980,7.282) node[left,scale=0.6,black]{0.40};
\draw[black!50] (1.534,5.050) -- (1.534,4.980) node[below,scale=0.58,black]{1};
\fill[blue!28,draw=blue!55!black,line width=0.4pt] (1.211,5.050) rectangle (1.534,5.956);
\fill[red!20,draw=red!70!black,line width=0.4pt] (1.534,5.050) rectangle (1.856,5.956);
\draw[black!50] (2.340,5.050) -- (2.340,4.980) node[below,scale=0.58,black]{2};
\fill[blue!28,draw=blue!55!black,line width=0.4pt] (2.018,5.050) rectangle (2.340,7.857);
\fill[red!20,draw=red!70!black,line width=0.4pt] (2.340,5.050) rectangle (2.663,7.857);
\draw[black!50] (3.147,5.050) -- (3.147,4.980) node[below,scale=0.58,black]{3};
\fill[blue!28,draw=blue!55!black,line width=0.4pt] (2.824,5.050) rectangle (3.147,6.578);
\fill[red!20,draw=red!70!black,line width=0.4pt] (3.147,5.050) rectangle (3.469,6.578);
\draw[black!50] (3.953,5.050) -- (3.953,4.980) node[below,scale=0.58,black]{4};
\fill[blue!28,draw=blue!55!black,line width=0.4pt] (3.631,5.050) rectangle (3.953,5.370);
\fill[red!20,draw=red!70!black,line width=0.4pt] (3.953,5.050) rectangle (4.276,5.370);
\draw[black!50] (4.760,5.050) -- (4.760,4.980) node[below,scale=0.58,black]{5};
\fill[blue!28,draw=blue!55!black,line width=0.4pt] (4.437,5.050) rectangle (4.760,5.068);
\fill[red!20,draw=red!70!black,line width=0.4pt] (4.760,5.050) rectangle (5.082,5.068);
\draw[black!50] (5.566,5.050) -- (5.566,4.980) node[below,scale=0.58,black]{6};
\fill[blue!28,draw=blue!55!black,line width=0.4pt] (5.244,5.050) rectangle (5.566,5.050);
\fill[red!20,draw=red!70!black,line width=0.4pt] (5.566,5.050) rectangle (5.889,5.050);
\node[anchor=south,scale=0.7] at (3.550,8.280) {C1};
\node[anchor=north,scale=0.64] at (3.550,4.650) {inspection depth $T$};
\node[anchor=south,rotate=90,scale=0.64] at (0.470,6.650) {$\Pr(T=t)$};
\fill[blue!28,draw=blue!55!black] (4.100,7.990) rectangle (4.360,8.170);
\node[anchor=west,scale=0.58] at (4.420,8.070) {optimal};
\fill[red!20,draw=red!70!black] (4.100,7.670) rectangle (4.360,7.850);
\node[anchor=west,scale=0.58] at (4.420,7.760) {myopic};
\draw[black!50] (8.450,5.050) rectangle (13.450,8.250);
\draw[black!10] (8.450,5.050) -- (13.450,5.050);
\draw[black!50] (8.450,5.050) -- (8.380,5.050) node[left,scale=0.6,black]{0.00};
\draw[black!10] (8.450,6.057) -- (13.450,6.057);
\draw[black!50] (8.450,6.057) -- (8.380,6.057) node[left,scale=0.6,black]{0.10};
\draw[black!10] (8.450,7.064) -- (13.450,7.064);
\draw[black!50] (8.450,7.064) -- (8.380,7.064) node[left,scale=0.6,black]{0.20};
\draw[black!50] (8.867,5.050) -- (8.867,4.980) node[below,scale=0.58,black]{1};
\fill[blue!28,draw=blue!55!black,line width=0.4pt] (8.589,5.050) rectangle (8.867,5.183);
\fill[red!20,draw=red!70!black,line width=0.4pt] (8.867,5.050) rectangle (9.144,5.192);
\draw[black!50] (9.561,5.050) -- (9.561,4.980) node[below,scale=0.58,black]{2};
\fill[blue!28,draw=blue!55!black,line width=0.4pt] (9.283,5.050) rectangle (9.561,6.783);
\fill[red!20,draw=red!70!black,line width=0.4pt] (9.561,5.050) rectangle (9.839,6.804);
\draw[black!50] (10.256,5.050) -- (10.256,4.980) node[below,scale=0.58,black]{3};
\fill[blue!28,draw=blue!55!black,line width=0.4pt] (9.978,5.050) rectangle (10.256,7.857);
\fill[red!20,draw=red!70!black,line width=0.4pt] (10.256,5.050) rectangle (10.533,7.843);
\draw[black!50] (10.950,5.050) -- (10.950,4.980) node[below,scale=0.58,black]{4};
\fill[blue!28,draw=blue!55!black,line width=0.4pt] (10.672,5.050) rectangle (10.950,7.659);
\fill[red!20,draw=red!70!black,line width=0.4pt] (10.950,5.050) rectangle (11.228,7.646);
\draw[black!50] (11.644,5.050) -- (11.644,4.980) node[below,scale=0.58,black]{5};
\fill[blue!28,draw=blue!55!black,line width=0.4pt] (11.367,5.050) rectangle (11.644,6.798);
\fill[red!20,draw=red!70!black,line width=0.4pt] (11.644,5.050) rectangle (11.922,6.796);
\draw[black!50] (12.339,5.050) -- (12.339,4.980) node[below,scale=0.58,black]{6};
\fill[blue!28,draw=blue!55!black,line width=0.4pt] (12.061,5.050) rectangle (12.339,5.864);
\fill[red!20,draw=red!70!black,line width=0.4pt] (12.339,5.050) rectangle (12.617,5.863);
\draw[black!50] (13.033,5.050) -- (13.033,4.980) node[below,scale=0.58,black]{7};
\fill[blue!28,draw=blue!55!black,line width=0.4pt] (12.756,5.050) rectangle (13.033,5.258);
\fill[red!20,draw=red!70!black,line width=0.4pt] (13.033,5.050) rectangle (13.311,5.258);
\node[anchor=south,scale=0.7] at (10.950,8.280) {C2};
\node[anchor=north,scale=0.64] at (10.950,4.650) {inspection depth $T$};
\node[anchor=south,rotate=90,scale=0.64] at (7.870,6.650) {$\Pr(T=t)$};
\draw[black!50] (1.050,0.750) rectangle (6.050,3.950);
\draw[black!10] (1.050,0.750) -- (6.050,0.750);
\draw[black!50] (1.050,0.750) -- (0.980,0.750) node[left,scale=0.6,black]{0.00};
\draw[black!10] (1.050,1.473) -- (6.050,1.473);
\draw[black!50] (1.050,1.473) -- (0.980,1.473) node[left,scale=0.6,black]{0.10};
\draw[black!10] (1.050,2.195) -- (6.050,2.195);
\draw[black!50] (1.050,2.195) -- (0.980,2.195) node[left,scale=0.6,black]{0.20};
\draw[black!10] (1.050,2.918) -- (6.050,2.918);
\draw[black!50] (1.050,2.918) -- (0.980,2.918) node[left,scale=0.6,black]{0.30};
\draw[black!50] (1.534,0.750) -- (1.534,0.680) node[below,scale=0.58,black]{1};
\fill[blue!28,draw=blue!55!black,line width=0.4pt] (1.211,0.750) rectangle (1.534,2.150);
\fill[red!20,draw=red!70!black,line width=0.4pt] (1.534,0.750) rectangle (1.856,2.159);
\draw[black!50] (2.340,0.750) -- (2.340,0.680) node[below,scale=0.58,black]{2};
\fill[blue!28,draw=blue!55!black,line width=0.4pt] (2.018,0.750) rectangle (2.340,3.557);
\fill[red!20,draw=red!70!black,line width=0.4pt] (2.340,0.750) rectangle (2.663,3.557);
\draw[black!50] (3.147,0.750) -- (3.147,0.680) node[below,scale=0.58,black]{3};
\fill[blue!28,draw=blue!55!black,line width=0.4pt] (2.824,0.750) rectangle (3.147,2.753);
\fill[red!20,draw=red!70!black,line width=0.4pt] (3.147,0.750) rectangle (3.469,2.746);
\draw[black!50] (3.953,0.750) -- (3.953,0.680) node[below,scale=0.58,black]{4};
\fill[blue!28,draw=blue!55!black,line width=0.4pt] (3.631,0.750) rectangle (3.953,1.537);
\fill[red!20,draw=red!70!black,line width=0.4pt] (3.953,0.750) rectangle (4.276,1.536);
\draw[black!50] (4.760,0.750) -- (4.760,0.680) node[below,scale=0.58,black]{5};
\fill[blue!28,draw=blue!55!black,line width=0.4pt] (4.437,0.750) rectangle (4.760,0.948);
\fill[red!20,draw=red!70!black,line width=0.4pt] (4.760,0.750) rectangle (5.082,0.948);
\draw[black!50] (5.566,0.750) -- (5.566,0.680) node[below,scale=0.58,black]{6};
\fill[blue!28,draw=blue!55!black,line width=0.4pt] (5.244,0.750) rectangle (5.566,0.779);
\fill[red!20,draw=red!70!black,line width=0.4pt] (5.566,0.750) rectangle (5.889,0.779);
\node[anchor=south,scale=0.7] at (3.550,3.980) {C3};
\node[anchor=north,scale=0.64] at (3.550,0.350) {inspection depth $T$};
\node[anchor=south,rotate=90,scale=0.64] at (0.470,2.350) {$\Pr(T=t)$};
\draw[black!50] (8.450,0.750) rectangle (13.450,3.950);
\draw[black!10] (8.450,0.750) -- (13.450,0.750);
\draw[black!50] (8.450,0.750) -- (8.380,0.750) node[left,scale=0.6,black]{0.00};
\draw[black!10] (8.450,1.532) -- (13.450,1.532);
\draw[black!50] (8.450,1.532) -- (8.380,1.532) node[left,scale=0.6,black]{0.20};
\draw[black!10] (8.450,2.314) -- (13.450,2.314);
\draw[black!50] (8.450,2.314) -- (8.380,2.314) node[left,scale=0.6,black]{0.40};
\draw[black!10] (8.450,3.097) -- (13.450,3.097);
\draw[black!50] (8.450,3.097) -- (8.380,3.097) node[left,scale=0.6,black]{0.60};
\draw[black!50] (8.934,0.750) -- (8.934,0.680) node[below,scale=0.58,black]{1};
\fill[blue!28,draw=blue!55!black,line width=0.4pt] (8.611,0.750) rectangle (8.934,3.556);
\fill[red!20,draw=red!70!black,line width=0.4pt] (8.934,0.750) rectangle (9.256,3.557);
\draw[black!50] (9.740,0.750) -- (9.740,0.680) node[below,scale=0.58,black]{2};
\fill[blue!28,draw=blue!55!black,line width=0.4pt] (9.418,0.750) rectangle (9.740,1.855);
\fill[red!20,draw=red!70!black,line width=0.4pt] (9.740,0.750) rectangle (10.063,1.854);
\draw[black!50] (10.547,0.750) -- (10.547,0.680) node[below,scale=0.58,black]{3};
\fill[blue!28,draw=blue!55!black,line width=0.4pt] (10.224,0.750) rectangle (10.547,0.750);
\fill[red!20,draw=red!70!black,line width=0.4pt] (10.547,0.750) rectangle (10.869,0.750);
\draw[black!50] (11.353,0.750) -- (11.353,0.680) node[below,scale=0.58,black]{4};
\fill[blue!28,draw=blue!55!black,line width=0.4pt] (11.031,0.750) rectangle (11.353,0.750);
\fill[red!20,draw=red!70!black,line width=0.4pt] (11.353,0.750) rectangle (11.676,0.750);
\draw[black!50] (12.160,0.750) -- (12.160,0.680) node[below,scale=0.58,black]{5};
\fill[blue!28,draw=blue!55!black,line width=0.4pt] (11.837,0.750) rectangle (12.160,0.750);
\fill[red!20,draw=red!70!black,line width=0.4pt] (12.160,0.750) rectangle (12.482,0.750);
\draw[black!50] (12.966,0.750) -- (12.966,0.680) node[below,scale=0.58,black]{6};
\fill[blue!28,draw=blue!55!black,line width=0.4pt] (12.644,0.750) rectangle (12.966,0.750);
\fill[red!20,draw=red!70!black,line width=0.4pt] (12.966,0.750) rectangle (13.289,0.750);
\node[anchor=south,scale=0.7] at (10.950,3.980) {C4};
\node[anchor=north,scale=0.64] at (10.950,0.350) {inspection depth $T$};
\node[anchor=south,rotate=90,scale=0.64] at (7.870,2.350) {$\Pr(T=t)$};
\end{tikzpicture}
\caption{Stop-epoch distributions $\Pr(T = t)$ under the optimal policy (blue) and the myopic policy (red) for the configurations of Table~\ref{tab:numerical-validation}, from common-random-number simulation. The two distributions nearly coincide.}
\label{fig:depth-compare}
\end{figure}

\clearpage

\section{Derivation of the Quantile Reduction}\label{app:quantile}

This section derives Assumption~\ref{assn:quantile} using the probability integral transform.

Let $X$ be a real-valued random variable with continuous, strictly increasing CDF $F$. Then $U := F(X) \sim \text{Uniform}(0,1)$. Conversely, if $U \sim \text{Uniform}(0,1)$, then $X := F^{-1}(U)$ has CDF $F$. The maps $F$ and $F^{-1}$ thus provide a bijection between the original ``data scale'' and the ``probability scale,'' translating any continuous distribution to a uniform and back.

Because $F$ is strictly increasing, applying $F$ preserves the order of a sample: if $X_1, \ldots, X_N$ are iid with CDF $F$ and $U_i := F(X_i)$, then $X_i < X_j \iff U_i < U_j$. In particular, writing $U_{(1)} \le \cdots \le U_{(N)}$ for the ascending order statistics, the $k$-th largest of the $X$'s satisfies
\begin{equation}\label{eq:bridge}
X_{[k]} = F^{-1}\big(U_{(N+1-k)}\big).
\end{equation}
Uniform order statistics have a known finite-sample distribution:
\begin{equation*}
U_{(j)} \sim \text{Beta}(j,N+1-j), \quad \E\left[U_{(j)}\right] = \frac{j}{N+1}, \quad \text{Var}\left[U_{(j)}\right] = \frac{j(N+1-j)}{(N+1)^2(N+2)} = O\left(\tfrac{1}{N}\right).
\end{equation*}
Setting $j = N+1-k$ yields $\E[U_{(N+1-k)}] = 1 - k/(N+1)$ and a vanishing variance, so $U_{(N+1-k)}$ concentrates on its mean as $N$ grows.

Under Assumption~\ref{assn:gaussian}, $z_i \mid \mu \sim \N(\mu, \sigma_z^2)$ with $F^{-1}(p) = \mu + \sigma_z \Phi^{-1}(p)$. Applying the bridge \eqref{eq:bridge},
\begin{equation*}
z_{[k]} = \mu + \sigma_z \Phi^{-1}\big(U_{(N+1-k)}\big).
\end{equation*}
This expression holds at every finite $N$. Replacing the random uniform order statistic by its mean, $U_{(N+1-k)} \approx 1 - k/(N+1)$, yields
\begin{equation*}
z_{[k]} \approx \mu + \sigma_z \Phi^{-1}\left(1 - \tfrac{k}{N+1}\right),
\end{equation*}
which is the content of Assumption~\ref{assn:quantile}. The approximation error is $O_p(N^{-1/2})$ for central ranks (those with $k/N$ bounded away from $0$ and $1$), and therefore vanishes as $N \to \infty$.

\paragraph{Refinements.}
Two standard refinements are available. Blom's formula replaces $k/(N+1)$ with $(k - 3/8)/(N + 1/4)$, correcting systematic bias in the tails. Alternatively, the finite-sample expected normal order statistics $\E[z_{[k]}] = \mu + \sigma_z e_{k,N}$ can be computed numerically. Substituting $e_{k,N}$ for $\Phi^{-1}(1 - k/(N+1))$ preserves every downstream step.

\section{Translation and Scale Invariance of the Likelihood}\label{app:invariance}

We verify that the two likelihood invariances cited in Section~\ref{sec:partial-id} justify the normalizations $m_0 = 0$ and $\sigma_{\mathscr{F}} = 1$.

\paragraph{Translation invariance.} We claim that the map
\[
(x_i, x_b, m_0, z_i) \longmapsto (x_i + \delta, x_b + \delta, m_0 + \delta, z_i + \delta), \qquad \delta \in \R,
\]
leaves the likelihood unchanged. Each inequality defining $\mathcal{C}_t$ has the form $x_j < a_s + \gamma_s \sum_{r=1}^{s} x_r + r_s$, with $a_s = (v_s/v_0) m_0 - \gamma_s \sum_r \alpha_r$, $\gamma_s = v_s/\sigma_\eta^2$. The shift moves the left-hand side by $\delta$ and the right-hand side by $(v_s/v_0)\delta + \gamma_s s \delta = v_s \delta (1/v_0 + s/\sigma_\eta^2) = \delta$, where the last equality uses the variance recursion $1/v_s = 1/v_0 + s/\sigma_\eta^2$. Both sides move by $\delta$, so every inequality is preserved and $\mathcal{C}_t$ is unchanged. The Gaussian density $\phi_{\mathscr{F}}(x_i-z_i)$ is invariant under the joint shift of $x_i$ and $z_i$. Hence the likelihood is unchanged.

\paragraph{Scale invariance.} We claim that for any $\lambda > 0$, the joint rescaling
\[
(x, x_b, m_0, z, c, \alpha, \sigma_{\mathscr{F}}, \sigma_\eta) \longmapsto \lambda \cdot (\text{each}), \qquad v_0 \longmapsto \lambda^2 v_0,
\]
also leaves the likelihood unchanged. The argument has two parts.

\emph{Part 1: Policy and survival region rescale by $\lambda$.} The Bellman equation
\[
V_t(M_t, m_t) = \max\bigl\{M_t, -c + \E[V_{t+1}(\max(M_t, x_{t+1}), m_{t+1})]\bigr\}
\]
maps to itself under the rescaling with $V_t \mapsto \lambda V_t$: scaling all of $M, m, c, x$ by $\lambda$ scales every term in both branches of the max by $\lambda$. By backward induction the reservation gap satisfies $\kappa_s^\star \mapsto \lambda \kappa_s^\star$, so $r_s = \alpha_{s+1} + \kappa_s^\star \mapsto \lambda r_s$. The variance recursion gives $v_s \mapsto \lambda^2 v_s$, so $\gamma_s = v_s/\sigma_\eta^2$ is invariant and $a_s = (v_s/v_0) m_0 - \gamma_s \sum_r \alpha_r \mapsto \lambda a_s$. Each inequality defining $\mathcal{C}_t$ becomes $\lambda x_j < \lambda(a_s + \gamma_s \sum_{r=1}^{s} x_r + r_s)$, which after dividing both sides by $\lambda$ is the same inequality with rescaled variables. Hence $\mathcal{C}_t \mapsto \lambda \mathcal{C}_t$, a dilation, and the same argument gives $\mathcal{S}_t \mapsto \lambda \mathcal{S}_t$.

\emph{Part 2: The Gaussian integral is invariant under the dilation.} Write the likelihood as a $t$-dimensional Gaussian integral over the rescaled region. Substituting $x_i = \lambda y_i$, each differential contributes $dx_i = \lambda dy_i$, for a total Jacobian $\lambda^t$. Each Gaussian density rescales as
\[
\phi_{\lambda \sigma_{\mathscr{F}}}(\lambda u) = \frac{1}{\lambda \sigma_{\mathscr{F}} \sqrt{2\pi}} \exp\Bigl(-\frac{\lambda^2 u^2}{2 \lambda^2 \sigma_{\mathscr{F}}^2}\Bigr) = \frac{1}{\lambda} \phi_{\mathscr{F}}(u),
\]
contributing a factor $\lambda^{-t}$ across the $t$ densities. The Jacobian and the density factor cancel, so the integral equals its pre-rescaling value. Hence the likelihood is unchanged.

\section{Auxiliary Lemmas and Proofs}\label{app:auxiliary}
\begin{auxlemma}[Gaussian Truncation Identity]\label{lem:gaussiantrunc}
	We claim that for any $X \sim \N(\mu, \sigma^2)$ with $\sigma > 0$ and any threshold $a \in \R$,
	\begin{equation}\label{eq:trunc}
	\E[(X - a)^+] = \sigma \phi\left(\tfrac{a - \mu}{\sigma}\right) - (a - \mu) \Phi\left(-\tfrac{a - \mu}{\sigma}\right).
	\end{equation}
\end{auxlemma}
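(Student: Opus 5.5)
We reduce to the standard Normal by writing $X = \mu + \sigma Z$ with $Z \sim \N(0,1)$, and set $d := (a-\mu)/\sigma$. Then $(X-a)^+ = \sigma (Z - d)^+$, so the claim \eqref{eq:trunc} is equivalent to the scale-free identity
\[
\E[(Z-d)^+] = \phi(d) - d\,\Phi(-d) = g(d),
\]
with $g$ as in \eqref{eq:gdef}. Multiplying by $\sigma$ and substituting $\sigma d = a-\mu$ then recovers \eqref{eq:trunc} exactly. The plan is therefore to prove the standardized version and rescale.

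For the standardized identity we integrate directly over the event where the positive part is active:
\[
\E[(Z-d)^+] = \int_d^\infty (z-d)\,\phi(z)\,dz = \int_d^\infty z\,\phi(z)\,dz - d\int_d^\infty \phi(z)\,dz .
\]
The second integral equals $1-\Phi(d) = \Phi(-d)$ by the symmetry of the standard Normal. For the first we use $\phi'(z) = -z\,\phi(z)$. This gives $\int_d^\infty z\,\phi(z)\,dz = [-\phi(z)]_d^\infty = \phi(d)$, because $\phi(z)\to 0$ as $z\to\infty$. Combining the two pieces yields $\phi(d) - d\,\Phi(-d)$.

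No step is a real obstacle. The only points needing care are two. First, the integrals converge absolutely, since Gaussian tails dominate any linear factor; this justifies splitting the integral. Second, the change of variables must respect the sign convention, so that $\Phi(-(a-\mu)/\sigma)$ rather than $\Phi((a-\mu)/\sigma)$ appears, which follows from $\Pr(X>a) = \Pr(Z>d)$. This identity is what underlies the right-hand side of Assumption~\ref{assn:interior} (take $\sigma = \sqrt{v_0+\sigma_\eta^2}$, $\mu = m_0+\alpha_1$, $a = x_b$) and the myopic intercept in Proposition~\ref{prop:myopic}.
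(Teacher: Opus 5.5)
Your proposal is correct and follows the paper's own proof essentially step for step. You standardize via $X = \mu + \sigma Z$, split $\E[(Z-d)^+]$ into $\int_d^\infty z\,\phi(z)\,dz = \phi(d)$ (using $\phi'(z) = -z\,\phi(z)$) and $d\,\Phi(-d)$, and rescale by $\sigma$, which is exactly what the paper does.
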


\begin{proof}
	To prove \eqref{eq:trunc}, write $X = \mu + \sigma Z$ with $Z \sim \N(0,1)$, and let $d := (a - \mu)/\sigma$. Then $(X - a)^+ = \sigma (Z - d)^+$, so
	\begin{equation*}
	\E[(X-a)^+] = \sigma \int_d^\infty z \phi(z) dz - \sigma d \int_d^\infty \phi(z) dz.
	\end{equation*}
	For the first integral, note $\tfrac{d}{dz}\phi(z) = -z\phi(z)$, so $\int_d^\infty z \phi(z) dz = \phi(d)$.
	For the second, $\int_d^\infty \phi(z) dz = 1 - \Phi(d) = \Phi(-d)$. Combining and substituting $d$ with $(a - \mu)/\sigma$ give \eqref{eq:trunc}.
\end{proof}

\begin{auxlemma}\label{lem:gproof}
	The function $g$ defined in \eqref{eq:gdef} is a strictly decreasing bijection from $\R$ onto $(0, \infty)$.
\end{auxlemma}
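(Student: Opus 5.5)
The plan is to compute the derivative of $g(d) = \phi(d) - d\Phi(-d)$, read off monotonicity from its sign, and then pin down the range from the two limits together with continuity.

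\emph{Derivative.} Differentiating and using $\phi'(d) = -d\phi(d)$ and $\tfrac{d}{dd}\Phi(-d) = -\phi(d)$ gives
\[
g'(d) = -d\phi(d) - \Phi(-d) + d\phi(d) = -\Phi(-d) < 0 .
\]
This holds for every $d \in \R$, so $g$ is strictly decreasing and continuous, and therefore injective.

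\emph{Range.} First, positivity. By Lemma~\ref{lem:gaussiantrunc} with $\mu = 0$, $\sigma = 1$ and $a = d$, we have $g(d) = \E[(Z-d)^+]$ for $Z \sim \N(0,1)$. Since $Z - d > 0$ with positive probability, this expectation is strictly positive, so $g(d) > 0$ for every $d$.

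For the limit at $+\infty$: $\phi(d) \to 0$, and the Mills-ratio bound $\Phi(-d) \le \phi(d)/d$ for $d > 0$ gives $0 \le d\Phi(-d) \le \phi(d) \to 0$. Hence $g(d) \to 0$ as $d \to \infty$.

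For the limit at $-\infty$: $\Phi(-d) \to 1$ and $\phi(d) \to 0$, so $g(d) = -d\Phi(-d) + o(1) \to +\infty$. Alternatively, the identity $g(d) = \E[(Z-d)^+] \ge \E[Z - d] = -d$ gives the same conclusion directly.

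Finally, by continuity, strict monotonicity and the intermediate value theorem, $g$ maps $\R$ onto $(0,\infty)$ bijectively, so $g^{-1}$ is well defined. The only step that needs care is the limit at $+\infty$; the Mills-ratio inequality settles it. That inequality follows from
\[
\int_d^\infty \phi(z)\,dz \le \int_d^\infty \frac{z}{d}\,\phi(z)\,dz = \frac{\phi(d)}{d}, \qquad d > 0 .
\]
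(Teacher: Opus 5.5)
Your proof is correct and follows the paper's argument: the derivative $g'(d)=-\Phi(-d)<0$, the two limits at $\pm\infty$, and continuity plus strict monotonicity. Your Mills-ratio inequality $\Phi(-d)\le\phi(d)/d$ is slightly more rigorous than the paper's asymptotic $\Phi(-d)\sim\phi(d)/d$, and the representation $g(d)=\E[(Z-d)^+]$ adds an explicit positivity check, though positivity already follows from strict decrease toward the limit $0$.
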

\begin{proof}
	Let $g(d) := \phi(d) - d\Phi(-d)$. Differentiating and using $\phi'(d) = -d\phi(d)$ together with $\tfrac{d}{dd}\Phi(-d) = -\phi(-d) = -\phi(d)$, $g'(d) = -\Phi(-d)$.
	Since $\Phi(-d) \in (0, 1)$ for all $d \in \R$, $g'(d) < 0$, so $g$ is strictly decreasing on $\R$. For the range:
	\begin{itemize}
	\item As $d \to -\infty$: $\phi(d) \to 0$, $\Phi(-d) \to 1$, and $-d \cdot \Phi(-d) \to +\infty$, so $g(d) \to +\infty$.
	\item As $d \to +\infty$: $\phi(d) \to 0$, and by the Mills ratio expansion $\Phi(-d) \sim \phi(d)/d$ for large $d$, so $d \cdot \Phi(-d) \sim \phi(d) \to 0$, giving $g(d) \to 0^+$.
	\end{itemize}
	By continuity and strict monotonicity, $g : \R \to (0, \infty)$ is a bijection and $g^{-1} : (0, \infty) \to \R$ is well defined.
\end{proof}

\begin{auxlemma}[Top-$k$ Subset Dominance]\label{lem:topk}
Let $\eta_1, \ldots, \eta_N \mid \mu \stackrel{\mathrm{iid}}{\sim} \N(0, \sigma_\eta^2)$, $\alpha_1 > \alpha_2 > \cdots > \alpha_N$, and $K \in \R$. For any $k$-element subset $\mathcal{R} \subseteq \{1, \ldots, N\}$,
\[ \E\left[\max\Big(K, \max_{r \in \{1,\ldots,k\}}(\alpha_r + \eta_r)\Big) \Big| \mu\right] \ge \E\left[\max\Big(K, \max_{r \in \mathcal{R}}(\alpha_r + \eta_r)\Big) \Big| \mu\right]. \]
\end{auxlemma}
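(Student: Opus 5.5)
We will prove the inequality by a coupling argument that swaps shifts one at a time, reducing the general subset $\mathcal{R}$ to the top-$k$ set $\{1,\ldots,k\}$ through a finite chain of elementary improvements. Write $\mathcal{R} = \{r_1 < r_2 < \cdots < r_k\}$. Since the ranks are strictly ordered, $r_j \ge j$ for every $j$, with equality for all $j$ exactly when $\mathcal{R} = \{1,\ldots,k\}$.

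\emph{Step 1: relabel the noise.} Conditional on $\mu$ the $\eta_r$ are iid $\N(0,\sigma_\eta^2)$. The law of $\max(K, \max_{r\in\mathcal{R}}(\alpha_r+\eta_r))$ therefore coincides with the law of $\max(K, \max_{j=1}^{k}(\alpha_{r_j}+\tilde\eta_j))$, where $\tilde\eta_1,\ldots,\tilde\eta_k$ are iid $\N(0,\sigma_\eta^2)$. The same relabeling gives the top-$k$ quantity as $\max(K, \max_{j=1}^{k}(\alpha_j+\tilde\eta_j))$. We can thus compare both sides on a common probability space using the same noise vector $\tilde\eta$.

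\emph{Step 2: pointwise monotonicity.} The map $(a_1,\ldots,a_k) \mapsto \max(K, \max_j(a_j+\tilde\eta_j))$ is nondecreasing in each coordinate for every realization of $\tilde\eta$. Because $\alpha$ is strictly decreasing and $r_j \ge j$, we have $\alpha_j \ge \alpha_{r_j}$ coordinatewise. It follows that
\[
\max\Big(K, \max_{j}(\alpha_j+\tilde\eta_j)\Big) \ge \max\Big(K, \max_{j}(\alpha_{r_j}+\tilde\eta_j)\Big)
\]
almost surely.

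\emph{Step 3: take expectations.} Both sides are integrable: each is bounded by $|K| + \sum_j(|\alpha_j|+|\tilde\eta_j|)$, which has finite mean. Taking conditional expectations given $\mu$ preserves the pointwise inequality. Combining this with the equality in law from Step 1 gives the claim.

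The only point needing care is the combinatorial fact $r_j \ge j$, which holds because $r_1<\cdots<r_j$ are $j$ distinct positive integers. The exchangeability justification in Step 1 is the step most likely to be stated too loosely. It relies on $\mathcal{R}$ being deterministic, or at least independent of $\eta$. In Lemma~\ref{lem:rankorder} the lemma will be applied conditional on the history, where the remaining noises are still iid, so that use is fine. No single step is a real obstacle.
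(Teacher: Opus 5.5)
Your proposal is correct and follows essentially the same route as the paper's proof: the combinatorial fact $r_j \ge j$ gives $\alpha_{r_j} \le \alpha_j$, the iid noises are relabeled so that both sides share a common noise vector, and the pointwise inequality passes to conditional expectations. The only difference is framing: your opening sentence describes swapping shifts one at a time, but your Steps 1--3 carry out the same single direct coupling as the paper.
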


\begin{proof}
	Sort $\mathcal{R}$ as $r_1 < r_2 < \cdots < r_k$. The proof has three steps.

\emph{Step 1: If $r_i \ge i$, then $\alpha_{r_i} \le \alpha_i$.} The $i$-th smallest element of $\mathcal{R}$ is $r_i$, so $\mathcal{R}$ contains $i$ elements that are $\le r_i$. But every such element is also $\le r_i$ in $\{1, \ldots, N\}$, and there are only $r_i$ such elements in total, so $r_i \ge i$. Combined with $\alpha_1 > \alpha_2 > \cdots > \alpha_N$, this gives $\alpha_{r_i} \le \alpha_i$.

\emph{Step 2: Coupling via exchangeability.} Conditional on $\mu$, the noises $(\eta_1, \ldots, \eta_N)$ are iid $\N(0, \sigma_\eta^2)$. So the joint law of any size-$k$ subvector $(\eta_{s_1}, \ldots, \eta_{s_k})$, for any $k$-element subset $\{s_1, \ldots, s_k\} \subseteq \{1, \ldots, N\}$, is the same: namely, $\N(0, \sigma_\eta^2)^k$. In particular, picking the subset $\{r_1, \ldots, r_k\} = \mathcal{R}$ and relabeling,
\[ \zeta_i := \eta_{r_i}, \qquad i = 1, \ldots, k, \]
gives an iid Gaussian sequence $(\zeta_1, \ldots, \zeta_k)$ with the same joint distribution as $(\eta_1, \ldots, \eta_k)$.

\emph{Step 3: Pointwise inequality, then expectations.}
We rewrite both maxes in the claim using $\zeta$:
\begin{align*}
\max_{r \in \mathcal{R}}(\alpha_r + \eta_r) &= \max_{i=1, \ldots, k}\big(\alpha_{r_i} + \eta_{r_i}\big) = \max_{i=1, \ldots, k}\big(\alpha_{r_i} + \zeta_i\big), \\
\max_{r \in \{1, \ldots, k\}}(\alpha_r + \eta_r) &\stackrel{d}{=} \max_{i=1, \ldots, k}\big(\alpha_i + \zeta_i\big),
\end{align*}
where the second equality uses the equality in distribution from Step 2. By Step 1, $\alpha_i \ge \alpha_{r_i}$ for each $i$, and the noise $\zeta_i$ is the same on the right-hand side and the left-hand side at each index $i$, so
\[ \alpha_i + \zeta_i \ge \alpha_{r_i} + \zeta_i \qquad \text{pointwise, for every } i. \]
Taking $\max$ over $i$ preserves componentwise inequalities:
\[ \max_{i=1, \ldots, k}\big(\alpha_i + \zeta_i\big) \ge \max_{i=1, \ldots, k}\big(\alpha_{r_i} + \zeta_i\big) \quad \text{pointwise}. \]
The map $u \mapsto \max(K, u)$ is non-decreasing for any constant $K$, so wrapping both sides in $\max(K, \cdot)$ preserves the inequality. Taking expectations (conditional on $\mu$) gives the claim.
\end{proof}

\begin{auxlemma}[Interior Solution Forces Inspection]\label{lem:tau-positive}
Under Assumption~\ref{assn:interior}, the optimal stopping rule defined by the Bellman equation \eqref{eq:bellman} satisfies $T \ge 1$ with probability one. Moreover, for any fixed $c, x_b, m_0, \alpha_1 \in \R$, condition \eqref{eq:noinspect} holds for all sufficiently large $v_0$.
\end{auxlemma}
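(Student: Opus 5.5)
The plan is to compare stopping at epoch $0$ with a feasible one-step deviation, and then to handle the large-$v_0$ claim by showing that the right-hand side of \eqref{eq:noinspect} diverges.

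\emph{Inspection is optimal at epoch $0$.} At $t=0$ the state is $(M_0,m_0)=(x_b,m_0)$, and stopping yields $V^{\text{stop}}=x_b$. Under linear traversal (Lemma~\ref{lem:rankorder}), consider the policy that inspects rank $1$ and then stops. It yields $\E[\max(x_b,x_1)]-c$, where $x_1\sim\N(m_0+\alpha_1,v_0+\sigma_\eta^2)$ by Lemma~\ref{lem:posterior}. Since $V_1(M,m)\ge M$, the continuation branch of \eqref{eq:bellman} satisfies
\[
\E\bigl[V_1(\max(x_b,x_1),m_1(x_1))\bigr]-c\;\ge\;x_b+\E[(x_1-x_b)^+]-c .
\]
We then evaluate $\E[(x_1-x_b)^+]$ with Lemma~\ref{lem:gaussiantrunc}, taking $a=x_b$, $\mu=m_0+\alpha_1$ and $\sigma=\sqrt{v_0+\sigma_\eta^2}$. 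This gives exactly $\sqrt{v_0+\sigma_\eta^2}\,\phi(d_0)-(x_b-m_0-\alpha_1)\Phi(-d_0)$. Assumption~\ref{assn:interior} therefore makes the continuation value strictly exceed $x_b$. Because the user stops at $t=0$ only if stopping is weakly better, she never stops at epoch $0$. The initial state is deterministic, so $T\ge1$ holds surely, and in particular with probability one. If tie-breaking in favor of stopping is adopted, the strict inequality still excludes it.

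\emph{Large-$v_0$ claim.} Write $s=\sqrt{v_0+\sigma_\eta^2}$ and $\Delta=x_b-m_0-\alpha_1$, so that $d_0=\Delta/s$. The right-hand side of \eqref{eq:noinspect} equals $s\,g(d_0)$, with $g$ as in \eqref{eq:gdef}. As $v_0\to\infty$ we have $s\to\infty$ and $d_0\to0$. By continuity of $g$, $g(d_0)\to g(0)=\phi(0)=1/\sqrt{2\pi}>0$, so $s\,g(d_0)\to\infty$ and eventually exceeds any fixed $c$. Equivalently, $s\,g(\Delta/s)\ge s\phi(\Delta/s)-|\Delta|$, which diverges.

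\emph{Main obstacle.} The only delicate point is justifying $V_1\ge M_1$ and the ensuing comparison without knowing $V_1$ explicitly. This is immediate from the max in \eqref{eq:bellman}, which holds at every epoch by backward induction from $V_N=M_N$. Beyond that, the proof reduces to applying the truncation identity.
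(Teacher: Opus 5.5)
Your proposal is correct and follows essentially the same route as the paper: lower-bound the continuation value using $V_1(M,m)\ge M$, evaluate $\E[(x_1-x_b)^+]$ with the Gaussian truncation identity, and conclude that inspecting strictly dominates stopping at $t=0$. The large-$v_0$ claim is handled the same way too; your bound $s\,g(\Delta/s)\ge s\,\phi(\Delta/s)-|\Delta|$ is just a tidier statement of the paper's observation that the $\Phi(-d_0)$ term stays bounded while $\sqrt{v_0+\sigma_\eta^2}$ diverges.
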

\begin{proof}
Take the Bellman equation \eqref{eq:bellman} at $t = 0$ with initial state $(M_0, m_0) = (x_b, m_0)$,
\begin{equation}\label{eq:bellman-t1}
V_0(x_b, m_0) = \max\Big\{ x_b, -c + \E_{x_1}\left[V_1\big(\max(x_b, x_1), m_1(x_1)\big)\right]\Big\},
\end{equation}
with $x_1 \sim \N(m_0 + \alpha_1, v_0 + \sigma_\eta^2)$ by Lemma~\ref{lem:posterior}. The stop option at $t = 1$ gives the trivial lower bound $V_1(M, m) \ge M$ for all $(M, m)$; equivalently, one inspection followed by forced stopping is weakly dominated by optimal continuation. Taking expectations under $x_1$,
\begin{equation*}
\E_{x_1}\left[V_1\big(\max(x_b, x_1), m_1(x_1)\big)\right] \ge \E_{x_1}[\max(x_b, x_1)] = x_b + \E_0\left[(x_1 - x_b)^+\right].
\end{equation*}
Under Assumption~\ref{assn:interior}, $c < \E_{x_1}[(x_1 - x_b)^+]$, so
\begin{equation*}
-c + \E_{x_1}\left[V_1\big(\max(x_b, x_1), m_1(x_1)\big)\right] \ge -c + x_b + \E_{x_1}\left[(x_1 - x_b)^+\right] > x_b.
\end{equation*}
The inspect branch in \eqref{eq:bellman-t1} therefore strictly dominates stopping, and $T = 0$ cannot be optimal. Hence $T \ge 1$ with probability one. The closed-form expression for $\E_{x_1}\left[(x_1 - x_b)^+\right]$ follows from Lemma~\ref{lem:gaussiantrunc} applied to $x_1 \sim \N(m_0 + \alpha_1, v_0 + \sigma_\eta^2)$.

For the second claim, fix $x_b, m_0, \alpha_1 \in \R$ and let $v_0 \to \infty$. We have $d_0 \to 0$, so $\phi(d_0) \to 1/\sqrt{2\pi}$ and $\Phi(-d_0) \to 1/2$. Hence the only unbounded term is $\sqrt{v_0 + \sigma_\eta^2}$, which swamps any finite cost $c$. So \eqref{eq:noinspect} holds for all sufficiently large $v_0$.
\end{proof}

\begin{auxlemma}[Myopic Gap Lower-Bounds the Optimal Gap]\label{lem:myopic-lb}
Fix the page length $N$. For every epoch $t \in \{0, 1, \ldots, N-1\}$, the optimal reservation gap $\kappa_t^\star$ of Proposition~\ref{prop:reservation} and the myopic gap $\kappa_t$ of Proposition~\ref{prop:myopic} satisfy $\kappa_t \le \kappa_t^\star$, with equality at $t = N-1$.
\end{auxlemma}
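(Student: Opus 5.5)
The plan is to compare the two stop--continue gaps in lead coordinates and use the monotonicity in $L$ established in the proof of Proposition~\ref{prop:reservation}. By translation equivariance (Steps S8--S9), write $V_t(M,m)=m+W_t(L)$ with $L=M-m$. The optimal continuation gap is $\Delta_t(L,0)=C_t(L,0)-L$. By Step S4 it is strictly decreasing in $L$, and $\kappa_t^\star$ is defined by $\Delta_t(\alpha_{t+1}+\kappa_t^\star,0)=0$. Define the myopic gap analogously:
\[
\Delta_t^{\mathrm{myo}}(L):=-c+\E\bigl[\max(L,\,x_{t+1})\bigr]-L=-c+\E\bigl[(x_{t+1}-L)^+\bigr],
\]
with $x_{t+1}\sim\N(\alpha_{t+1},\sigma_t^{\star2})$ in the normalization $m_t=0$. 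By Lemma~\ref{lem:gaussiantrunc}, $\Delta_t^{\mathrm{myo}}(L)=\sigma_t^\star g\bigl((L-\alpha_{t+1})/\sigma_t^\star\bigr)-c$. Lemma~\ref{lem:gproof} then shows it is strictly decreasing and vanishes exactly at $L=\alpha_{t+1}+\kappa_t$, recovering Proposition~\ref{prop:myopic}.

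The key comparison is $\Delta_t(L,0)\ge\Delta_t^{\mathrm{myo}}(L)$ for every $L$. This follows from the trivial bound $V_{t+1}(M,m)\ge M$ (stopping is always feasible), which gives
\[
C_t(L,0)=-c+\E\bigl[V_{t+1}(\max(L,x_{t+1}),m_{t+1})\bigr]\ge -c+\E\bigl[\max(L,x_{t+1})\bigr],
\]
which is the same argument as in Lemma~\ref{lem:tau-positive}. Now evaluate at $L=\alpha_{t+1}+\kappa_t$, where $\Delta_t^{\mathrm{myo}}=0$. We get $\Delta_t(\alpha_{t+1}+\kappa_t,0)\ge0=\Delta_t(\alpha_{t+1}+\kappa_t^\star,0)$. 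Since $\Delta_t(\cdot,0)$ is strictly decreasing, $\alpha_{t+1}+\kappa_t\le\alpha_{t+1}+\kappa_t^\star$, that is, $\kappa_t\le\kappa_t^\star$.

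For equality at $t=N-1$, the terminal condition $V_N(M,m)=M$ makes the bound $V_{t+1}\ge M$ an identity when $t+1=N$. Hence $\Delta_{N-1}(\cdot,0)\equiv\Delta_{N-1}^{\mathrm{myo}}$, the two functions have the same unique zero, and $\kappa_{N-1}^\star=\kappa_{N-1}$.

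The main point of care is the strict monotonicity of $\Delta_t(\cdot,0)$ with a unique root, since without it the inequality between the gap functions would not translate into an inequality between the roots. I would rely on Step S4 for this. If S4 only gave weak monotonicity, I would fall back on the stopping characterization directly: at $L<\alpha_{t+1}+\kappa_t$ the myopic gain is positive, so continuation is strictly better than stopping, and therefore every such $L$ lies in the optimal continuation region. This is again enough to give $\kappa_t^\star\ge\kappa_t$.
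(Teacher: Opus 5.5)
Your proposal is correct and follows essentially the same route as the paper: you use the stop-branch bound $V_{t+1}\ge M$ to get $\Delta_t\ge\Delta_t^{\mathrm{myo}}$ pointwise, evaluate this at the myopic root, invoke the strict monotonicity from Step~S4 to order the two roots, and get equality at $t=N-1$ because the terminal condition makes the bound an identity. The only cosmetic difference is that you normalize to lead coordinates with $m=0$, whereas the paper fixes an arbitrary $m$ and cancels $m+\alpha_{t+1}$ at the end.
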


\begin{proof}
Recall the optimal continuation value $C_t(M, m)$ from \eqref{eq:Cdef} and the stop-continue gap $\Delta_t(M, m) = C_t(M, m) - M$ from \eqref{eq:Deltadef}. Define the \emph{myopic} continuation value as the one obtained by treating epoch $t+1$ as terminal:
\[
C_t^{\mathrm{myo}}(M, m) := -c + \E_{x_{t+1}}\!\left[\max(M, x_{t+1})\right],
\qquad x_{t+1} \sim \N(m + \alpha_{t+1},\, \sigma_t^{\star 2}),
\]
and write $\Delta_t^{\mathrm{myo}}(M, m) := C_t^{\mathrm{myo}}(M, m) - M$. Using $\max(M, x_{t+1}) = M + (x_{t+1} - M)^+$ and Lemma~\ref{lem:gaussiantrunc}, $\E_{x_{t+1}}[(x_{t+1} - M)^+] = \sigma_t^\star g(d_t)$ with $d_t = (M - m - \alpha_{t+1})/\sigma_t^\star$, so $\Delta_t^{\mathrm{myo}}(M, m) = -c + \sigma_t^\star g(d_t)$; its unique zero in $M$ is the myopic reservation $m + \alpha_{t+1} + \kappa_t$ of Proposition~\ref{prop:myopic}.

\emph{Step 1: The myopic continuation value never exceeds the optimal one.} The stop branch of the Bellman equation \eqref{eq:bellman} gives the pointwise lower bound $V_{t+1}(M', m') \ge M'$ for every state $(M', m')$. Evaluating this at $M' = \max(M, x_{t+1})$ and $m' = m_{t+1}(x_{t+1})$ and taking the predictive expectation,
\[
C_t(M, m) = -c + \E_{x_{t+1}}\!\left[V_{t+1}\big(\max(M, x_{t+1}),\, m_{t+1}(x_{t+1})\big)\right]
\ge -c + \E_{x_{t+1}}\!\left[\max(M, x_{t+1})\right] = C_t^{\mathrm{myo}}(M, m).
\]
Subtracting $M$ from both sides, $\Delta_t(M, m) \ge \Delta_t^{\mathrm{myo}}(M, m)$ for every $(M, m)$.

\emph{Step 2: Ordering of the two thresholds.} Fix $m$. Both gaps are strictly decreasing in $M$: for $\Delta_t$ this is Step~S4 of the proof of Proposition~\ref{prop:reservation}; for $\Delta_t^{\mathrm{myo}} = -c + \sigma_t^\star g(d_t)$ it follows because $g$ is strictly decreasing (Lemma~\ref{lem:gproof}) and $d_t$ is strictly increasing in $M$. Let $R_t^\star(m) = m + \alpha_{t+1} + \kappa_t^\star$ be the zero of $\Delta_t(\cdot, m)$ (Step~S7 and Step~S9 of the same proof) and $R_t^{\mathrm{myo}}(m) = m + \alpha_{t+1} + \kappa_t$ the zero of $\Delta_t^{\mathrm{myo}}(\cdot, m)$. Evaluating the Step~1 inequality at $M = R_t^{\mathrm{myo}}(m)$,
\[
\Delta_t\big(R_t^{\mathrm{myo}}(m), m\big) \ge \Delta_t^{\mathrm{myo}}\big(R_t^{\mathrm{myo}}(m), m\big) = 0.
\]
Since $\Delta_t(\cdot, m)$ is strictly decreasing and vanishes at $R_t^\star(m)$, the inequality $\Delta_t(R_t^{\mathrm{myo}}(m), m) \ge 0$ forces $R_t^{\mathrm{myo}}(m) \le R_t^\star(m)$. Cancelling the common term $m + \alpha_{t+1}$ gives $\kappa_t \le \kappa_t^\star$.

\emph{Step 3: Equality at $N-1$.} At $t = N-1$ the terminal condition $V_N(M', m') = M'$ holds with equality, so the lower bound in Step~1 is an identity: $C_{N-1} \equiv C_{N-1}^{\mathrm{myo}}$ and hence $\Delta_{N-1} \equiv \Delta_{N-1}^{\mathrm{myo}}$. Their zeros coincide, and therefore $\kappa_{N-1} = \kappa_{N-1}^\star$.
\end{proof}

\subsection{Proof of Lemma~\ref{lem:predictive}}

Under Assumption~\ref{assn:gaussian}, $x_i \mid \mu \sim \N(\mu, \sigma_x^2)$ and $e_i \sim \N(0, \sigma_e^2)$ independent of $x_i$. Since $z_i = x_i + e_i$,
\begin{equation*}
\begin{pmatrix} x_i \\ z_i \end{pmatrix} \Bigg| \mu \sim \N\left(\begin{pmatrix} \mu \\ \mu \end{pmatrix}, \begin{pmatrix} \sigma_x^2 & \sigma_x^2 \\ \sigma_x^2 & \sigma_z^2 \end{pmatrix}\right),
\end{equation*}
where $\sigma_z^2 = \sigma_x^2 + \sigma_e^2$ and $\Cov(x_i, z_i \mid \mu) = \Var(x_i \mid \mu) = \sigma_x^2$. By the conditional distribution of the bivariate normal distribution,
\begin{align*}
\E[x_i \mid z_i, \mu] &= \mu + \frac{\sigma_x^2}{\sigma_z^2}(z_i - \mu) = \rho z_i + (1-\rho)\mu, \\[4pt]
\Var[x_i \mid z_i, \mu] &= \sigma_x^2\left(1 - \frac{\sigma_x^2}{\sigma_z^2}\right) = \sigma_x^2(1 - \rho) = \sigma_\eta^2,
\end{align*}
so that
\begin{equation}\label{eq:xgivenz}
x_i \mid z_i, \mu \sim \N\big(\rho z_i + (1-\rho)\mu, \sigma_\eta^2\big).
\end{equation}
Assumption~\ref{assn:quantile} fixes $z_i$ deterministically given $\mu$ as $z_i = \mu + \sigma_z q_i$. Substituting into the conditional mean in \eqref{eq:xgivenz}, $\E[x_i \mid \mu] = \mu + \rho\sigma_z q_i = \mu + \alpha_i$. Since $z_i$ is deterministic given $\mu$, the conditional variance is unchanged, $\Var[x_i \mid \mu] = \sigma_\eta^2$. Therefore $x_i \mid \mu \sim \N(\mu + \alpha_i, \sigma_\eta^2)$.

For independence across $i$, by Assumption~\ref{assn:gaussian}, the pairs $\{(x_i, e_i)\}_{i=1}^N$ are iid conditional on $\mu$. Because each constraint $z_i = \mu + \sigma_z q_i$ imposed by Assumption~\ref{assn:quantile} involves only its own slot's pair $(x_i, e_i)$, conditioning on all $N$ constraints reshapes the marginal of each $x_i$ but introduces no dependence across slots; $\{x_i\}_{i=1}^N$ remain conditionally independent given $\mu$. Combined with the Step 3 marginals, $x_i \mid \mu \sim \N(\mu + \alpha_i, \sigma_\eta^2)$ independent across $i$, as claimed. $\hfill\qed$

\subsection{Proof of Lemma~\ref{lem:rankorder}}

We first establish how the optimal value depends on the best observed relevance. We then compare two inspection orders, allowing the user to stop after the first inspection, and use this comparison to prove rank-order optimality by induction.

\paragraph{Step 1: The value of a better current best.}
By Lemma~\ref{lem:predictive}, inspecting rank $i$ reveals $x_i=\alpha_i+Y$, where $Y=\mu+\eta$ and $\eta\mid\mu\sim\N(0,\sigma_\eta^2)$. Subtracting the known shift gives the same signal experiment for every rank. At any history of $t$ inspections, the posterior is Gaussian with mean $m$ and variance $v_t=(v_0^{-1}+t/\sigma_\eta^2)^{-1}$; the chosen ranks affect the update only through their observed signals $x_i-\alpha_i$.

Let $W(S,M,m)$ be the optimal expected terminal relevance minus future inspection costs when the remaining ranks are $S$, the best observed relevance is $M$, and the posterior mean is $m$. The posterior variance is determined by $t=N-|S|$. With arbitrary inspection order, the Bellman equation is
\[
W(S,M,m)=\max\left\{M,\max_{i\in S}Q_i(S,M,m)\right\},
\]
where
\[
Q_i(S,M,m)=-c+\E_Y\left[W\left(S\setminus\{i\},\max(M,\alpha_i+Y),m'(Y)\right)\right],
\]
$Y\sim\N(m,v_t+\sigma_\eta^2)$ and $m'(Y)=v_{t+1}(m/v_t+Y/\sigma_\eta^2)$. The terminal condition is $W(\varnothing,M,m)=M$. All values are finite because the number of remaining items is finite and their predictive distributions have finite absolute moments.

For every $d\ge0$,
\[
0\le W(S,M+d,m)-W(S,M,m)\le d.
\]
To prove this, induct on $|S|$. The terminal condition has the stated property. For each inspection action, increasing $M$ by $d$ increases $\max(M,\alpha_i+Y)$ by an amount in $[0,d]$, while leaving the predictive law and posterior update unchanged. The inductive hypothesis and expectation preserve these bounds for $Q_i$. Maximizing over inspection actions and stopping preserves them for $W$.

\paragraph{Step 2: Comparing two inspection orders.}
Fix a state $(S,M,m)$ and ranks $i,j\in S$ with $\alpha_i\ge\alpha_j$. Write $R=S\setminus\{i,j\}$. Consider the strategy that inspects $i$, then either stops or inspects $j$, and after both inspections proceeds optimally over $R$. Let its value be $P_{ij}$, with the intermediate stopping decision chosen optimally. Define $P_{ji}$ by reversing the two ranks.

All expectations in this comparison condition on the current state. Represent the signals in inspection order as $Y_1=\mu+\eta_1$ and $Y_2=\mu+\eta_2$, with independent Gaussian noises conditional on $\mu$. Their joint predictive law is exchangeable. After the first signal $y$, the conditional law of $Y_2$ is the same under either order. After both signals $y,z$, the posterior mean is
\[
m_2(y,z)=v_{t+2}\left(\frac{m}{v_t}+\frac{y+z}{\sigma_\eta^2}\right),
\]
which is symmetric in $y,z$ and common to both orders.

The payoffs from stopping after the first inspection are
\[
A_i(y)=\max(M,\alpha_i+y),\qquad A_j(y)=\max(M,\alpha_j+y).
\]
Set $d(y)=A_i(y)-A_j(y)\ge0$. After the second signal $z$, the best observed relevances are
\[
H_{ij}(y,z)=\max(A_i(y),\alpha_j+z),\qquad
H_{ji}(y,z)=\max(A_j(y),\alpha_i+z).
\]
Since $A_i(y)=A_j(y)+d(y)$ and $\alpha_j+z\le\alpha_i+z$, we have $H_{ij}(y,z)\le H_{ji}(y,z)+d(y)$. Step 1 therefore gives
\[
W(R,H_{ij}(y,z),m_2(y,z))-W(R,H_{ji}(y,z),m_2(y,z))\le d(y).
\]

Let $B_{ij}(y)$ be the value of making the second inspection after observing the first signal $y$:
\[
B_{ij}(y)=-c+\E\left[W(R,H_{ij}(y,Y_2),m_2(y,Y_2))\mid Y_1=y\right],
\]
and define $B_{ji}(y)$ analogously. Taking conditional expectations in the preceding inequality yields
\[
B_{ij}(y)-B_{ji}(y)\le d(y)=A_i(y)-A_j(y),
\]
or equivalently,
\[
A_i(y)-B_{ij}(y)\ge A_j(y)-B_{ji}(y).
\]
Thus, after the same first signal, the advantage of stopping over making the second inspection is weakly greater when the higher-ranked item comes first.

To compare the two strategies before the first inspection, decompose their values into a commitment to inspect both items and an option to stop after the first:
\[
P_{ij}=-c+\E[B_{ij}(Y_1)]+\E[(A_i(Y_1)-B_{ij}(Y_1))^+].
\]
The same decomposition holds for $P_{ji}$. Exchangeability, together with $H_{ij}(y,z)=H_{ji}(z,y)$ and $m_2(y,z)=m_2(z,y)$, implies
\[
\E[B_{ij}(Y_1)]=\E[B_{ji}(Y_1)].
\]
The commitment values are therefore equal, while the stopping-option term is weakly greater for $P_{ij}$. Hence $P_{ij}\ge P_{ji}$.

\paragraph{Step 3: Optimal rank-order traversal.}
We induct on the number of remaining items. With at most one item, a rank-order policy is optimal. Suppose the claim holds for every state with fewer than $|S|$ remaining items, and let $i$ be the highest-ranked item in $S$. If the user starts with any other item $j$, the inductive hypothesis gives an optimal continuation that either stops or inspects $i$ next. After both inspections, she proceeds optimally over $R$. Consequently, $Q_j(S,M,m)=P_{ji}$. Starting with $i$, she can follow the strategy defining $P_{ij}$, so Step 2 gives
\[
Q_i(S,M,m)\ge P_{ij}\ge P_{ji}=Q_j(S,M,m).
\]
Thus, at every state, an optimal action is either to stop or to inspect the highest-ranked remaining item. Applying this choice recursively constructs an optimal rank-order policy $\pi'$. At the initial state $(\{1,\ldots,N\},x_b,m_0)$, it follows that for every policy $\pi$,
\begin{equation}\label{eq:dominance}
\E[U(\pi')]=W(\{1,\ldots,N\},x_b,m_0)\ge\E[U(\pi)].
\end{equation}
This proves the lemma. $\hfill\qed$

\subsection{Proof of Lemma~\ref{lem:posterior}}

\emph{Posterior of $\mu$.}
By Lemma~\ref{lem:predictive}, $y_i \mid \mu \sim \N(\mu, \sigma_\eta^2)$ iid, so the model follows the standard conjugate Gaussian--Gaussian setting. The conditional joint density of $(y_1, \ldots, y_t)$ given $\mu$ is
\[
p(y_1, \ldots, y_t \mid \mu) = \prod_{s=1}^{t} \frac{1}{\sqrt{2\pi}\sigma_\eta} \exp\left(-\frac{(y_s - \mu)^2}{2\sigma_\eta^2}\right) \propto \exp\left(-\frac{1}{2\sigma_\eta^2}\sum_{s=1}^{t}(y_s - \mu)^2\right).
\]
Because $\alpha_i$ is deterministic, conditioning on $H_t$ is equivalent to conditioning on $(y_1, \ldots, y_t)$.
Combining the likelihood with the Gaussian prior $\mu \sim \N(m_0, v_0)$ and absorbing $\mu$-free factors,
\[
p(\mu \mid H_t) \propto \exp\left(-\frac{(\mu - m_0)^2}{2 v_0}\right) \exp\left(-\frac{1}{2\sigma_\eta^2}\sum_{s=1}^{t}(y_s - \mu)^2\right).
\]
Denote $v_t = (1/v_0 + t/\sigma_\eta^2)^{-1}$ and $m_t = v_t(m_0/v_0 + \sum_{s=1}^{t}y_s/\sigma_\eta^2)$. Expanding the squares and collecting terms in $\mu$ inside the exponent,
\begin{align*}
	-\frac{(\mu - m_0)^2}{2 v_0} - \frac{1}{2\sigma_\eta^2}\sum_{s=1}^{t}(y_s - \mu)^2 &= -\frac{1}{2}\left(\frac{1}{v_0}+\frac{t}{\sigma_\eta^2}\right)\mu^2 + \left(\frac{m_0}{v_0}+\sum_{s=1}^{t}\frac{y_s}{\sigma_\eta^2}\right)\mu + C_1 \\
	& = -\frac{1}{2v_t}(\mu - m_t)^2 + C_2
\end{align*}
where $C_1$ and $C_2$ absorb $\mu$-free terms. So
\[
p(\mu \mid H_t) \propto \exp\left(-\frac{1}{2v_t}(\mu - m_t)^2 \right).
\]
This is a Gaussian density with mean $m_t$ and variance $v_t$. Hence $\mu \mid H_t \sim \N(m_t, v_t)$, as claimed.

\emph{Posterior predictive of $x_{t+1}$.}
By Lemma~\ref{lem:predictive}, decompose $x_{t+1} = \mu + \alpha_{t+1} + \eta_{t+1}$, where $\eta_{t+1} \mid \mu \sim \N(0, \sigma_\eta^2)$. Note that $\eta_{t+1}$ is independent of $(\mu, \eta_1, \ldots, \eta_t)$, and hence independent of $H_t$. It follows that $\eta_{t+1}$ is independent of $(\mu, H_t)$ jointly, so $\eta_{t+1} \mid H_t \sim \N(0, \sigma_\eta^2)$.

By the posterior just derived, $\mu \mid H_t \sim \N(m_t, v_t)$. The conditional law of $x_{t+1}$ given $H_t$ is therefore the convolution of two conditionally independent Gaussians (plus the deterministic shift $\alpha_{t+1}$):
\[
x_{t+1} \mid H_t = \underbrace{(\mu \mid H_t)}_{\N(m_t, v_t)} + \alpha_{t+1} + \underbrace{(\eta_{t+1} \mid H_t)}_{\N(0, \sigma_\eta^2)} \sim \N\big(m_t + \alpha_{t+1}, v_t + \sigma_\eta^2\big),
\]
where independence makes the variances additive. $\hfill\qed$

\subsection{Proof of Proposition~\ref{prop:reservation}}

\paragraph{Roadmap.} Fix $t \in \{0, 1, \ldots, N-1\}$. Define the continuation value (inspect branch of \eqref{eq:bellman})
\begin{equation}\label{eq:Cdef}
C_t(M, m) := -c + \E_{x_{t+1}}\left[V_{t+1}\big(\max(M, x_{t+1}), m_{t+1}(x_{t+1})\big)\right],
\end{equation}
with $x_{t+1} \sim \N(m + \alpha_{t+1}, v_t + \sigma_\eta^2)$ and $m_{t+1}(\cdot)$ as in \eqref{eq:update}. The Bellman equation \eqref{eq:bellman} reads $V_t(M, m) = \max\{M, C_t(M, m)\}$,
so the optimal policy stops at epoch $t$ if and only if $C_t(M, m) \le M$. Define the stop-continue gap
\begin{equation}\label{eq:Deltadef}
\Delta_t(M, m) := C_t(M, m) - M.
\end{equation}
The proposition is then equivalent to: there is a constant $\kappa_t^\star$, independent of $m$ and $x_b$, such that the equation $\Delta_t(M, m) = 0$ has the unique solution $M = m + \alpha_{t+1} + \kappa_t^\star$ for every $m$, and the policy stops at $t$ if and only if $M \ge m + \alpha_{t+1} + \kappa_t^\star$.

The argument has two parts. Steps S1--S7 show by backward induction that for each $m$ the equation $\Delta_t(M, m) = 0$ has a unique solution $R_t(m)$ that characterizes the stop region. Steps S8--S9 use translation equivariance of the Bellman operator to reduce $R_t(m)$ to a $1$-Lipschitz affine function of $m$ with slope $1$, yielding the linear form \eqref{eq:R-affine-opt}.

The nine steps are:
\begin{enumerate}
\item[(S1)] Two-sided bound on $V_t$ (so the predictive integrals are finite).
\item[(S2)] $V_t$ and $C_t$ are continuous in $(M, m)$.
\item[(S3)] $V_t$ is $1$-Lipschitz and nondecreasing in $M$.
\item[(S4)] $\Delta_t(\cdot, m)$ is \emph{strictly} decreasing in $M$.
\item[(S5)] $V_t$, hence $\Delta_t$, is nondecreasing in $m$.
\item[(S6)] $\Delta_t(M, m) \to +\infty$ as $M \to -\infty$, and $\Delta_t(M, m) \to -c$ as $M \to +\infty$.
\item[(S7)] For each $m \in \R$, there is a unique $R_t(m) \in \R$ with $\Delta_t(R_t(m), m) = 0$, and the policy stops at $t$ if and only if $M \ge R_t(m)$.
\item[(S8)] Translation equivariance: $V_t(M+\delta, m+\delta) = V_t(M, m) + \delta$ for every $\delta \in \R$.
\item[(S9)] Affine form: $R_t(m) = m + \alpha_{t+1} + \kappa_t^\star$, with $\kappa_t^\star$ a deterministic constant independent of $m$ and of $x_b$.
\end{enumerate}
We proceed by backward induction on $t$, starting from the terminal condition $V_N(M, m) = M$.

\paragraph{Step 1: A two-sided bound on $V_t$.}
We claim
\begin{equation}\label{eq:Vbounds}
M \le V_t(M, m) \le M + \sum_{s = t}^{N-1} \E_{t,m}^{(s)}\left[(x_{s+1} - M)^+\right],
\end{equation}
where $\E_{t,m}^{(s)}$ denotes expectation under $x_{s+1} \sim \N(m + \alpha_{s+1}, v_t + \sigma_\eta^2)$. This is the predictive distribution of rank $s+1$ given the current time-$t$ belief $\mu \sim \N(m,v_t)$.

\emph{Lower bound.} The stop branch gives $V_t(M, m) \ge M$ directly.

\emph{Upper bound.} For any policy, dropping all remaining inspection costs and revealing every remaining item can only increase its payoff. For every realization,
\[
\max(M,x_{t+1},\ldots,x_N) \le M + \sum_{s=t}^{N-1}(x_{s+1}-M)^+.
\]
Taking expectations under the current belief and then optimizing over policies gives \eqref{eq:Vbounds}. At $t=N$, the sum is empty and $V_N(M,m)=M$. Since each $\E_{t,m}^{(s)}[(x_{s+1} - M)^+] \le \E_{t,m}^{(s)}[|x_{s+1}|] + |M| < \infty$, the bound is finite.

\paragraph{Step 2: Continuity of $V_t$ and $C_t$ in $(M, m)$.}
At $t = N$, $V_N(M, m) = M$ is continuous. Assume $V_{t+1}$ is continuous in $(M, m)$. For each fixed $x_{t+1}$, the integrand of \eqref{eq:Cdef},
\[
\Psi_{t+1}(M, m; x_{t+1}) := V_{t+1}\big(\max(M, x_{t+1}), m_{t+1}(x_{t+1})\big),
\]
is continuous in $(M, m)$ for each $x_{t+1}$. To see this, note that $\max(M, x_{t+1})$ is continuous in $M$, and the update map \eqref{eq:update},
\[
m_{t+1}(x_{t+1}; m) = \frac{v_{t+1}}{v_t}m + \frac{v_{t+1}}{\sigma_\eta^2}(x_{t+1} - \alpha_{t+1}),
\]
is affine, and hence continuous, in $m$. The inductive hypothesis ($V_{t+1}$ is continuous in $(M, m)$) gives continuity of $\Psi_{t+1}$ in $(M, m)$. To apply dominated convergence under a fixed probability law, write $x_{t+1}=m+\alpha_{t+1}+\sqrt{v_t+\sigma_\eta^2}\,Z$ with $Z\sim\N(0,1)$. For $M'=\max(M,x_{t+1})$ and $m'=m_{t+1}(x_{t+1};m)$, Step 1 gives
\[
|V_{t+1}(M',m')|
\le (N-t)|M'|+\sum_{s=t+1}^{N-1}\E_{t+1,m'}^{(s)}[|x_{s+1}|].
\]
On any compact set of $(M,m)$, both $|M'|$ and $|m'|$ are bounded by a constant times $1+|Z|$. Each Gaussian absolute moment in the sum is at most $|m'+\alpha_{s+1}|+\sqrt{v_{t+1}+\sigma_\eta^2}\,\E[|Z|]$. Thus, the integrand has an integrable envelope $K(1+|Z|)$ for a constant $K$ depending only on the compact set and the primitives. Dominated convergence gives continuity of $C_t$, and hence of $V_t=\max(M,C_t)$.

\paragraph{Step 3: $V_t$ is $1$-Lipschitz and nondecreasing in $M$.}
We claim by backward induction on $t$ that for all $M_1 \le M_2$ and $m \in \R$,
\begin{equation}\label{eq:Lipschitz}
0 \le V_t(M_2, m) - V_t(M_1, m) \le M_2 - M_1.
\end{equation}
At $t = N$, $V_N(M, m) = M$ satisfies \eqref{eq:Lipschitz} with equality on the right. Assume \eqref{eq:Lipschitz} at $t+1$.

\emph{The stop branch} $M$ trivially satisfies \eqref{eq:Lipschitz}.

\emph{The continuation branch.} Fix $x_{t+1}$. The map $M \mapsto \max(M, x_{t+1})$ is $1$-Lipschitz nondecreasing, so $\max(M_2, x_{t+1}) - \max(M_1, x_{t+1}) \in [0, M_2 - M_1]$ (note that $M_1 \le M_2$). Combining with the inductive hypothesis \eqref{eq:Lipschitz} at $t+1$,
\begin{multline*}
0 \le V_{t+1}\big(\max(M_2, x_{t+1}), m_{t+1}\big) - V_{t+1}\big(\max(M_1, x_{t+1}), m_{t+1}\big) \\
\le \max(M_2, x_{t+1}) - \max(M_1, x_{t+1}) \le M_2 - M_1.
\end{multline*}
Taking expectation over $x_{t+1}$ (whose predictive law does not depend on $M$) preserves the bound on $C_t$. Finally, $V_t = \max(M, C_t)$ is the pointwise max of two $1$-Lipschitz nondecreasing functions, hence itself $1$-Lipschitz nondecreasing.

\paragraph{Step 4: $\Delta_t(\cdot, m)$ is strictly decreasing in $M$.}
We show that for $M_1 < M_2$,
\begin{equation}\label{eq:Cgap}
C_t(M_2, m) - C_t(M_1, m) \le (M_2 - M_1)\Pr(x_{t+1} \le M_2) < M_2 - M_1,
\end{equation}
the strict inequality on the right because $x_{t+1}$ has full support on $\R$, so $\Pr(x_{t+1} \le M_2) < 1$.

Fix any realization $x_{t+1} = x$, and let $\Psi(M) := V_{t+1}(\max(M, x), m_{t+1}(x; m))$. We compare $\Psi(M_2)$ to $\Psi(M_1)$ in three regimes:
\begin{itemize}
\item \emph{If $x \ge M_2$:} $\max(M_i, x) = x$ for $i = 1, 2$, so $\Psi(M_2) - \Psi(M_1) = 0$.
\item \emph{If $M_1 < x < M_2$:} $\max(M_2, x) = M_2$ and $\max(M_1, x) = x$, so by Step 3
\[
\Psi(M_2) - \Psi(M_1) = V_{t+1}(M_2, \cdot) - V_{t+1}(x, \cdot) \in [0, M_2 - x] \subseteq [0, M_2 - M_1].
\]
\item \emph{If $x \le M_1$:} $\max(M_i, x) = M_i$, so $\Psi(M_2) - \Psi(M_1) = V_{t+1}(M_2, \cdot) - V_{t+1}(M_1, \cdot) \in [0, M_2 - M_1]$.
\end{itemize}
In all three cases, $\Psi(M_2) - \Psi(M_1) \le (M_2 - M_1)\mathbf{1}\{x \le M_2\}$ pointwise. Taking expectations,
\[
C_t(M_2, m) - C_t(M_1, m) = \E_{x_{t+1}}[\Psi(M_2) - \Psi(M_1)] \le (M_2 - M_1)\Pr(x_{t+1} \le M_2),
\]
which is \eqref{eq:Cgap}. Subtracting $M_2 - M_1$ from both sides,
\begin{equation*}
\Delta_t(M_2, m) - \Delta_t(M_1, m) \le (M_2 - M_1)\big(\Pr(x_{t+1} \le M_2) - 1\big) < 0,
\end{equation*}
so $\Delta_t(\cdot, m)$ is strictly decreasing on $\R$.

\paragraph{Step 5: $V_t$ and $\Delta_t$ are nondecreasing in $m$.}
We show $V_{t}(M, m)$ is nondecreasing in $m$ by backward induction. At $t = N$, $V_N(M, m) = M$, which is nondecreasing in $m$. 
Suppose $V_{t+1}(M, m)$ is nondecreasing in $m$, write $x_{t+1} = m + \alpha_{t+1} + \xi$ where $\xi \sim \N(0, v_t + \sigma_\eta^2)$ has a law that does not depend on $m$. Substituting into \eqref{eq:update} and \eqref{eq:Cdef},
\[
m_{t+1}(x_{t+1}; m) = \frac{v_{t+1}}{v_t}m + \frac{v_{t+1}}{\sigma_\eta^2}(m + \xi) = v_{t+1}\left(\frac{1}{v_t} + \frac{1}{\sigma_\eta^2}\right) m + \frac{v_{t+1}}{\sigma_\eta^2}\xi = m + \frac{v_{t+1}}{\sigma_\eta^2}\xi,
\]
where the last equality uses the precision identity $1/v_{t+1} = 1/v_t + 1/\sigma_\eta^2$. Thus, the continuation value may be expressed in terms of the $m$-free noise $\xi$,
\begin{equation}\label{eq:Cdef-eps}
C_t(M, m) = -c + \E_\xi\left[V_{t+1}\Big(\max(M, m + \alpha_{t+1} + \xi), m + \tfrac{v_{t+1}}{\sigma_\eta^2}\xi\Big)\right].
\end{equation}
For each fixed $\xi$, both arguments of $V_{t+1}$ are nondecreasing in $m$. By Step 3 and the inductive hypothesis, $V_{t+1}$ is jointly nondecreasing in its two arguments, so the integrand of \eqref{eq:Cdef-eps} is nondecreasing in $m$. Taking expectation over $\xi$ (whose law is $m$-free) preserves this. Therefore, $C_t$ is nondecreasing in $m$, and so is $V_t = \max(M, C_t)$. The stop branch $M$ does not depend on $m$, so $\Delta_t(M, m) = C_t(M, m) - M$ is also nondecreasing in $m$.

\paragraph{Step 6: Boundary behavior of $\Delta_t$ as $M \to \pm\infty$.}

\emph{As $M \to -\infty$.} Then $\max(M, x_{t+1}) = x_{t+1}$ a.s., and by Step 1's lower bound applied at $t+1$,
\[
V_{t+1}\big(\max(M, x_{t+1}), m_{t+1}\big) \ge \max(M, x_{t+1}) = x_{t+1}.
\]
Hence
\[
C_t(M, m) \ge -c + \E_{x_{t+1}}[x_{t+1}] = -c + m + \alpha_{t+1},
\]
which is bounded below uniformly in $M$. Therefore $\Delta_t(M, m) = C_t(M, m) - M \to +\infty$ as $M \to -\infty$.

\emph{As $M \to +\infty$.} Applying Step 1 at $t+1$ with $M'=\max(M,x_{t+1})$ and using $(x_{s+1}-M')^+\le(x_{s+1}-M)^+$ gives
\[
\Delta_t(M,m)\le -c+\E_{x_{t+1}}[(x_{t+1}-M)^+]
+\sum_{s=t+1}^{N-1}\E_{x_{t+1}}\E_{t+1,m_{t+1}}^{(s)}[(x_{s+1}-M)^+].
\]
The inner expectations condition on the updated time-$(t+1)$ belief. Averaging over $x_{t+1}$ returns the time-$t$ predictive distribution by the law of iterated expectations. Also, $V_{t+1}(M',m_{t+1})\ge M'\ge M$ implies $\Delta_t(M,m)\ge-c$. Therefore,
\[
-c\le\Delta_t(M,m)\le-c+\sum_{s=t}^{N-1}\E_{t,m}^{(s)}[(x_{s+1}-M)^+].
\]
For fixed $m$, every term in the finite sum tends to zero: for $M\ge0$, $(x_{s+1}-M)^+\le x_{s+1}^+$, which is integrable under the time-$t$ Gaussian predictive distribution. Dominated convergence and the two-sided bound give $\Delta_t(M,m)\to-c<0$.

\paragraph{Step 7: $R_t(m)$ exists, is unique, and characterizes the stop region.}
Fix $m \in \R$. By Step 2, $\Delta_t(\cdot, m)$ is continuous; by Step 4, it is strictly decreasing; by Step 6, $\Delta_t(M, m) \to +\infty$ as $M \to -\infty$ and $\Delta_t(M, m) \to -c < 0$ as $M \to +\infty$. The intermediate value theorem, together with strict monotonicity, gives a unique $R_t(m) \in \R$ with
\begin{equation*}
\Delta_t(R_t(m), m) = 0, \qquad \text{i.e.,}\qquad C_t(R_t(m), m) = R_t(m).
\end{equation*}
Since $\Delta_t(\cdot, m)$ is strictly decreasing,
\[
M \ge R_t(m) \iff \Delta_t(M, m) \le 0 \iff C_t(M, m) \le M,
\]
i.e., stopping is (weakly) optimal. This proves the threshold characterization in the proposition.

\paragraph{Step 8: Translation equivariance of $V_t$.}
We show by the same backward induction that $V_t(M+\varepsilon, m+\varepsilon) = V_t(M, m) + \varepsilon$ for every $\varepsilon \in \R$. At $t = N$, $V_N(M+\varepsilon, m+\varepsilon) = M+\varepsilon = V_N(M, m) + \varepsilon$. Suppose the claim holds at $t+1$. Using the $m$-free noise $\xi = x_{t+1} - m - \alpha_{t+1} \sim \N(0, v_t + \sigma_\eta^2)$ and the simplified update $m_{t+1}(x_{t+1}; m) = m + (v_{t+1}/\sigma_\eta^2)\xi$ from Step 5,
\begin{align*}
C_t(M+\varepsilon, m+\varepsilon) &= -c + \E_\xi\Big[V_{t+1}\big(\max(M+\varepsilon, m+\varepsilon+\alpha_{t+1}+\xi), m+\varepsilon+(v_{t+1}/\sigma_\eta^2)\xi\big)\Big] \\
&= -c + \E_\xi\Big[V_{t+1}\big(\max(M, m+\alpha_{t+1}+\xi)+\varepsilon, m+(v_{t+1}/\sigma_\eta^2)\xi+\varepsilon\big)\Big] \\
&= -c + \E_\xi\Big[V_{t+1}\big(\max(M, m+\alpha_{t+1}+\xi), m+(v_{t+1}/\sigma_\eta^2)\xi\big)\Big] + \varepsilon \\
&= C_t(M, m) + \varepsilon,
\end{align*}
where the second line uses $\max(a+\varepsilon, b+\varepsilon) = \max(a,b)+\varepsilon$ and the third applies the inductive hypothesis. Therefore,
\[
V_t(M+\varepsilon, m+\varepsilon) = \max\{M+\varepsilon, C_t(M+\varepsilon, m+\varepsilon)\} = \max\{M, C_t(M, m)\} + \varepsilon = V_t(M, m) + \varepsilon.
\]

\paragraph{Step 9: Affine form and $x_b$-independence.}
Setting $\varepsilon = -m$ in Step 8 gives $V_t(M, m) = V_t(M-m, 0) + m$, and applying the same shift to $C_t$ gives $C_t(M, m) = C_t(M-m, 0) + m$. Define $\bar C_t(X) := C_t(X, 0)$. By Step 7, $R_t(m)$ is the unique solution of $C_t(R_t(m), m) = R_t(m)$; substituting the translated form,
\[
\bar C_t(R_t(m) - m) = R_t(m) - m,
\]
so $R_t(m) - m$ is a fixed point of $\bar C_t$. As a function of $X$, the gap $\bar C_t(X) - X$ is the restriction of $\Delta_t(\cdot, 0)$ to its first argument. By Step~4 it is strictly decreasing in $X$, and by Step~6 it tends to $+\infty$ as $X \to -\infty$ and to $-c$ as $X \to +\infty$. The fixed point is therefore unique and, in particular, independent of $m$. Call this constant $\alpha_{t+1} + \kappa_t^\star$. Then $R_t(m) = m + \alpha_{t+1} + \kappa_t^\star$ is \eqref{eq:R-affine-opt}, and $R_t$ is automatically continuous and strictly increasing in $m$ as an affine function with slope $1$.

Finally, the Bellman recursion at $t \in \{0, \ldots, N-1\}$ has terminal condition $V_N(M, m) = M$ and a transition that depends only on $(M, m, x_{t+1})$ and the deterministic primitives $(\alpha_{t+1}, v_t, \sigma_\eta^2, c)$; the outside option $x_b$ enters the model only through the initial condition $M_0 = x_b$ and not through the recursion. Hence $V_t$, $C_t$, $\bar C_t$, and the fixed point $\kappa_t^\star$ are all $x_b$-independent. $\hfill\qed$

\subsection{Proof of Proposition~\ref{prop:myopic}}

At epoch $t \in \{0, \ldots, N-1\}$, the one-step look-ahead rule treats the next epoch as terminal: inspect if and only if the expected payoff of inspecting rank $t+1$ and stopping weakly exceeds the immediate stop value,
\[
-c + \E\left[\max(M_t, x_{t+1}) \big| H_t\right] \ge M_t.
\]
Using $\max(M_t, x_{t+1}) = M_t + (x_{t+1} - M_t)^+$ and taking conditional expectation, this is equivalent to
\begin{equation*}
c \le \E\left[(x_{t+1} - M_t)^+ \big| H_t\right].
\end{equation*}
By Lemma~\ref{lem:posterior}, $x_{t+1} \mid H_t \sim \N(m_t + \alpha_{t+1}, \sigma_t^{\star 2})$. Applying Lemma~\ref{lem:gaussiantrunc} with $a = M_t$, $\mu = m_t + \alpha_{t+1}$, $\sigma = \sigma_t^\star$, and writing $d_t := (M_t - m_t - \alpha_{t+1})/\sigma_t^\star$,
\begin{equation*}
\E\left[(x_{t+1} - M_t)^+ \mid H_t\right] = \sigma_t^\star\big(\phi(d_t) - d_t\Phi(-d_t)\big) = \sigma_t^\star g(d_t),
\end{equation*}
using the definition of $g$ in \eqref{eq:gdef}.
Thus, the myopic rule continues at epoch $t$ if and only if $\sigma_t^\star g(d_t) \ge c$ and stops if and only if $\sigma_t^\star g(d_t) \le c$. The reservation $R_t^{\mathrm{myo}}(m_t)$ is the value of $M_t$ at which equality holds:
\[
g\left(\frac{R_t^{\mathrm{myo}}(m_t) - m_t - \alpha_{t+1}}{\sigma_t^\star}\right) = \frac{c}{\sigma_t^\star}.
\]
By Lemma~\ref{lem:gproof}, $g : \R \to (0, \infty)$ is a strictly decreasing bijection, so we may invert,
\[
R_t^{\mathrm{myo}}(m_t) = m_t + \alpha_{t+1} + \sigma_t^\star g^{-1}\left(c/\sigma_t^\star\right) = m_t + \alpha_{t+1} + \kappa_t,
\]
which is \eqref{eq:Rmyo-general}: stopping is myopically optimal if and only if $M_t - m_t \ge \alpha_{t+1} + \kappa_t$.

As $v_{t+1}^{-1} = v_t^{-1} + \sigma_\eta^{-2}$, $v_t$ and hence $\sigma_t^{\star 2} = v_t + \sigma_\eta^2$ are strictly decreasing in $t$. View $\kappa$ as a function of $s > 0$ via $\kappa(s) := s g^{-1}(c/s)$. The implicit-function characterization $s g(\kappa/s) = c$ becomes $F(\kappa, s) := s\phi(\kappa/s) - \kappa\Phi(-\kappa/s) - c = 0$. Using $\phi'(d) = -d\phi(d)$ and $\phi(-d) = \phi(d)$,
\[
\frac{\partial F}{\partial \kappa} = -\Phi(-\kappa/s) < 0, \qquad \frac{\partial F}{\partial s} = \phi(\kappa/s) > 0.
\]
By the implicit function theorem, $d\kappa/ds = \phi(\kappa/s)/\Phi(-\kappa/s) > 0$, so $\kappa(s)$ is strictly increasing. Composing with $s = \sigma_t^\star$, which is strictly decreasing in $t$, gives $\kappa_t = \kappa(\sigma_t^\star)$ strictly decreasing in $t$.

At $t = 0$, $\sigma_0^{\star 2} = v_0 + \sigma_\eta^2$ gives the stated value of $\kappa_0$. As $t \to \infty$, $v_t \to 0$, so $\sigma_t^\star \to \sigma_\eta$, and by continuity of $g^{-1}$, $\kappa_t \to \sigma_\eta g^{-1}(c/\sigma_\eta) =: \kappa_\infty$. This proves \eqref{eq:kappa-bounds-general}.$\hfill\qed$

\subsection{Proof of Lemma~\ref{lem:lead-markov}}

\emph{Markov property and one-step recursion \eqref{eq:L-recursion}.}
By definition, $x_t = m_{t-1} + \alpha_t + \xi_t$ with $\xi_t \mid H_{t-1} \sim \N(0, \sigma_{t-1}^{\star 2})$ independent of $H_{t-1}$ (Lemma~\ref{lem:posterior}). Substituting into the dynamics
\[
M_t = \max(M_{t-1}, x_t), \qquad m_t = m_{t-1} + \frac{v_t}{\sigma_\eta^2}\bigl(x_t - \alpha_t - m_{t-1}\bigr),
\]
we obtain $M_t - m_{t-1} = \max(L_{t-1}, \alpha_t + \xi_t)$ and $m_t - m_{t-1} = \omega_t\xi_t$, using $v_t/\sigma_\eta^2 = \omega_t$. Subtracting yields \eqref{eq:L-recursion}. Because $L_t$ is a measurable function of $L_{t-1}$ alone and an exogenous Gaussian shock $\xi_t$ independent of $H_{t-1}$, the chain $\{L_t\}$ is Markov with respect to its natural filtration.

\emph{Conditional density \eqref{eq:L-density}.}
Define $\Psi_t(l, \xi) := \max(l, \alpha_t + \xi) - \omega_t\xi$, so $L_t = \Psi_t(L_{t-1}, \xi_t)$. The map $\xi \mapsto \Psi_t(l, \xi)$ is V-shaped: on $\xi \le l - \alpha_t$ it equals $l - \omega_t\xi$; on $\xi > l - \alpha_t$ it equals $\alpha_t + (1-\omega_t)\xi$; and the two branches meet at the kink with common value $L_{\min}(l) = (1-\omega_t)l + \omega_t\alpha_t$. Hence $\Psi_t(l, \cdot)$ takes values in $[L_{\min}(l), \infty)$. For $y \ge L_{\min}(l)$, the sublevel set $\{\xi : \Psi_t(l, \xi) \le y\}$ is the interval $[\xi_-(y), \xi_+(y)]$ with
\[
\xi_-(y) = \frac{l - y}{\omega_t}, \qquad \xi_+(y) = \frac{y - \alpha_t}{1 - \omega_t}.
\]
For $y < L_{\min}(l)$ the sublevel set is empty. Hence
\begin{equation*}
\Pr\bigl(L_t \le y \bigm| L_{t-1} = l\bigr) = \Phi\left(\frac{y - \alpha_t}{(1-\omega_t)\sigma_{t-1}^\star}\right) - \Phi\left(\frac{l - y}{\omega_t\sigma_{t-1}^\star}\right) \quad \text{for } y \ge L_{\min}(l),
\end{equation*}
and zero below. Differentiating in $y$, with $\xi_+'(y) = 1/(1 - \omega_t)$ and $\xi_-'(y) = -1/\omega_t$, delivers \eqref{eq:L-density}.$\hfill\qed$

\subsection{Proof of Proposition~\ref{prop:cont-interval}}

By Proposition~\ref{prop:reservation}, continuation past epoch $t$ is the event $\{L_t < r_t\}$. The strategy is to rewrite this event in the surprise coordinate $\xi_t$, where the lead has the piecewise-affine form of \eqref{eq:L-twobranch}, and then convert back to the relevance coordinate $x_t$ via $\xi_t = x_t - (m_{t-1} + \alpha_t)$.

\emph{Step 1: the lead as a V-shape in $\xi_t$.} By \eqref{eq:L-twobranch}, the lead at epoch $t$ satisfies
\begin{equation*}
L_t = \begin{cases}
L_{t-1} - \omega_t\xi_t & \text{if } \xi_t \le L_{t-1} - \alpha_t \quad (\text{disappointment}),\\
\alpha_t + (1 - \omega_t)\xi_t & \text{if } \xi_t > L_{t-1} - \alpha_t \quad (\text{discovery}).
\end{cases}
\end{equation*}
The two branches meet at the kink $\xi_t = L_{t-1} - \alpha_t$ with $L_{\min} := (1 - \omega_t)L_{t-1} + \omega_t\alpha_t$. On the disappointment branch $L_t$ is strictly decreasing in $\xi_t$ with slope $-\omega_t$; on the discovery branch it is strictly increasing with slope $1 - \omega_t$. Hence $L_t \ge L_{\min}$ for every $\xi_t$, with equality only at the kink.

\emph{Step 2: the trust regime, $r_t \le L_{\min}$.} Then $L_t \ge L_{\min} \ge r_t$ for every $\xi_t$, so $\{L_t < r_t\} = \emptyset$. The user stops at rank $t$ for every realization of $x_t$, which is part (a).

\emph{Step 3: the explore regime, $r_t > L_{\min}$.} On the disappointment branch, continuation requires $L_{t-1} - \omega_t \xi_t < r_t$, which is $\xi_t > (L_{t-1} - r_t)/\omega_t$. Note that we require $\xi_t \le L_{t-1} - \alpha_t$ to be on the disappointment branch. Therefore, $(L_{t-1} - r_t)/\omega_t < \xi_t \le L_{t-1} - \alpha_t$. Similarly, on the discovery branch, continuation requires $L_{t-1} - \alpha_t < \xi_t < (r_t - \alpha_t)/(1 - \omega_t)$.

Hence, the two restrictions glue into a single open interval, and
\begin{equation}\label{eq:continuation-interval}
	\{L_t < r_t\} \Longleftrightarrow \frac{L_{t-1} - r_t}{\omega_t} < \xi_t < \frac{r_t - \alpha_t}{1 - \omega_t}.
\end{equation}
Substituting $\xi_t = x_t - (m_{t-1} + \alpha_t)$ into \eqref{eq:continuation-interval} yields
\[
m_{t-1} + \alpha_t + \frac{L_{t-1} - r_t}{\omega_t} < x_t < m_{t-1} + \alpha_t + \frac{r_t - \alpha_t}{1 - \omega_t}.
\]
The endpoints are $\underline{x}_t(H_{t-1})$ and $\overline{x}_t(H_{t-1})$ as defined in the proposition, so \eqref{eq:cont-interval} holds.$\hfill\qed$

\subsection{Proof of Corollary~\ref{cor:tau-one}}

If $h(x_b) \ge 0$, then $h \ge 0$ everywhere, the stopping set is $\R$, and $\Pr(T = 1) = 1$. Substituting the explicit form of $h$ into $h(x_b) \ge 0$ yields $\alpha_1 - \alpha_2 \ge \kappa_1^\star + (1-\omega_1)(m_0 + \alpha_1 - x_b)$.

For the case of $h(x_b) < 0$, solving $h(\cdot) = 0$ on each branch directly yields the closed forms in \eqref{eq:s1-explicit}: on the lower branch $h$ has slope $-\omega_1$, so $\omega_1 \underline{x}_1 = x_b + \omega_1\alpha_1 - (1-\omega_1)m_0 - \alpha_2 - \kappa_1^\star$, and on the upper branch $h$ has slope $1-\omega_1$, so $(1-\omega_1) \overline{x}_1 = (1-\omega_1)m_0 + \alpha_2 + \kappa_1^\star - \omega_1\alpha_1$. A direct calculation shows that the inequalities $\underline{x}_1 < x_b$ and $\overline{x}_1 > x_b$ are each equivalent to $h(x_b) < 0$, so both hold in this regime. The stopping set is $(-\infty, \underline{x}_1] \cup [\overline{x}_1, \infty)$; integrating $x_1 \sim \N(m_0 + \alpha_1, v_0 + \sigma_\eta^2)$ over this union yields \eqref{eq:Ptau1-twopiece}.$\hfill\qed$

\subsection{Proof of Corollary~\ref{cor:reliable-ranker-starvation}}

Along the stated path,
\[
        \sigma_\eta^2(\rho)=\sigma_x^2(1-\rho)\to 0,
        \qquad
        \alpha_i(\rho)=\sigma_x\sqrt{\rho}q_i\to \alpha_i^\infty:=\sigma_xq_i,
\]
and $\alpha_1^\infty>\alpha_2^\infty>\cdots>\alpha_N^\infty$ inherits the strict order of the quantile sequence. The Bayesian weight on the first item is
\[
        \omega_1(\rho)
        =
        \frac{v_0}{v_0+\sigma_\eta^2(\rho)}
        \to 1,
\]
and the posterior variance after one inspection is
\[
        v_1(\rho)=\frac{v_0\sigma_\eta^2(\rho)}{v_0+\sigma_\eta^2(\rho)}\to 0.
\]

Recall from the proof of Proposition~\ref{prop:reservation} the inspect-branch value $C_1(M_1,m_1;\rho)$ from \eqref{eq:Cdef} and the stop-continue gap $\Delta_1(M_1,m_1;\rho)=C_1(M_1,m_1;\rho)-M_1$ from \eqref{eq:Deltadef}. By Step~S8 of that proof, $C_1$ is translation-equivariant: $C_1(M_1+a,m_1+a;\rho)=C_1(M_1,m_1;\rho)+a$ for every $a\in\R$. Setting $a=-m_1$ gives $C_1(M_1,m_1;\rho)=m_1+C_1(l,0;\rho)$ with $l:=M_1-m_1$, so
\[
        \Delta_1(M_1,m_1;\rho)=C_1(M_1,m_1;\rho)-M_1=C_1(l,0;\rho)-l
\]
depends on $(M_1,m_1)$ only through $l$. With a slight abuse, write $\Delta_1(l;\rho)$. By Step~S4, $\Delta_1(\cdot;\rho)$ is strictly decreasing in $l$, and $r_1(\rho):=\alpha_2(\rho)+\kappa_1^\star(\rho)$ is its unique zero.

\emph{Step 1: Upper bound on $r_1(\rho)$ via free information.}
At epoch~$1$ with state $(M_1,m_1)$, any continuation policy pays at least one inspection cost ($c$) and produces a terminal payoff bounded above by $\max\{M_1,x_2,\ldots,x_N\}$, so
\[
        C_1(M_1,m_1;\rho)
        \le
        -c+\E\bigl[\max\{M_1,x_2,\ldots,x_N\}\bigm| M_1,m_1\bigr].
\]
Subtracting $M_1$ from both sides gives an upper bound on the stop-continue gap:
\begin{equation}\label{eq:Delta1-ub-raw}
        \Delta_1(M_1,m_1;\rho)
        \le
        -c+\E\bigl[\max\{0,x_2-M_1,\ldots,x_N-M_1\}\bigm| M_1,m_1\bigr].
\end{equation}

We now rewrite the right-hand side as a function of $l$ alone. Set $\delta:=\mu-m_1$, the residual uncertainty about $\mu$ at epoch~$1$. By the Bayesian update, the posterior on $\mu$ given the user's epoch-1 information is $\N(m_1,v_1(\rho))$, so $\delta$ has conditional distribution $\N(0,v_1(\rho))$ given that information. The future noises $\eta_i\sim\N(0,\sigma_\eta^2(\rho))$ for $i\ge 2$ are independent of $\mu$ and of one another, hence also independent of $\delta$ given the epoch-1 information. From $x_i=\mu+\alpha_i(\rho)+\eta_i$ and $M_1-m_1=l$, $x_i-M_1 = \delta+\alpha_i(\rho)+\eta_i-l$. Substituting into \eqref{eq:Delta1-ub-raw},
\[
        \Delta_1(l;\rho)
        \le
        -c+\E\bigl[\max\bigl\{0,\delta+\alpha_2(\rho)+\eta_2-l,\ldots,\delta+\alpha_N(\rho)+\eta_N-l\bigr\}\bigr]
        =:\bar\Delta_1(l;\rho),
\]
where the right-hand side now depends only on $l$ and $\rho$.

We take the limit as $\rho\uparrow 1$ by dominated convergence. The expectation $\bar\Delta_1(l;\rho)$ depends on $(\delta,\eta_2,\ldots,\eta_N)$ only through their joint distribution, so we may replace them by any equally distributed realization. A convenient one is the \emph{coupling} that fixes a single set of $\N(0,1)$ random variables $Z_0,Z_2,\ldots,Z_N$ (iid, independent of $\rho$) once and for all, and sets
\[
        \delta:=\sqrt{v_1(\rho)}Z_0,
        \qquad
        \eta_i:=\sigma_\eta(\rho)Z_i\quad(i\ge 2).
\]
Each $\delta$ has the required $\N(0,v_1(\rho))$ marginal and each $\eta_i$ has the required $\N(0,\sigma_\eta^2(\rho))$ marginal, and they remain mutually independent; the only thing the coupling adds is that as $\rho$ varies, the same underlying $Z$'s are reused, so $\delta$ and $\eta_i$ become deterministic functions of $\rho$ on a fixed sample space. As $\rho\uparrow 1$, $v_1(\rho)\to 0$ and $\sigma_\eta(\rho)\to 0$, so $\delta\to 0$ and $\eta_i\to 0$ almost surely, while $\alpha_i(\rho)\to\alpha_i^\infty$. The integrand therefore converges almost surely:
\[
        \max\bigl\{0,\delta+\alpha_2(\rho)+\eta_2-l,\ldots,\delta+\alpha_N(\rho)+\eta_N-l\bigr\}
        \to
        \max\{0,\alpha_2^\infty-l\},
\]
using $\alpha_2^\infty>\alpha_3^\infty>\cdots>\alpha_N^\infty$ to drop the lower-rank terms. The integrand is bounded in absolute value by $|l|+\sqrt{v_0}|Z_0|+\sigma_x\max_i|q_i|+\sigma_x\max_i|Z_i|$ for all $\rho$ sufficiently close to $1$ (using $v_1(\rho)\le v_0$ and $\sigma_\eta(\rho)\le\sigma_x$), which is integrable. Dominated convergence gives $\bar\Delta_1(l;\rho) \to -c+\max\{0,\alpha_2^\infty-l\}$ for every $l\in\R$.

Fix any $\varepsilon\in(0,c)$ and set $l_\varepsilon:=\alpha_2^\infty-c+\varepsilon\in(\alpha_2^\infty-c,\alpha_2^\infty)$. The limit at $l_\varepsilon$ equals $-c+(\alpha_2^\infty-l_\varepsilon)=-\varepsilon<0$, so for all $\rho$ sufficiently close to $1$, $\Delta_1(l_\varepsilon;\rho)\le\bar\Delta_1(l_\varepsilon;\rho)<0$. Strict monotonicity of $\Delta_1(\cdot;\rho)$ then forces $r_1(\rho)\le l_\varepsilon$. As $\varepsilon\in(0,c)$ is arbitrary,
\begin{equation}\label{eq:limsup-r1}
        \limsup_{\rho\uparrow 1} r_1(\rho)\le\alpha_2^\infty-c.
\end{equation}

\emph{Step 2: The trust slack at $x_b$ is positive in the limit.}
Define
\[
        \Gamma(\rho)
        :=
        \omega_1(\rho)\alpha_1(\rho)
        +(1-\omega_1(\rho))(x_b-m_0)
        -
        r_1(\rho).
\]
This equals $h_\rho(x_b)$, where $h_\rho(x_1):=M_1(x_1)-m_1(x_1)-r_1(\rho)$ is the V-shaped first-stop function from Corollary~\ref{cor:tau-one}. Because $\omega_1(\rho)\to 1$ and $\alpha_1(\rho)\to\alpha_1^\infty$, combining with \eqref{eq:limsup-r1},
\[
        \liminf_{\rho\uparrow 1}\Gamma(\rho)
        \ge
        \alpha_1^\infty-\limsup_{\rho\uparrow 1} r_1(\rho)
        \ge
        \alpha_1^\infty-(\alpha_2^\infty-c)
        =\sigma_x(q_1-q_2)+c
        >0.
\]
Hence there exists $\bar\rho<1$ such that $\Gamma(\rho)>0$ for all $\rho\in(\bar\rho,1)$.

\emph{Step 3: Trust regime and conclusion.}
For any fixed $\rho$, the first-stop function is
\[
        h_\rho(x_1)
        =
        \begin{cases}
        -\omega_1(\rho)x_1+x_b+\omega_1(\rho)\alpha_1(\rho)
        -(1-\omega_1(\rho))m_0-r_1(\rho),
        & x_1\le x_b,\\[3pt]
        (1-\omega_1(\rho))x_1+\omega_1(\rho)\alpha_1(\rho)
        -(1-\omega_1(\rho))m_0-r_1(\rho),
        & x_1>x_b.
        \end{cases}
\]
The left branch is decreasing in $x_1$ and the right branch is increasing in $x_1$, so $\min_{x_1}h_\rho=h_\rho(x_b)=\Gamma(\rho)$. Therefore $\Gamma(\rho)>0$ implies $h_\rho(x_1)>0$ for every $x_1\in\R$, which by Corollary~\ref{cor:tau-one} is the trust regime: the user stops immediately after the first inspection. Hence $\Pr_\rho(T_\rho=1)=1$ for all $\rho\in(\bar\rho,1)$, and the survival probability to any lower rank is zero.$\hfill\qed$

\subsection{Proof of Proposition~\ref{prop:tau-recursion}}

\emph{Stopping mass \eqref{eq:tau-mass}.}
The optimal policy stops at epoch $t$ if and only if the user has survived to $t$ \emph{and} the threshold criterion fires there. Formally, $\{T = t\} = \{T \ge t\} \cap \{L_t \ge \alpha_{t+1} + \kappa_t^\star\}$. By the definition of $\nu_t$ as the law of $L_t$ restricted to $\{T \ge t\}$,
\[
\Pr(T = t)
= \Pr\bigl\{L_t \ge \alpha_{t+1} + \kappa_t^\star, T \ge t\bigr\}
= \int_{[\alpha_{t+1} + \kappa_t^\star, \infty)} \nu_t(dl),
\]
which is \eqref{eq:tau-mass}.

\emph{Push-forward \eqref{eq:tau-pushforward}.}
Survival to epoch $t+1$ means the user survived to $t$ and did \emph{not} stop there: $\{T \ge t+1\} = \{T \ge t\} \cap \{L_t < \alpha_{t+1} + \kappa_t^\star\}$. For any Borel set $A \subseteq \R$, unpacking the definition of $\nu_{t+1}$ gives
\begin{equation}\label{eq:pf-expand}
\nu_{t+1}(A)
= \Pr\bigl\{L_{t+1} \in A, T \ge t+1\bigr\}
= \Pr\bigl\{L_{t+1} \in A, T \ge t, L_t < \alpha_{t+1} + \kappa_t^\star\bigr\}. 
\end{equation}
The indicators $\indic\{T \ge t\}$ and $\indic\{L_t < \alpha_{t+1} + \kappa_t^\star\}$ are both $H_t$-measurable (the first because $T$ is a stopping time, the second because $L_t$ and $\kappa_t^\star$ are determined by $H_t$). By the Markov property of $\{L_t\}$ established in Lemma~\ref{lem:lead-markov}, $\Pr\{L_{t+1} \in A \mid H_t\} = K_t^L(L_t, A)$. Applying the law of iterated expectations $\E[\cdot] = \E[\E[\cdot \mid H_t]]$ to \eqref{eq:pf-expand}:
\begin{align*}
\nu_{t+1}(A)
&= \E\Bigl[\indic\{T \ge t\}\indic\{L_t < \alpha_{t+1} + \kappa_t^\star\}\E\bigl[\indic\{L_{t+1} \in A\} \bigm| H_t\bigr]\Bigr] \\
&= \E\Bigl[\indic\{T \ge t\}\indic\{L_t < \alpha_{t+1} + \kappa_t^\star\}K_t^L(L_t, A)\Bigr] \\
&= \int_{(-\infty, \alpha_{t+1} + \kappa_t^\star)} K_t^L(l, A)\nu_t(dl),
\end{align*}
where the last line rewrites the expectation as an integral against $\nu_t$ (the law of $L_t$ weighted by $\indic\{T \ge t\}$). This is \eqref{eq:tau-pushforward}.

\emph{Termination and total mass.}
The convention $\alpha_{N+1} + \kappa_N^\star := -\infty$ makes the stopping region at the horizon equal to all of $\R$, so $\Pr(T = N) = \nu_N(\R)$. It remains to verify that the masses sum to one. Observe that $\nu_t(\R) = \Pr(T \ge t)$. For each $t \le N - 1$,
\[
\Pr(T \ge t)
= \underbrace{\Pr(T = t)}_{\text{stop at }t}
+ \underbrace{\Pr(T \ge t+1)}_{\text{survive to }t+1}
= \nu_t\bigl([\alpha_{t+1} + \kappa_t^\star, \infty)\bigr) + \nu_{t+1}(\R),
\]
so summing $\Pr(T = t)$ for $t = 0, \ldots, N - 1$ telescopes:
\[
\sum_{t=0}^{N-1} \Pr(T = t)
= \nu_0(\R) - \nu_N(\R)
= \Pr(T \ge 0) - \Pr(T = N)
= 1 - \Pr(T = N).
\]
Adding $\Pr(T = N)$ to both sides gives $\sum_{t=0}^{N}\Pr(T = t) = 1$. $\hfill\qed$

\end{document}